\newcommand \bl{}
\newcommand \beq{\begin{eqnarray}}
\newcommand \eeq{\end{eqnarray}}
\newcommand \bit{\begin{itemize}}
\newcommand \eit{\end{itemize}}
\newcommand \nn{\nonumber}
\newcommand{\beginsupplement}{%
        \setcounter{table}{0}
        \renewcommand{\thetable}{S\arabic{table}}%
        \setcounter{figure}{0}
        \renewcommand{\thefigure}{S\arabic{figure}}%
        \setcounter{section}{0}
        \renewcommand{\thesection}{S\arabic{section}}%
        \setcounter{equation}{0}
        \renewcommand{\theequation}{S.\arabic{equation}}%
     }
\newtheorem{theorem}{Theorem}
\newtheorem{theo}{Instability Criterion}
\newmdtheoremenv{cor}{Corollary}
\newtheorem{definition}[theorem]{Definition}
\newtheorem{proposition}[theorem]{Proposition}
\newtheorem{rem}[theorem]{Remark}
\newenvironment{proof}[1][Proof]{\noindent\textbf{#1.} }{\ \rule{0.5em}{0.5em}}
\newcommand{\blue}{}
\DeclareMathAlphabet{\mathcal}{OMS}{cmsy}{m}{n}
\begin{document}

\title{From One Pattern into Another: Analysis of Turing Patterns in Heterogeneous Domains via WKBJ}
%Pattern Formation in Heterogeneous Media: WKBJ Meets Turing
%From One Pattern into Another: Pattern Formation in Heterogeneous Domains (VK: I would choose this one)
%Pattern Formation with Ambient Heterogeneity
\author{Andrew L. Krause$^{1,}$\footnote{krause@maths.ox.ac.uk}$^,$\footnote{These authors contributed equally to this work.}}
\author{V{\'a}clav Klika$^{2,}$\footnotemark[\value{footnote}]}
\author{Thomas E. Woolley$^{3}$}
\author{Eamonn A. Gaffney$^{1}$}

\affiliation{$^{1}$Wolfson Centre for Mathematical Biology, Mathematical Institute, University of Oxford, Andrew Wiles Building, Radcliffe Observatory Quarter, Woodstock Road, Oxford, OX2 6GG, United Kingdom}
\affiliation{$^{2}$Department of Mathematics, FNSPE, Czech Technical University in Prague, Trojanova 13, 120 00 Praha, Czech Republic}
\affiliation{$^{3}$Cardiff School of Mathematics, Cardiff University, Senghennydd Road, Cardiff, CF24 4AG, United Kingdom}
\begin{abstract}
Pattern formation from homogeneity is well-studied, but less is known concerning symmetry-breaking instabilities in heterogeneous media. It is nontrivial to separate observed spatial patterning due to inherent spatial heterogeneity from emergent patterning due to nonlinear instability. We employ WKBJ asymptotics to investigate Turing instabilities for a spatially heterogeneous reaction-diffusion system, and derive conditions for instability which are local versions of the classical Turing conditions We find that the structure of unstable modes differs substantially from the typical trigonometric functions seen in the spatially homogeneous setting. Modes of different growth rates are localized to different spatial regions. This localization helps explain common amplitude modulations observed in simulations of Turing systems in heterogeneous settings. We numerically demonstrate this theory, giving an illustrative example of the emergent instabilities and the striking complexity arising from spatially heterogeneous reaction-diffusion systems. Our results give insight both into systems driven by exogenous heterogeneity, as well as successive pattern forming processes, noting that most scenarios in biology do not involve symmetry breaking from homogeneity, but instead consist of sequential evolutions of heterogeneous states. The instability mechanism reported here precisely captures such evolution, and extends Turing's original thesis to a far wider and more realistic class of systems.\end{abstract}

\maketitle

\section{Introduction}
Since Alan Turing's celebrated work on morphogenesis \cite{turing1952chemical}, reaction-diffusion systems have been a paradigm of pattern formation throughout chemistry and biology \cite{de1991turing, cross1993pattern, kondo2010reaction, murray2004mathematical, green2015positional,woolley2014visions}. The most striking aspect of this theory is the emergence of heterogeneity from homogeneity. However, even Turing himself recognized this as an idealization when he wrote, ``Most of an organism, most of the time  is developing from one pattern into another, rather than from homogeneity into a pattern." Here, we concern ourselves with this heterogeneous setting, and determine the generalization of the Turing conditions to a reaction-diffusion system with explicit spatial dependence. We derive conditions for the instability of a heterogeneous steady state into a Turing-type pattern, with both the localization and structure of the pattern depending on the heterogeneity. Under a necessary hypothesis of a sufficiently slowly varying heterogeneous base state, our results clearly differentiate between spatial structure due to inherent spatial heterogeneity, and emergent patterns due to Turing-type instabilities. This then elucidates successive pattern formation in distinct stages.

This transition from one pattern into another has been noted as key in reconciling seemingly-divergent theories in morphogenesis \cite{green2015positional}. Turing's original theory was that his reaction-diffusion mechanism laid down a prepattern of heterogeneous morphogen concentration, which then drove cellular differentiation and morphogenesis directly (Fig.~\ref{RDvsPI}(a)-(c)). This is in contrast to theories of positional information (colloquially ``French-flag'' models) whereby cells \emph{a priori} are assigned locations relative to some developmental coordinate system, and perform different functions based on this positional information \cite{wolpert2016positional} (Fig.~\ref{RDvsPI}(d)). Spatial heterogeneity provides a way to reconcile these competing theories by allowing positional information to influence reaction-diffusion processes, leading to modulated patterns which are ubiquitous in nature (Fig.~\ref{RDvsPI}(d)-(f)). Additionally, heterogeneity permits successive reaction-diffusion patterning in stages, whereby patterning at different scales can arise (Fig.~\ref{RDvsPI}(e)-(g)). This is in line with work implicating chemical and cellular pre-patterns in developmental biology \cite{holloway1995reaction, warmflash2014method,Weber}, such as in the context of organising different regions along cell boundaries based on sharp variations in gene expression \cite{meinhardt1983cell, irvine2001boundaries}. %\textcolor{blue}{ALK: Further biological details could be added here; this is just a crude attempt at summarizing the two ways that heterogeneity solves problems in developmental biology..}

\begin{figure}
    \centering
    \includegraphics[width=0.9\textwidth]{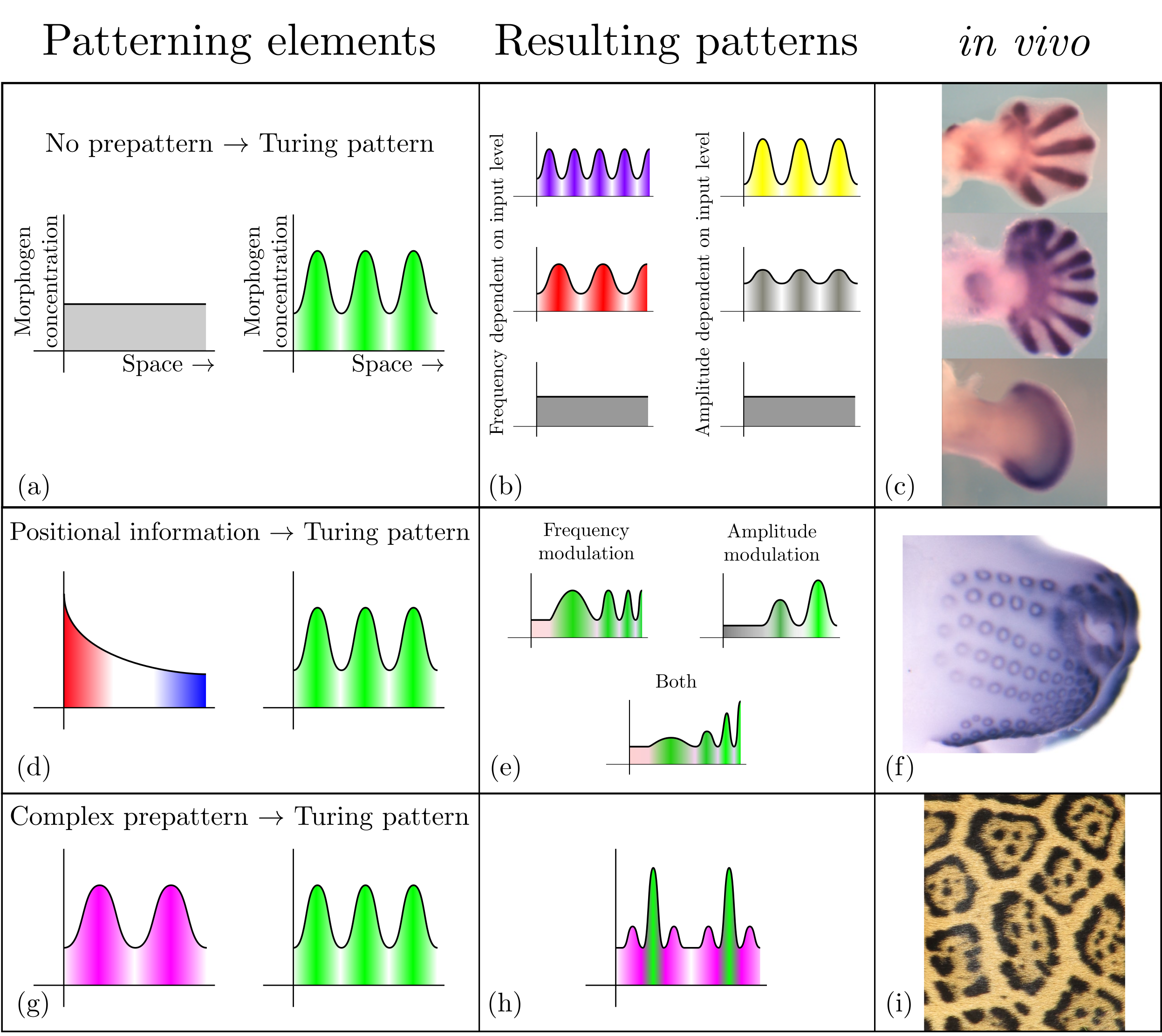}
    \caption{Different interactions of pattern formation mechanisms in development. (a) is a generic schematic of Turing pattern formation from homogeneity, with different pattern characteristics shown in (b), and, in (c), a biological example of a developing mouse paw in the presence of altered levels of Hox gene action. Positional information feeding into reaction-diffusion is shown in (d)-(e), consistent with observed structural characteristics of mouse whisker placodes in (f). Finally, successive reaction-diffusion patterning is shown in (g)-(h), with the example of Jaguar spots demonstrating large and small-scale pattern formation in (i). {\blue   In particular,  the schematic in (g) shows a sinusoidal prepattern (left peaks) feeding into a wave mode 3 Turing pattern (right peaks)   with, here for illustrative example,  the Turing pattern only able to   form  within the peaks of the prepattern. Thus, each peak forms a disjoint interval}. Mouse paw images from \citep{Sheth1476} R. Sheth, L. Marcon, M. F. Bastida, M. Junco, L. Quintana, R. Dahn, M. Kmita, J. Sharpe, M. A. Ros. Hox genes regulate digit patterning by controlling the wavelength of a Turing-type mechanism. Science, 338(6113):1476–1480, 2012. Reprinted with permission from AAAS. Mouse whisker placode image used with permission from Denis Headon. Jaguar picture by Jean Beaufort used under a CC0 Public Domain license from http://bit.ly/JaguarPicture.}
    \label{RDvsPI}
\end{figure}

Beyond theories of morphogenesis in developmental biology, models involving reaction-diffusion systems with spatial heterogeneity have been considered in many contexts. Examples include environmental heterogeneity in collective animal dispersal \cite{pickett1995landscape, clobert2009informed, cobbold2015diffusion, bassett2017continuous, kurowski2017two}, reaction-diffusion in domains with non-isotropic growth \cite{crampin2002pattern, krause2019influence}, as well as spatial invasion modelling \cite{sun2016pattern,belmonte2013modelling}, and models with differential diffusion leading to spatial inhomogeneity in plant root initiation \cite{brena2015stripe, avitabile2017spot}.  Spatial heterogeneity has been (numerically) observed to change local instability conditions for pattern formation \cite{benson1993diffusion, page2003pattern}, modulate size and wavelength of patterns \cite{page2005complex}, and localize (or pin) spike patterns in space \cite{iron2001spike, ward2002dynamics, wei2017stable}. We also note that the presence of even simple spatial heterogeneity can induce spatiotemporal behavior, such as changing the stability of patterned states and thereby inducing periodic movement of spike solutions \cite{krause2018heterogeneity, kolokolnikov2018pattern}. Bifurcation structures of reaction-diffusion equations with spatial heterogeneity have been considered for some time \cite{auchmuty1975bifurcation}. There is also a large literature on reaction-diffusion systems with strongly localized heterogeneities \cite{doelman2018pulse}, with \cite{stephetero} recently considering the case of a step-function heterogeneity in the reaction kinetics and deducing local Turing conditions on each side of the step. While we will also deduce local Turing conditions, we note that this limit is different from the case of smooth spatial heterogeneity we will consider here. 

Many experimental applications of reaction-diffusion systems have exploited an intuitive idea that a patterning instability is possible depending on the local environment, and, hence, one can think of \emph{local} pointwise Turing conditions in order to determine where patterning will occur \cite{lengyel1991modeling,epstein1996nonlinear,rudiger2003dynamics,miguez2005turing,rudiger2007theory}. This research has also given rise to various multiscale approaches for analysis of mode coupling between spatial forcing and emergent Turing patterns \cite{yang2002spatial,peter2005stripe,haim2015non}.  However, as far as we are aware, no justification for this localization, or the use of canonical (trigonometric) unstable eigenmodes, has been given in the literature. Several authors have attempted to deduce Turing conditions in spatially forced reaction-diffusion equations \cite{dewel1989effects, kuske1997pattern, otsuji2007mass}, but these results are limited to special cases regarding asymptotic assumptions and nonlinear kinetics, and even the case of varying diffusion coefficients is not perfectly understood \cite{mendez2010reaction}. We note that Dewel and Borckmans \cite{dewel1989effects} in particular analyse the case of slowly varying heterogeneity and employ a WBKJ-like ansatz, as we do below. However, their approach is substantially different from our own as they neglect the finite size of the domain, and do not recover the local Turing conditions that we seek, or the form of unstable modes.

Turing instabilities leading to pattern formation are typically considered to be induced due to the addition of diffusion (diffusion-driven instability) \cite{murray2004mathematical} and due to an increase in the domain-size \cite{klika2018domain}; below a certain critical domain size, patterns cannot be formed but above a minimal size, any small spatial perturbation of a reference homogeneous steady state will grow. The classical case focusing on spatially homogeneous systems is a textbook analysis and typically proceeds via a dispersion relation tying the Laplacian eigenmodes with the perturbation's growth rate \cite{murray2004mathematical,klika2017significance,maini2012turing,woolley2017turing}. However, as we shall show, justifying such a relationship between the growth rate and the operator's spectrum is much harder in the case of arbitrary spatial heterogeneity.

A major difficulty in analyzing instabilities in systems with spatial heterogeneity is that there is no simple generalization of Sturm--Liouville theory to multiple-component systems \cite{klika2018domain}. One can make use of the scalar theory when the heterogeneity appears in the same way in each component and is scaled such that the spatial operator, including diffusion, is identical in each equation. However, more generally, such a theory is difficult to use and, at best, one finds existence results, or must resort to numerical approaches \cite{dwyer1995eigenvalue, malamud2005uniqueness}. On the other hand, the WKBJ approximation has been employed in many optical and semi-classical quantum mechanical situations involving spatial heterogeneity \cite{bremmer1951wkb, alder1969quantal, griffiths2018introduction}, and, as we will demonstrate, has a straightforward generalization to coupled systems.

Here, we use WKBJ methods \cite{bender2013advanced} in order to compute instability criteria for a reaction-diffusion system with explicit spatial heterogeneity in the kinetics, under the assumption that the heterogeneity is sufficiently smooth and not rapidly varying compared with the diffusive length scales. Our analysis also shows several novel aspects of these instabilities in the presence of heterogeneity, such as modes supported in different regions of the domain depending on their growth rates. This phenomenon invalidates some heuristics commonly employed in homogeneous Turing pattern formation, such as restricting analysis to the mode with the fastest growth rate, which in the heterogeneous case varies across the domain. These structural results can help explain size and wavelength modulation in the presence of heterogeneity observed both in simulations and heterogeneous environments in experiment.

We begin by setting up the system and reviewing conditions for a Turing instability in the homogeneous case, and stating the corresponding conditions in the spatially heterogeneous setting in Section \ref{Overview}. This Section is a roadmap of our results and is intended to state the conditions without detailed derivation. Such a derivation is presented in Section \ref{derivation}, with the classical results in the homogeneous case in SI Section \ref{App.RederivationOfClassDDI}. We end this section with a discussion of properties of these solutions, and how their form implies the instability conditions, with some technical details in SI Section \ref{Sec.PropsOfWKB}.  In Section \ref{Illustration} we illustrate our results in the case of the Schnakenberg system, demonstrating both that our conditions for instability correspond to full numerical solutions, as well as showing various structural properties regarding the emergent unstable modes in line with our analysis. Finally, we discuss our results in Section \ref{Discussion}, highlighting both applications of our method and future directions for extensions. Someone interested primarily in our results, rather than the technicalities of the WKBJ calculations, can skip Section \ref{derivation}, and instead just read Sections \ref{Overview} and \ref{Illustration}-\ref{Discussion} to understand the implications of our results, as well as how to apply them to different systems.

\section{Homogeneous and Inhomogeneous Instability Conditions}\label{Overview}
Here, we state instability conditions for both homogeneous and heterogeneous two component reaction-diffusion systems which lead to emergent spatial patterning. In the heterogeneous setting we exploit asymptotically small diffusion coefficients, and so pose the general problem first. We consider a dimensional two component system in one spatial dimension,
$$ {\bf u}_t = {\bf D }_{dim} {\bf u}_{xx} + {\bf F}_{dim}({\bf u},x), ~~~ t>0, ~~~ x \in (0,L), ~~~ \nn {\bf D}_{dim}  = \left( \begin{array}{cc} D_1 & 0 \\ 0 & D_2 \end{array} \right), $$
where $D_1 > D_2 > 0$ are the diffusion coefficients and $L$ is the domain length. We prescribe Neumann boundary conditions ($ {\bf u }_x = 0 $ for $x\in\{0,L\}$) and the initial condition $ {\bf u }(x,0) = {\bf u }_0(x). $ We non-dimensionalise length scales with respect to $L$, time-scales with respect to a reaction time-scale $T$, and concentrations with respect to a typical concentration scale $U$ for both components and diffusion coefficients by $D_1$. Reusing ${\bf u}$, $t$ and $x$ to now represent non-dimensional quantities for brevity we have 
\beq \label{RDE}
{\bf u}_t = \epsilon^2{\bf D } {\bf u}_{xx} + {\bf F}({\bf u},x), ~~~ t>0, ~~~ x \in [0,1], ~~~~ \epsilon^2 = \frac{D_1T}{L^2},  ~~~~
{\bf D} = \left( \begin{array}{cc} 1 & 0 \\ 0 & d \end{array} \right) , 
~~~d = D_2/D_1 \leq  1,
\eeq
where ${\bf F}({\bf u},x) = (f({\bf u},x),g({\bf u},x)) $ is now a non-dimensional vector of kinetic functions. Below we assume $0< \epsilon \ll 1. $ This asymptotic assumption is not physiologically unreasonable in developmental settings. {\bl Consider kinetic timescales of $T\sim 10$ minutes, the shortest that would allow for gene expression \cite{tennyson1995human, singh2009rates}, a domain length of $L\sim 1$mm and a diffusion coefficient of $D_1 \sim 9.4\times 10^{-9}$cm$^2$s$^{-1}$, as measured in a synthetic biology experiment for the protein  Lefty, 
the most mobile of the prospective 
developmental Turing morphogen pair Nodal and Lefty, as part of an   investigation analysing Turing's mechanism in depth \cite{Sekine}. 
One then has $\epsilon^2 \sim 5.6 \times 10^{-4}$ and hence $\epsilon \sim 0.024$. }

Let ${\bf u}^*(x)$ denote a steady 
state, so that 
$$
 {\bf 0} = \epsilon^2 {\bf D } {\bf u}^*_{xx} + {\bf F}({\bf u}^*,x), ~~~ x \in [0,1], $$
with boundary conditions $ {\bf u }^*_x = 0$. To generalize the notion of a homogeneous steady state, we only consider the possibility that ${\bf u}^*$ oscillates with spatial derivatives of scale $O(1)$, or smaller, specifically excluding spatial oscillations on the scale of $O(1/\epsilon)$, or larger. Hence, 
${\bf u}^*$ is independent of $\epsilon$ and we have 
\begin{equation} \label{hetss} {\bf 0} =  {\bf F}({\bf u}^*,x) + O(\epsilon^2),\end{equation}
as long as the spatial heterogeneity in ${\bf F}$ permits $ {\bf u }^*_x = 0$ at $x=0,1$. If instead $ {\bf u }^*_x \neq 0$ at either boundary then a boundary layer with concomitant large derivatives will form, a possibility which we neglect in the subsequent analysis. {\bl However,  note that when the underlying heterogeneity is the result of prior patterning, as is the motivating background here, the steady   state ${\bf u}^*$  will automatically will be subject to the same zero flux  boundary conditions unless the boundary fluxes change as one patterning mechanism progresses into the next. However, zero flux boundary conditions are ubiquitous in models of biological pattern formation \cite{murray2004mathematical} and hence requiring $ {\bf u }^*_x = 0$ at $x=0,1$ does not constitute a particularly demanding constraint, at least in the context of hierarchical self-organisation in developmental biology.}

 %(numerically it is observed that such boundary effects have a width of $O(\epsilon)$ but that they can exhibit a Runge-phenomenon like jump in height).

%In developmental settings, the heterogeneous steady state given by Equation \eqref{hetss} can come about from one of at least two distinct physical mechanisms. Firstly, exogenous sources of morphogen or variations in developing tissue can lead to gradients of background production, influencing the local reaction kinetics across a domain. Secondly, a previous pattern-forming mechanism could have led to a broad wavelength pattern, and successive patterning on this heterogeneous domain could then occur via a Turing-type mechanism with a background heterogeneity (see Figure \ref{RDvsPI}). In the latter case, either the kinetic timescales between the successive reactions are separated, or the subsequent patterning mechanism involves cells (or other structures) with different diffusion coefficients from the chemical prepattern.

{\bl We proceed by linearising about this steady state via ${\bf w}={\bf u}-{\bf u}^*$, which is assumed small component-wise even relative to the scale of $\epsilon$, to yield}
\beq \label{weqn1}
{\bf w}_t =  \epsilon^2 {\bf D } {\bf w}_{xx} + {\bf J}(x){\bf w},
\eeq
where ${\bf J}(x)$ is the Jacobian matrix of ${\bf F}$ evaluated at ${\bf u}^*(x)$. System \eqref{weqn1} inherits homogeneous Neumann boundary conditions and the initial condition ${\bf w}(x,0)= {\bf u}(x,0)-{\bf u}^*(x)$. The fundamental impact of spatial heterogeneity in the  kinetics is that the Jacobian ${\bf J}$ possesses an explicit spatial dependence (and formally an additional $O(\epsilon^2)$ dependence, though we can neglect this via the asymptotic analysis going forward). The standard derivation in the homogeneous setting proceeds by assuming the ansatz ${\bf w} \propto e^{\lambda t}{\bf q}(x)$, justified by linearity. One then uses eigenvalues of the Laplacian to find $\lambda(n)$, where $n$ is a spectral parameter, resulting in conditions which imply $\Re(\lambda) > 0$ and, hence, instability. 

This approach does not generalize to the heterogeneous setting due to the explicit spatial dependence of ${\bf J}$, and so instead we think of varying $\lambda$ as a parameter and searching for eigenvalues consistent with the form of the solution when $\Re(\lambda)>0$. We first state a reformulation of the classical homogeneous conditions before generalizing to the heterogeneous case. We give a detailed rederivation in the homogeneous case in SI Section \ref{App.RederivationOfClassDDI}, arriving at the following formulation of the Turing conditions:

\begin{theo}[Homogeneous]\label{homogturing} Let $0<\epsilon \ll 1$, and ${\bf J}$ a constant matrix for all $x \in [0,1]$. If we assume stability to homogeneous perturbations, i.e.,
\beq \label{stab0} {\rm tr}( {\bf J}) < 0, ~~~~ {\rm det}({\bf J}) > 0,
\eeq
then there exists a non-homogeneous perturbation ${\bf w}$ satisfying \eqref{weqn1} which grows exponentially in time in the interval $x \in [0,1]$  if
\beq \label{instabk}  {\rm tr}( {\bf D}^{-1}{\bf J}) > 0, ~~~~ [{\rm tr}( {\bf D}^{-1}{\bf J})]^2-4 {\rm det}({\bf D}^{-1}{\bf J}) > 0.
\eeq 
\end{theo}

We find an analogous result in the spatially heterogeneous setting involving a much more complicated form of unstable modes explicitly depending on the growth rate $\lambda$, so that, to leading order, we have unstable solutions of the form  ${\bf w} \propto e^{\lambda t}{\bf q}(x,\lambda)$. Additionally, for different growth rates $\lambda$, the instability may be restricted to different subsets of the spatial domain (asymptotically at leading order). We will denote the largest of these regions, within which we anticipate patterns to be confined, as $\mathcal{T}_0$, which can consist of multiple disjoint intervals (as in Fig.~\ref{RDvsPI}(h)). We denote the interior of this region as $\mathcal{T}_0^i$. Conditions for instability in the heterogeneous case then follow from Criterion \ref{heteroturing},  and Proposition \ref{c1}, which are stated and derived in the next section. These conditions can be stated as:

\begin{theo}[Heterogeneous]\label{hettheo} Let $ 0<\epsilon \ll 1$, and assume that the quantity $[{\rm tr}( {\bf D}^{-1}{\bf J}(x))]^2-4 {\rm det}({\bf D}^{-1}{\bf J}(x))$ has only simple zeros for $x \in [0,1]$. If we assume stability to perturbations in the absence of diffusion, i.e.,
\beq {\rm tr}( {\bf J}(x)) < 0, ~~ {\rm det}({\bf J}(x)) > 0, ~~~ \text{for all }x \in [0,1],
\eeq
then there exists a non-homogeneous perturbation ${\bf w}$ satisfying \eqref{weqn1} {\bl (for sufficiently small $\epsilon$ and} to leading order in $\epsilon$) which grows exponentially in time for all $x \in \mathcal{T}_0^i$  if
\beq \label{tchet0}  {\rm tr}( {\bf D}^{-1}{\bf J}(x)) > 0, ~~ [{\rm tr}( {\bf D}^{-1}{\bf J}(x))]^2-4 {\rm det}({\bf D}^{-1}{\bf J}(x)) > 0, ~~~ \text{for all }x \in \mathcal{T}_0^i, \eeq
where $\mathcal{T}_0^i$ is the largest set for which conditions \eqref{tchet0} hold.
\end{theo}
 More generally, the conditions of Criterion \ref{hettheo} are
 exactly a local version of the homogeneous results, so that the same conditions satisfied on a subset of the full spatial domain imply a pattern forming instability on that subset. Both homogeneous and heterogeneous conditions hold for sufficiently small $\epsilon$, which can be thought of as a sufficiently large spatial domain. In this case, one can neglect the discrete wave mode selection, though we do give discrete dispersion relations in SI Section \ref{App.RederivationOfClassDDI} for the homogeneous case, and Criterion \ref{heteroturing} for the heterogeneous case. These discrete conditions give concrete ways to determine precisely which modes become unstable, and their associated growth rates, for a fixed value of $\epsilon$. In the next section we will describe how to derive Criterion \ref{hettheo} and these results mentioned above, and also further structural details about such instabilities which emerge from the form of unstable modes. One can skip these details on a first read and see an illustration of the results in Section \ref{Illustration}.

\section{Deriving the Spatially Inhomogeneous Conditions}\label{derivation}
We start our analysis of the heterogeneous Turing instability by analysing the stability of the steady state in line with the usual Turing instability analysis.  %The steady state is given by the root ${\bf u}^{*}(x)$ of ${\bf F}({\bf u}^{*},x)={\bf 0}$, accurate up to $O(\epsilon^2)$ corrections, and
%assumed to be twice-differentiable in $x$. A homogeneous perturbation satisfies 
%\beq \label{weqn11}
%{\bf w}_t = {\bf J}(x){\bf w}  + \epsilon^2 {\bf D } {\bf w}_{xx}  = {\bf J}(x){\bf w}  + O( \epsilon^2).
%\eeq
We seek solutions to Equation \eqref{weqn1} in the form ${\bf w}(x,t) = e^{\lambda t} {\bf q}(x)$ (as linearity permits separability in $t$ and $x$) to find
\beq \label{initialproblem}
{\bf 0}  = \epsilon^2 {\bf D }  {\bf q}_{xx} + {\bf J}_\lambda(x){\bf q}, 
\eeq
with ${\bf J}_\lambda = {\bf J} - \lambda {\bf I}$. We then proceed in direct analogy to the scalar  WKBJ expansion \cite{griffiths2018introduction, bender2013advanced}, with  
$$ {\bf q} = \exp\left[\frac{i\varphi(x)}\epsilon\right]{\bf p}(x), ~~~~~~~~ {\bf p}(x) = {\bf p}_0(x) + \epsilon {\bf p}_1(x) +\epsilon^2 {\bf p}_1(x)  + \ldots~~~. $$ 
Thus, with $'$ denoting the ordinary derivative with respect to $x$,
\beq {\bf q}_{xx} &=& \nonumber 
 \exp\left[\frac{i\varphi}\epsilon\right]\left[ 
\left( \frac{i\varphi'}\epsilon {\bf p} + {\bf p}'\right)' +  
\frac{i\varphi'}\epsilon\left( \frac{i\varphi'}\epsilon {\bf p} + {\bf p}'\right)  
\right] 
=
 \exp\left[\frac{i\varphi}\epsilon\right]\left[ 
-\frac{\varphi'^2}{\epsilon^2} {\bf p} + 
\frac 1 \epsilon\left(2i \varphi'{\bf p}'+ i \varphi'' {\bf p}\right) 
+ {\bf p}''
\right],
\eeq
and, hence,
$$ {\bf 0} = [-\varphi'^{2} {\bf D} +{\bf J}_\lambda] {\bf p} +
\epsilon \left[ 2i\varphi' {\bf D}{\bf p}' + i \varphi''{\bf D}{\bf p}\right] 
+O(\epsilon^2),%\epsilon^2 {\bf D}{\bf p}'',
$$
where the $O(\epsilon^2)$ terms from ${\bf J}$ were neglected, as we will not need to consider the second order problem below.
At leading order in $\epsilon$ we have
$$ {\bf 0} = [-\varphi'^{2} {\bf D} +{\bf J}_\lambda] {\bf p}_0 = 
{\bf D} \left\{[-\varphi'^{2} {\bf I} +{{\bf B}_\lambda}] {\bf p}_0 
\right\},
$$ 
where we define the matrix ${\bf B}_\lambda={\bf D}^{-1}{\bf J}_\lambda = {\bf D}^{-1}({\bf J}-\lambda{\bf I})$ and at next to leading order 
$$ {\bf 0} = [-\varphi'^{2} {\bf D} +{\bf J}_\lambda] {\bf p}_1 + 
\left[ 2i\varphi' {\bf D}{\bf p}'_0 + i \varphi''{\bf D}{\bf p}_0\right] = 
{\bf D} \left\{[-\varphi'^{2} {\bf I} +{\bf B}_\lambda] {\bf p}_1 + 
\left[ 2i\varphi' {\bf p}'_0 + i \varphi''{\bf p}_0\right]
\right\} . 
$$ 
 
We solve the leading order equations by setting $\varphi '^2$ equal to an eigenvalue
of ${\bf B}_\lambda$ and set ${\bf p}_0 = Q_0(x){\bf p}_*(x)$ where ${\bf p}_*(x)$ is the unit magnitude eigenvector of
$[-\varphi'^{2} {\bf I}  +{\bf B}_\lambda]$
with zero eigenvalue and $Q_0(x)$ is an undetermined scalar function. Then 
$$ - [-\varphi'^{2} {\bf I}  +{\bf B}_\lambda] {\bf p}_1 = \left[ 2i\varphi' {\bf p}'_0 + i \varphi''{\bf p}_0\right] = 
i\left[ 2\varphi'Q'_0 +  \varphi''Q_0\right]{\bf p}_* 
+ 2i\varphi'Q_0{\bf p}'_*.
$$
The matrix premultiplying ${\bf p}_1$ has zero determinant and hence the existence of a solution requires a solvability condition.  

Let ${\bf s}^T_*(x)$ be the zero left eigenvector of unit magnitude of $[-\varphi'^{2} {\bf I}  +{\bf B}_\lambda]$. Then we have the solvability condition
${\bf s}^T_*(x)[-\varphi'^{2} {\bf I}  +{\bf B}_\lambda]={\bf 0}$
by Fredholm's alternative, and thus multiplying by ${\bf s}^T_*(x)$ we have
\beq \label{solv}
\left[ 2\varphi'Q'_0 +  \varphi''Q_0\right]{\bf s_*}^T {\bf p}_* 
+  2\varphi'Q_0{\bf s_*}^T{\bf p}'_*=0,
\eeq 
which yields  
$$ \frac{Q_0'}{Q_0} = -\frac{\varphi''}{2\varphi'}- \frac{{\bf s_*}^T{\bf p}_*'}{{\bf s_*}^T{\bf p_*}}. $$
Thus
\begin{equation}\label{Q0} Q_0(x) = \frac{Q_{00}}{\sqrt{\varphi'}} \exp \left[-\int_a^x \frac{{\bf s_*}(\bar{x})^T{\bf p}'_*(\bar{x})}{{\bf s_*}(\bar{x})^T{\bf p}_*(\bar{x})} \mathrm{d}\bar{x} \right],\end{equation}
where $Q_{00}$ is a constant, not necessarily real, $a$ is a real constant and $\varphi'$ is given by 
either the positive  or the  negative square root of the eigenvalues of ${\bf B}_\lambda$, and $a$ is an arbitrary real constant before any constraints of considering real solutions  and the boundary conditions are imposed. Hence, for each eigenvalue of ${\bf B}_\lambda(x)$, denoted $$\mu^\pm_\lambda(x)\equiv \varphi'^2(x),$$ we have a possible  mode which, at leading order, can be written as
\begin{equation}
  \label{eq:WKBsolu}
{\bf w}(x,t) = e^{\lambda t} \exp \left[-\int_a^x \frac{{\bf s_*}(\bar{x})^T{\bf p}'_*(\bar{x})}{{\bf s_*}(\bar{x})^T{\bf p}_*(\bar{x})} \mathrm{d}\bar{x} \right]
 \frac{1}{[\mu^\pm_\lambda]^{1/4}(x)} \bigg\{C_{0}^{{ \pm}} \cos\left[\frac 1\epsilon  \int_a^x \sqrt{\mu^\pm_\lambda(\bar{x})}\mathrm{d}\bar{x}\right]\\ + S_{0}^{{ \pm}} 
\sin\left[\frac 1\epsilon  \int_a^x \sqrt{\mu^\pm_\lambda(\bar{x})}\mathrm{d}\bar{x}\right]
\bigg\} {\bf p}_*(x),  
\end{equation}
where $C_{0}^{{\pm}},S_{0}^{{\pm}}$ are arbitrary constants. We note that the reciprocal of $\epsilon$ from the trigonometric functions will dominate in spatial derivatives given our asymptotic assumptions.

In the usual Turing analysis, we assume that the steady state is stable to homogeneous perturbations, which are associated with the zero mode. 
{\bl By analogy, 
we therefore seek stability to perturbations which 
 have spatial  derivatives of, at most,  the same O(1) scale as the steady state, in contrast to the WKBJ modes in Eqn. \ref{eq:WKBsolu}, where spatial  derivatives scale with $1/\epsilon$, due to the $1/\epsilon$ factors in the trigonometric contributions. Solutions with the required behaviour are found by setting $\varphi=0$ in the WKBJ expansion or, equivalently by considering
${\bf w}(x,t) = e^{\lambda t} {\bf q}(x)$ 
with a {\it regular} perturbation expansion for ${\bf q}(x)$. One finds ${\bf J}_\lambda(x) {\bf q}(x)={\bf 0}$ at leading order 
and hence the derivatives of ${\bf q}(x)$ are independent of $\epsilon$ and thus order unity at this level of approximation. Hence, for such solutions, substituting ${\bf w}(x,t) = e^{\lambda t} {\bf q}(x)$ 
into Eqn. \ref{weqn1} reduces to}
\beq
{\bf w}_t =  \epsilon^2 {\bf D } {\bf w}_{xx} + {\bf J}(x){\bf w} \approx {\bf J}(x){\bf w},
\eeq
since  $|{\bf w}_{xx}| \sim O(1)$ for this kind of perturbation. Stability of the equilibrium to such perturbations is required for all $x$ and thus, to asymptotic accuracy, we require
\beq \label{tc1i} 
\mbox{tr}({\bf J}(x)) = f_u+g_v < 0, ~~~~ \mbox{det}({\bf J }(x)) = f_ug_v-f_vg_u >0, ~~~~ \text{for all } x\in[0,1],
\eeq
a set of two constraints that we shall assume throughout the text below. These conditions generalize the notion of stability against homogeneous perturbation in the spatially homogeneous setting, and imply that any unstable mode of the form \eqref{eq:WKBsolu} will lead to emergent spatial patterning that is not strictly dictated by the spatial heterogeneity in the kinetics.

The above expression for the leading order solution ${\bf w}$ is well defined with the exception of zeros of $\mu_\lambda^\pm$ (which we will show can be excluded in Proposition \ref{p5} below) and potential singular points $x_*\in[0,1]$ where ${\bf s_*}^T {\bf p_*}=0$. These singular points will in fact determine the subsets of the spatial domain in which a patterning instability will occur.  We first consider properties of $\lambda$ and $\mu_\lambda^\pm$, independently from the solution structure given by \eqref{eq:WKBsolu}, in an arbitrary interval $(a,b) \subset [0,1]$, and then discuss how to choose such an interval so that solutions can be defined. After this, we discuss how solutions behave near these singular points in order to define solutions globally in space. Note, in particular, that we will restrict attention to the open interval $(a,b)$, as we will eventually choose these boundaries to (possibly) be singular points.

\subsection{Local Turing Conditions}

%We consider \eqref{eq:WKBsolu} defined for $x \in (a,b)$, where the end-points are either singular points corresponding to ${\bf s_*}^T {\bf p_*}=0$, or they are the end-points of the full domain. We then enforce either homogeneous Neumann or Dirichlet boundary conditions of the solution. The former will be taken at boundaries where $a=0$ or $b=1$, and the latter used whenever $a$ or $b$ corresponds to a singular point. These boundaries will influence both the choice of the constants $C_{0}$ and $S_{0}$, as well as the properties of $\mu_\lambda^\pm$. We note that both eigenvalues $\mu_\lambda^\pm$ will be continuous functions of $\lambda$, and will be differentiable with respect to the growth rate everywhere that they are distinct \cite{serre2002matrices}. Finally we recall from Section \ref{Overview} that we will think of $\lambda$ as a parameter, and deduce conditions necessary to obtain real solutions when $\lambda$ has a positive real part.

Motivated by the form of solution \eqref{eq:WKBsolu}, we now consider the quantities $\lambda$ and $\mu_\lambda^\pm$ in their own right, and deduce properties that these quantities must have to make sense of such a solution. 
If the WKBJ solution,  \eqref{eq:WKBsolu}, is defined everywhere  on the domain $x \in[0,1]$, Neumann boundary conditions at the domain edges entail, at leading order in $\epsilon$, that  the integral 
\begin{equation}
  \label{eq:FundConstr00001}
    \int_0^1 \sqrt{\mu_{\lambda}^\pm(\bar{x})}\mathrm{d}\bar{x}
\end{equation} 
is a multiple of $\pi \epsilon/2$ and, without loss,  $a=S_0^\pm=0$ so that only cosine solutions remain. Furthermore, given the choice of a sufficiently small $\epsilon$, the above constraint can be ensured simply by imposing 
\begin{equation}
  \label{eq:FundConstr01}
    \int_0^1 \sqrt{\mu_{\lambda}^\pm(\bar{x})}\mathrm{d}\bar{x}\in \mathbb{R}\setminus\{0 \}.
\end{equation} 

However, we are not guaranteed that the WKBJ solution,  \eqref{eq:WKBsolu}, is defined everywhere on the domain $x \in[0,1]$. In general, the region of validity for the WKBJ solution will be restricted to one or more intervals of the form $(a,b)$ where $a$ and $b$ are constants satisfying $0 \leq a < b \leq 1$. The pathological  case of a region of validity that is restricted to a point is neglected,  as this requires mathematical precision in the parameter values. As discussed in Section \ref{Sec.BoundsOfWKB}, when the WKBJ solutions are not valid everywhere, homogeneous Dirichlet boundary conditions will be required (i.e.~whenever $a,b$ satisfy $a>0$ or $b<1$). Given sufficiently small $\epsilon$, as discussed immediately above, then we may accommodate both cases of homogeneous boundary conditions, Neumann or Dirichlet, by requiring (i) without loss a restriction of the WKBJ solution to either a sine or a cosine by setting either $C_0^\pm=0$ or $S_0^\pm=0$ to guarantee the boundary condition at $x=a$ and (ii)
\begin{equation}
  \label{eq:FundConstr}
    \int_a^b \sqrt{\mu_{\lambda}^\pm(\bar{x})}\mathrm{d}\bar{x}\in \mathbb{R}\setminus\{0 \}, 
\end{equation} 
for $0 \leq a < b \leq 1$ so that   the integral in \eqref{eq:FundConstr} can be guaranteed to be a multiple  of either $\pi \epsilon$ or $\pi\epsilon/2$ for a suitable, sufficiently small, choice of $\epsilon$ to guarantee the homogeneous boundary condition at $x=b$.

From the fundamental constraint \eqref{eq:FundConstr}, we can deduce properties of $\lambda$ and $\mu^\pm_\lambda$ that are necessary for a solution to exist within an arbitrary region of validity, $(a,b)$, with $\Re(\lambda) > 0$, therefore resulting in a perturbation that grows in amplitude.
 An analogous derivation in the spatially homogeneous setting is given in SI Section \ref{App.RederivationOfClassDDI}, and is the motivation for what follows. Critically, we assume that conditions \eqref{tc1i} hold for every proposition below. We remark that here we derive implications directly from the fundamental constraint \eqref{eq:FundConstr} that will let us make sense of solutions of the form given in \eqref{eq:WKBsolu}, but do not assume \emph{a priori} that such solutions must be valid.

We first define \emph{permissible} growth rates and eigenvalues which satisfy \eqref{eq:FundConstr}.
\begin{definition} A permissible pair $(\lambda, \mu^\pm_\lambda(x))$ is such that the value of $\lambda$ entails $\mu^\pm_\lambda(x)$ satisfies constraint \eqref{eq:FundConstr} for all $x$ in some interval $(a,b)$. 
\end{definition}
\noindent
We will also refer to $\lambda$ as permissible,  or $\mu^\pm_\lambda(x)$ as permissible, if $(\lambda, \mu^\pm_\lambda(x))$ is permissible, as defined above, and implicitly assume this is over an interval $(a,b)$. 

\begin{proposition} \label{p2}  The function $\mu^\pm_\lambda(x)$ is permissible if and only if  $\mu^\pm_\lambda(x)$ is real and non-negative for all $x\in(a,b)$, though not identically zero. Additionally, without loss of generality we can consider the integral in \eqref{eq:FundConstr} to be positive.
\end{proposition}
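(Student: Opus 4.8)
The plan is to analyze the fundamental constraint \eqref{eq:FundConstr}, namely that $\int_a^b \sqrt{\mu_\lambda^\pm(\bar x)}\,\mathrm{d}\bar x$ must lie in $\mathbb{R}\setminus\{0\}$, and extract what this forces on $\mu_\lambda^\pm$ pointwise. The key observation is that $\mu_\lambda^\pm(x)$ is a continuous (indeed analytic, away from branch points) function of $x$, being an eigenvalue of the matrix ${\bf B}_\lambda(x) = {\bf D}^{-1}({\bf J}(x) - \lambda {\bf I})$, which depends smoothly on $x$. I would first argue the ``if'' direction: if $\mu_\lambda^\pm(x) \geq 0$ on $(a,b)$ and is not identically zero, then $\sqrt{\mu_\lambda^\pm(\bar x)}$ is real and non-negative, the integral is a well-defined non-negative real number, and it is strictly positive because a continuous non-negative integrand that is not identically zero has positive integral. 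This simultaneously establishes the final sentence of the proposition (that we may take the integral positive), since the non-negative square root branch is the natural choice and yields a positive value.

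For the ``only if'' direction — which I expect to be the main obstacle — I would argue by contraposition. Suppose $\mu_\lambda^\pm(x)$ is \emph{not} real and non-negative throughout $(a,b)$. There are two cases. First, if $\mu_\lambda^\pm(x_0)$ is real but strictly negative at some point $x_0$, then by continuity it is negative on a neighborhood, so $\sqrt{\mu_\lambda^\pm}$ is purely imaginary there, contributing a genuinely nonzero imaginary part to the integral that cannot be cancelled (one must be careful: I would note that on any connected subinterval where $\mu_\lambda^\pm < 0$ the square root has a fixed sign of imaginary part, so no cancellation occurs; the contributions from regions where $\mu_\lambda^\pm \geq 0$ are real and cannot cancel an imaginary part). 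Hence the integral is not real, violating \eqref{eq:FundConstr}. Second, if $\mu_\lambda^\pm(x_0)$ has nonzero imaginary part at some $x_0$, then again by continuity this persists on a neighborhood, and $\sqrt{\mu_\lambda^\pm}$ there has nonzero (and, on a small enough interval, fixed-sign) imaginary part, again forcing $\mathrm{Im}\int_a^b \sqrt{\mu_\lambda^\pm}\,\mathrm{d}\bar x \neq 0$. The subtlety requiring care is the branch choice for $\sqrt{\cdot}$ when $\mu_\lambda^\pm$ wanders in the complex plane; I would fix the principal branch and note that since $(a,b)$ can be taken small (the constraint must hold on \emph{some} interval, and any subinterval of a permissible interval is permissible), continuity confines $\mu_\lambda^\pm$ to a region where the principal square root is continuous and has controlled argument, so cancellation of the imaginary part is impossible. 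Finally, the case $\mu_\lambda^\pm \equiv 0$ is excluded directly since then the integral is $0 \notin \mathbb{R}\setminus\{0\}$.

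The remaining piece is to confirm that requiring the integral merely to be a \emph{nonzero real} (rather than the exact quantized multiple of $\pi\epsilon$ or $\pi\epsilon/2$) is the right notion of permissibility: this was already argued in the text preceding the definition — given a nonzero real value of the integral, one can always choose $\epsilon$ sufficiently small so that the quantization condition is met — so I would simply invoke that discussion. Assembling these cases gives the equivalence, and the positivity of the integral follows from taking the non-negative branch of the square root on the region where $\mu_\lambda^\pm \geq 0$.
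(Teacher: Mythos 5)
Your ``if'' direction and the closing remark about positivity of the integral are fine and match the paper. The gap is in the ``only if'' direction, and it is precisely at the point you flag as a subtlety: the branch choice. You fix the \emph{principal} branch of the square root, but under the principal branch $\operatorname{Im}\sqrt{\mu^\pm_\lambda}$ inherits the sign of $\operatorname{Im}\mu^\pm_\lambda$, so contributions to the imaginary part of $\int_a^b\sqrt{\mu^\pm_\lambda}\,\mathrm{d}\bar{x}$ from regions where $\mu^\pm_\lambda$ lies in the upper half-plane and regions where it lies in the lower half-plane \emph{can} cancel. For instance, if $\mu^\pm_\lambda$ takes complex-conjugate values on two subintervals of equal length, the principal square roots are conjugate, the integral is a positive real number, and yet $\mu^\pm_\lambda$ is nowhere real there --- so with your branch convention the contrapositive you are trying to prove simply fails, and your case analysis (which only rules out cancellation against the real contributions from $\{\mu^\pm_\lambda\ge 0\}$ and against a single fixed-sign negative patch) does not close this loophole. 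The paper's proof turns on a different convention: it takes the argument $\theta\in[0,2\pi)$ and the root $e^{i\theta/2}$, so the square root always has argument in $[0,\pi)$, hence pointwise \emph{non-negative} imaginary part; then no cancellation is possible anywhere in the integration domain, and realness of the integral forces $\sqrt{\mu^\pm_\lambda}$ to be real, i.e.\ $\mu^\pm_\lambda$ real and non-negative, on all of $(a,b)$. (This choice is legitimate because $\varphi'$ is only determined up to sign by $\varphi'^2=\mu^\pm_\lambda$ and both signs are already carried by the cosine/sine combination in the WKBJ mode.)

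Your attempted repair --- shrinking $(a,b)$ and asserting that ``any subinterval of a permissible interval is permissible'' --- does not hold and is in fact circular: permissibility is a statement about the integral over the whole interval being a nonzero real, and realness of the full integral implies realness over subintervals only if cancellation of imaginary parts is already excluded, which is exactly what is to be proved. Moreover, the proposition asserts non-negativity of $\mu^\pm_\lambda$ on the whole interval $(a,b)$, so localizing to a small subinterval would not deliver the stated conclusion even if it were justified. Replacing the principal branch by the paper's $[0,\pi)$-argument convention (or otherwise arguing that one may take $\operatorname{Im}\sqrt{\mu^\pm_\lambda}\ge 0$ pointwise without loss of generality) repairs the argument and makes the rest of your write-up go through.
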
 
    
\begin{proof} 
If $\mu^\pm_\lambda(x)$ is real, non-negative and not identically zero for $x\in(a,b)$ it is immediate that it is permissible.
For the converse, we consider a permissible $\mu^\pm_\lambda(x)$. 
Note that the square root in condition \eqref{eq:FundConstr} is, without loss of generality, the positive square root.
In other words, we work in the complex plane such that any argument, denoted $\theta$ below, is in the range $\theta \in [0,2\pi)$ and the positive square root is such  that if, for example, $\theta/2 \in [0,\pi)$, then
$$ \left(e^{i\theta}\right)^{1/2} = e^{i\theta/2}.$$ 
Hence, any imaginary  contribution to the 
integrand, 
$\sqrt{\mu^\pm_\lambda(x)}$, 
in condition (\ref{eq:FundConstr}) is non-negative as the argument of the square root in the complex plane is in the range $[0,\pi)$. So any imaginary  contribution to the integrand cannot be cancelled from a contribution in any other region of the integration domain. 
Thus, the integrand cannot have an imaginary contribution and 
 $\sqrt{\mu^\pm_\lambda(x)}$ must be real for all $x\in(a,b)$. Hence, 
$\mu^\pm_\lambda(x)$ is real and non-negative for all $x\in(a,b)$. Finally, the need for the integral to not be identically zero implies that $\mu^\pm_\lambda(x)$ cannot be identically zero.
\end{proof} 

\begin{proposition} \label{p3} If $\lambda$ is both permissible and complex (i.e.~$\Im(\lambda)\neq 0$), then $\Re(\lambda)<0.$
\end{proposition}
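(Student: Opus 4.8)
The plan is to invoke Proposition~\ref{p2}, which tells us that a permissible $\lambda$ forces each relevant eigenvalue $\mu^{\pm}_\lambda(x)$ of ${\bf B}_\lambda = {\bf D}^{-1}({\bf J}-\lambda{\bf I})$ to be real and non-negative on some interval $(a,b)$. The key observation is that imposing reality on a root of a quadratic whose coefficients are genuinely complex --- which they are precisely because $\Im(\lambda)\neq 0$ --- collapses to a single linear equation that pins $\mu^{\pm}_\lambda(x)$ down explicitly, after which the sign hypotheses \eqref{tc1i} dictate the sign of $\Re(\lambda)$.

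First I would record the characteristic equation for $\mu = \mu^{\pm}_\lambda(x)$, namely $\mu^2 - {\rm tr}({\bf B}_\lambda)\,\mu + {\rm det}({\bf B}_\lambda) = 0$, together with
\[
{\rm tr}({\bf B}_\lambda) = {\rm tr}({\bf D}^{-1}{\bf J}) - \lambda\,{\rm tr}({\bf D}^{-1}), \qquad {\rm det}({\bf B}_\lambda) = \frac{1}{d}\big(\lambda^2 - {\rm tr}({\bf J})\lambda + {\rm det}({\bf J})\big),
\]
noting that ${\rm tr}({\bf D}^{-1}) = 1 + 1/d > 0$ and that ${\rm tr}({\bf J})$ and ${\rm det}({\bf J})$ are real. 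Since $\mu$ is real, the term $\mu^2$ is real and drops out of the imaginary part of the characteristic equation, leaving the linear relation $\Im\!\big({\rm tr}({\bf B}_\lambda)\big)\,\mu = \Im\!\big({\rm det}({\bf B}_\lambda)\big)$. A short computation gives $\Im\big({\rm tr}({\bf B}_\lambda)\big) = -\Im(\lambda)(1+1/d)$ and $\Im\big({\rm det}({\bf B}_\lambda)\big) = \tfrac{\Im(\lambda)}{d}\big(2\Re(\lambda) - {\rm tr}({\bf J})\big)$; because $\Im(\lambda)\neq 0$ and $d>0$ the coefficient of $\mu$ is nonzero, so dividing yields
\[
\mu^{\pm}_\lambda(x) = -\,\frac{2\Re(\lambda) - {\rm tr}({\bf J}(x))}{d+1}.
\]

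Finally, Proposition~\ref{p2} forces $\mu^{\pm}_\lambda(x) \ge 0$ at any $x \in (a,b)$, hence $2\Re(\lambda) \le {\rm tr}({\bf J}(x))$, and the standing hypothesis \eqref{tc1i} that ${\rm tr}({\bf J}(x)) < 0$ then gives $\Re(\lambda) \le \tfrac12\,{\rm tr}({\bf J}(x)) < 0$, as claimed.

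I do not anticipate a genuine obstacle: the whole argument is an explicit $2\times 2$ computation driven by Proposition~\ref{p2}. The only step requiring a line of justification is that $\Im\big({\rm tr}({\bf B}_\lambda)\big) \neq 0$, so that the linear relation can be solved for $\mu$; this is immediate from $\Im(\lambda)\neq 0$ together with $d>0$. A secondary, trivial point is that reality of $\mu^{\pm}_\lambda$ alone would not suffice --- it is the non-negativity supplied by Proposition~\ref{p2} that converts the explicit formula into the sign conclusion $\Re(\lambda) < 0$.
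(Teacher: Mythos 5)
Your proof is correct and, in substance, the same as the paper's: both arguments rest on the quadratic relation ${\rm det}[-\mu^\pm_\lambda {\bf D}+{\bf J}_\lambda]=0$, the reality and non-negativity of $\mu^\pm_\lambda$ from Proposition~\ref{p2}, and ${\rm tr}({\bf J})<0$ from \eqref{tc1i}, and both ultimately reduce to the identity $2\Re(\lambda)={\rm tr}({\bf J})-(1+d)\mu^\pm_\lambda\le {\rm tr}({\bf J})<0$. The only (cosmetic) difference is how that identity is extracted: the paper solves the quadratic for $\lambda$ and observes that for non-real $\lambda$ the square-root term is purely imaginary, whereas you take the imaginary part of the characteristic polynomial in $\mu$ and solve the resulting linear relation, which is a perfectly valid variant.
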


\begin{proof} 
%This proceeds analogously to the homogeneous case. 
From the definition of $\mu^\pm_\lambda(x)$, we have 
$$ \mbox{det} [-\mu^\pm_\lambda(x) {\bf D} + {\bf J}_\lambda(x)] =0, $$ 
and, so,
\beq \label{muleqn} 
2 \lambda &=& -\mbox{tr}(\mu^\pm_\lambda(x) {\bf D}-{\bf J})
\pm\sqrt{[\mbox{tr}(\mu^\pm_\lambda(x) {\bf D}-{\bf J})]^2-4 \mbox{det}[\mu^\pm_\lambda(x) {\bf D}-{\bf J}] }
 \nonumber = [f_u+g_v -\mu^\pm_\lambda(x)(1+d)] \\ 
 &\pm& \sqrt{ (f_u+g_v-\mu^\pm_\lambda(x)(1+d))^2 - 4 (d(\mu^\pm_\lambda(x))^2-(df_u+g_v)\mu^\pm_\lambda(x)+(f_ug_v-g_uf_v))}, 
\eeq
with the spatial dependence of $\mu^\pm_\lambda(x)$ such that the growth rate, $\lambda$, does not have a dependence on $x$. Given $\lambda$ is permissible, so that $\mu^\pm_\lambda(x)$ is permissible, we have that $\mu^\pm_\lambda(x)$ is real and non-negative for all $x \in (a,b)$. In addition, tr$({\bf J})=f_u+g_v<0$ for 
all $x$ by Equation \eqref{tc1i} implying that $f_u+g_v -\mu^\pm_\lambda(x)(1+d) <0$ and, thus, if a permissible $\lambda$ is complex it has a negative real part. 
\end{proof}

%We can now show that the Turing conditions given by \eqref{tchet} originate from the requirement of permissible eigenvalues and non-negative growth rates.

\begin{proposition} \label{p4} Given $\Re(\lambda)\geq0$, the pair $(\lambda, \mu^\pm_\lambda(x))$ is permissible if and only if
\beq \label{tc233} \rm{tr}( {\bf B}_{\lambda}) > 0, ~~~~~~ [\rm{tr}( {\bf B}_{\lambda}{)}]^2-4 {\rm det}({\bf B}_{\lambda}) \geq 0,
\eeq 
for all $x \in(a,b)$. 
\end{proposition}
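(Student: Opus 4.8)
The plan is to reduce the permissibility of $(\lambda,\mu^\pm_\lambda(x))$ to the algebraic conditions \eqref{tc233} by combining the characterisation of permissible eigenvalues in Proposition~\ref{p2}, the contrapositive of Proposition~\ref{p3}, and the standing stability hypotheses \eqref{tc1i}. First I would record the two ingredients that do the real work. Writing the characteristic polynomial of ${\bf B}_\lambda(x) = {\bf D}^{-1}({\bf J}(x)-\lambda{\bf I})$ as $\mu^2 - {\rm tr}({\bf B}_\lambda)\mu + {\rm det}({\bf B}_\lambda)=0$, its two roots are exactly $\mu^+_\lambda(x)$ and $\mu^-_\lambda(x)$, with
\[
{\rm tr}({\bf B}_\lambda) = {\rm tr}({\bf D}^{-1}{\bf J}) - \lambda\,{\rm tr}({\bf D}^{-1}), \qquad {\rm det}({\bf B}_\lambda) = \frac{1}{d}\,{\rm det}({\bf J}-\lambda{\bf I}) = \frac{1}{d}\big(\lambda^2 - {\rm tr}({\bf J})\lambda + {\rm det}({\bf J})\big).
\]
The key elementary observation is that, for real $\lambda\geq0$, conditions \eqref{tc1i} force ${\rm det}({\bf B}_\lambda(x))>0$ strictly at every $x$: indeed $\lambda^2\geq0$, $-{\rm tr}({\bf J})\lambda\geq0$, ${\rm det}({\bf J})>0$ and $d>0$. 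Hence $\mu^+_\lambda\mu^-_\lambda>0$, so the two roots never vanish and always share a common sign. It is also harmless to restrict to real $\lambda$: if $\Im(\lambda)\neq0$ and $\Re(\lambda)\geq0$ then $(\lambda,\mu^\pm_\lambda)$ is not permissible by Proposition~\ref{p3}, while ${\rm tr}({\bf B}_\lambda)={\rm tr}({\bf D}^{-1}{\bf J})-\lambda\,{\rm tr}({\bf D}^{-1})$ is then non-real and cannot be a positive real number, so \eqref{tc233} also fails; thus both sides of the claimed equivalence are false and there is nothing to prove. From here on assume $\lambda\in\mathbb{R}$, $\lambda\geq0$, so that ${\rm tr}({\bf B}_\lambda)$ and ${\rm det}({\bf B}_\lambda)$ are real.

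For the forward implication, suppose $(\lambda,\mu^\pm_\lambda(x))$ is permissible. By Proposition~\ref{p2} the relevant branch $\mu^\pm_\lambda(x)$ is real and non-negative for all $x\in(a,b)$ and not identically zero. Since ${\rm tr}({\bf B}_\lambda)$ is real, reality of one root forces $[{\rm tr}({\bf B}_\lambda)]^2-4\,{\rm det}({\bf B}_\lambda)\geq0$, which is the second inequality in \eqref{tc233} and makes both roots real. Using ${\rm det}({\bf B}_\lambda)=\mu^+_\lambda\mu^-_\lambda>0$, the non-negative root is in fact strictly positive, hence so is the other one; summing them gives ${\rm tr}({\bf B}_\lambda)=\mu^+_\lambda+\mu^-_\lambda>0$, the first inequality in \eqref{tc233}. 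As $x\in(a,b)$ was arbitrary, \eqref{tc233} holds throughout $(a,b)$.

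For the converse, suppose \eqref{tc233} holds for all $x\in(a,b)$. Then $[{\rm tr}({\bf B}_\lambda)]^2-4\,{\rm det}({\bf B}_\lambda)\geq0$ makes $\mu^+_\lambda(x)$ and $\mu^-_\lambda(x)$ real, while ${\rm det}({\bf B}_\lambda)>0$ (by the observation above) together with ${\rm tr}({\bf B}_\lambda)>0$ forces both real roots to be strictly positive. Thus each of $\mu^+_\lambda$ and $\mu^-_\lambda$ is real, non-negative and not identically zero on $(a,b)$, so by Proposition~\ref{p2} each is permissible; in particular the pair $(\lambda,\mu^\pm_\lambda(x))$ is permissible, completing the equivalence.

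The step I expect to be the main obstacle --- or at least the one needing the most care --- is bridging the gap between what Proposition~\ref{p2} supplies and what \eqref{tc233} asserts: Proposition~\ref{p2} only controls the single eigenvalue carrying the chosen sign, whereas \eqref{tc233} is effectively a statement about both eigenvalues, equivalently about ${\rm tr}({\bf B}_\lambda)$. The hypotheses \eqref{tc1i} and $\Re(\lambda)\geq0$ are exactly what close this gap, via the strict positivity of ${\rm det}({\bf B}_\lambda)$; without it one could only deduce non-negativity of one eigenvalue and nothing about the sign of the trace. The only other subtlety is the implicit reduction to real $\lambda$, which should be argued explicitly --- via Proposition~\ref{p3} in one direction and via the reality forced by ${\rm tr}({\bf B}_\lambda)>0$ in the other --- rather than taken for granted.
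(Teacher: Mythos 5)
Your proof is correct and follows essentially the same route as the paper: reduce to real $\lambda$ via Proposition~\ref{p3}, invoke Proposition~\ref{p2}'s characterisation of permissibility, use ${\rm det}({\bf B}_\lambda)>0$ (from \eqref{tc1i} and $\lambda\geq 0$), and read off both inequalities in \eqref{tc233} from the quadratic satisfied by $\mu^\pm_\lambda$. The only cosmetic differences are that you argue via the sum and product of the roots where the paper manipulates the explicit root formula \eqref{ie3}, and that you spell out the non-real-$\lambda$ case on both sides of the equivalence, which the paper handles only implicitly in the converse direction.
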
 

\begin{proof}
If  $(\lambda, \mu^\pm_\lambda(x))$ is permissible, with $\Re(\lambda)\geq 0$, then $\lambda$ is real by Proposition \ref{p3}. From permissibility and Proposition \ref{p2} we also have $\mu^\pm_\lambda(x)$ is real, non-negative and not identically zero for all $x\in(a,b)$. 
From $ \mbox{det}(-\mu^\pm_\lambda(x) {\bf D} + {\bf J}_\lambda(x)) =0 $, we have 
\beq   \label{ie3} \mu^\pm_\lambda(x) &=& \frac 12 \left[\mbox{tr}( {\bf B}_{\lambda})
\pm\sqrt{[\mbox{tr}( {\bf B}_{\lambda})]^2-4 \mbox{det}({\bf B}_{\lambda}) }\right].
\eeq
As $\mu^\pm_\lambda(x)$ and $\lambda$ are strictly real, this enforces
$$ [\rm{tr}( {\bf B}_{\lambda})]^2-4 {\rm det}({\bf B}_{\lambda}) \geq 0, $$ 
for all $x\in(a,b)$. 
We also have det$({\bf B}_{\lambda})= {\rm det}({\bf D}^{-1}{\bf J_\lambda})={\rm det}({\bf D}^{-1})~{\rm det}({\bf J_\lambda}) = {\rm det}({\bf J_\lambda})/d >0$ for all $x\in(a,b)$, using Equation \eqref{tc1i} and that $\lambda$ is real and non-negative. Hence, for both the positive and negative square root in Equation \eqref{ie3}, the fact that $\mu^\pm_\lambda(x)$ cannot be negative enforces $\rm{tr}( {\bf B}_{\lambda}) \geq 0$ for all $x\in(a,b)$. The possibility that $\rm{tr}( {\bf B}_{\lambda}) = 0$ is excluded as then $\mu^\pm_\lambda(x)$ is not real, since  det$({\bf B}_{\lambda})>0.$

Conversely, assuming conditions \eqref{tc233}, we can see by Equation \eqref{ie3} that $\mu_\lambda^\pm(x) > 0$ for all $x \in (a,b)$, and, hence, condition \eqref{eq:FundConstr} is satisfied. \end{proof}

We note that conditions \eqref{tc233} cannot be satisfied in the case $d=1$, as then $\rm{tr}( {\bf B}_{\lambda}) = \rm{tr}( {\bf J})-2\lambda < \rm{tr}({\bf J}) < 0$, so for any permissible $\lambda$ with $\Re(\lambda)>0$, we must have $d < 1$. As the conditions in \eqref{tc233} do not depend on the positive or negative branch of $\mu_\lambda^\pm$, we immediately have that this proposition implies both roots are permissible once one of them is. We also need the following Proposition, which shows that if these conditions hold for $\lambda > 0$, then they hold for $\lambda = 0$.

\begin{proposition} \label{newp} If the conditions \eqref{tc233} hold for some real permissible $\lambda^*$ with $\lambda^*>0$ for all $x \in (a,b)$, then they hold for all real $\lambda$ with $\lambda^* \geq \lambda \geq 0$ for all $x \in (a,b)$.
\end{proposition}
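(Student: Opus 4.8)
The plan is to argue pointwise in $x$: fixing $x\in(a,b)$, I will show that if the two inequalities in \eqref{tc233} hold at $\lambda=\lambda^*>0$ then they hold at every $\lambda\in[0,\lambda^*]$; since the hypothesis asserts this for all $x\in(a,b)$, the conclusion follows for all $x\in(a,b)$. With $x$ frozen, write $T(\lambda)=\mathrm{tr}({\bf B}_\lambda)$ and $h(\lambda)=T(\lambda)^2-4\det({\bf B}_\lambda)$ for the two quantities appearing in \eqref{tc233}, viewed as functions of the real variable $\lambda$. Since ${\bf B}_\lambda={\bf D}^{-1}{\bf J}-\lambda{\bf D}^{-1}$, we have $T(\lambda)=\mathrm{tr}({\bf D}^{-1}{\bf J})-\lambda(1+1/d)$, which is affine with strictly negative slope, and $\det({\bf B}_\lambda)=\det({\bf J}-\lambda{\bf I})/d=(\lambda^2-\mathrm{tr}({\bf J})\lambda+\det({\bf J}))/d$.

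The trace inequality is handled by monotonicity alone: $T$ is strictly decreasing, so $T(\lambda^*)>0$ gives $T(\lambda)\ge T(\lambda^*)>0$ for every $\lambda\le\lambda^*$, in particular on $[0,\lambda^*]$; note in passing that this shows $T(\lambda)\ge0$ throughout $[0,\lambda^*]$. For the discriminant inequality, differentiate: $h'(\lambda)=2T(\lambda)T'(\lambda)-\tfrac4d(2\lambda-\mathrm{tr}({\bf J}))$, with $T'(\lambda)=-(1+1/d)<0$. On $[0,\lambda^*]$ we have $T(\lambda)\ge0$ from the previous step, and $2\lambda-\mathrm{tr}({\bf J})>0$ because $\lambda\ge0$ and $\mathrm{tr}({\bf J})<0$ by assumption \eqref{tc1i}; hence both terms of $h'(\lambda)$ are $\le0$, with the second strictly negative, so $h'(\lambda)<0$ on $[0,\lambda^*]$. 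Thus $h$ is strictly decreasing on $[0,\lambda^*]$, and therefore $h(\lambda)\ge h(\lambda^*)\ge0$ for all $\lambda\in[0,\lambda^*]$, the last inequality being the hypothesis at $\lambda^*$.

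Together the two steps give both inequalities of \eqref{tc233} for every $\lambda\in[0,\lambda^*]$ and every $x\in(a,b)$, proving the proposition. The only point requiring care is the discriminant inequality: $h$ is in general not monotone in $\lambda$ (it is a convex quadratic), so one cannot simply propagate $h(\lambda^*)\ge0$ downwards without an extra ingredient. The key observation is that the trace condition $T(\lambda^*)>0$, combined with the monotonicity of $T$, confines the whole interval $[0,\lambda^*]$ to the region $\{T\ge0\}$, and it is precisely on that region --- using also $\mathrm{tr}({\bf J})<0$ --- that the two competing contributions to $h'$ both carry the favourable sign, making $h$ monotone exactly where we need it. (Equivalently, one could note that $h$ is an upward parabola which is strictly negative at the point where $T$ vanishes, so its sign-change interval contains that point, and the trace condition at $\lambda^*$ forces $\lambda^*$, and hence all of $[0,\lambda^*]$, to lie to the left of this interval; the monotonicity argument above packages this more directly.)
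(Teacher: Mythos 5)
Your proof is correct, and the trace step coincides with the paper's; where you genuinely diverge is in handling the discriminant condition. The paper treats $[{\rm tr}({\bf B}_\lambda)]^2-4\,{\rm det}({\bf B}_\lambda)$ as an explicit upward quadratic $P(\lambda)$, computes its roots $\tilde{\lambda}^\pm$ in terms of $f_u,g_v,f_vg_u,d$, and rules out $\lambda^*\geq\tilde{\lambda}^+$ by showing ${\rm tr}({\bf B}_{\tilde{\lambda}^+})<0$, forcing $\lambda^*\leq\tilde{\lambda}^-$ so that $P\geq 0$ on $[0,\lambda^*]$; this route needs ${\rm det}({\bf J})>0$ (to make $-df_vg_u\geq 0$ and the roots real) and yields the explicit root formulas as a by-product. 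You instead differentiate: $h'(\lambda)=2\,{\rm tr}({\bf B}_\lambda)\cdot\bigl(-(1+1/d)\bigr)-\tfrac{4}{d}\bigl(2\lambda-{\rm tr}({\bf J})\bigr)$, and observe that on $[0,\lambda^*]$ the already-established positivity of the trace together with ${\rm tr}({\bf J})<0$ from \eqref{tc1i} makes both terms negative, so $h$ is strictly decreasing there and $h(\lambda)\geq h(\lambda^*)\geq 0$ propagates downward. This is shorter, needs only the standing assumption ${\rm tr}({\bf J})<0$ for the key sign, correctly flags (and circumvents) the pitfall that $h$ is not globally monotone, and in fact is the same derivative computation the paper itself invokes later in Equation \eqref{nmon} within the proof of Proposition \ref{c1}, so your argument dovetails naturally with the monotonicity of $\mathcal{T}_\lambda$; what it does not give you, and the paper's root-based argument does, are the closed-form thresholds $\tilde{\lambda}^\pm$, which carry some independent information about the admissible range of growth rates. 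Your closing parenthetical (parabola negative at the zero of the trace) is essentially the paper's argument in disguise and is also valid, provided you note it additionally uses ${\rm det}({\bf J}_\lambda)>0$ for $\lambda\geq 0$.
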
 

\begin{proof}
 First, we realise that  ${\rm tr}({\bf B}_\lambda)={\rm tr}({\bf B}_0)-\lambda(1+1/d)$, and the last term is strictly negative so that from ${\rm tr}({\bf B}_{\lambda^*}) > 0$ we have  ${\rm tr}({\bf B}_\lambda)>0$ for all $0\leq\lambda\leq \lambda^*$. Next we consider the second condition of \eqref{tc233}, which can be written as  
 \begin{equation}
     P(\lambda) := \left (1-\frac{1}{d}\right)^2\lambda^2-\frac{2(d-1)(df_u-g_v)}{d^2}\lambda+\left(f_u+\frac{g_v}{d}\right)^2-4\frac{f_ug_v-f_vg_u}{d} \geq 0,
 \end{equation}
  where the quadratic  $P(\lambda)$ admits two zeros, which we can compute as
 $$
 \tilde{\lambda}^\pm = \frac{g_v-df_u\pm2\sqrt{-df_vg_u}}{1-d}.
 $$
 We can see that these roots are both real by signing each term. From ${\rm tr} ({\bf J})<0$ while ${\rm tr}({\bf B}_{\lambda^*})={\rm tr}({\bf D}^{-1}{\bf J}_{\lambda^*})>0$ for $\lambda^*>0$ we have that $f_u+ g_v<0$ while $f_u+g_v/d>0$ and $d< 1$. Hence $f_u<0,~g_v>0$ and $f_u-g_v/d<0$.
 
 Due to the positive coefficient of $\lambda^2$ in $P(\lambda)$, we can see that this function must have a negative minimum, so that it is positive to the left of $\tilde{\lambda}^-$ and to the right of $\tilde{\lambda}^+$ (and negative between these roots). If $\lambda^* \leq \tilde{\lambda}^-$, then $P(\lambda)$ is decreasing in $\lambda$ and, so,  $P(\lambda) >0$ for all $0\leq\lambda\leq \lambda^*$, meaning that we would be done. Hence, we now assume that $\lambda^* \geq \tilde{\lambda}^+$ to derive a contradiction. By linearity we have $0 < {\rm tr}({\bf B}_{\lambda^*}) \leq {\rm tr}({\bf B}_{\tilde{\lambda}^+})$. We then compute,
 $$
 {\rm tr}({\bf B}_{\tilde{\lambda}^+}) = \frac{-(2d+2)\sqrt{-df_vg_u}+2d(f_u-g_v)}{(1-d)d}<0,
 $$
which can be seen as the denominator is strictly positive and the numerator has only negative terms. Therefore we must have $\lambda^* \leq \tilde{\lambda}^-$, so that $P(\lambda) \geq 0$ for all $\lambda < \lambda^*$.
 \end{proof}

Next we show a relationship between the positive and negative eigenvalues of ${\bf B}_\lambda$, and how they depend on $\lambda$. We will need to assume that $[\rm{tr}({\bf B}_{\lambda})]^2-4 {\rm det}({\bf B}_{\lambda}) > 0$ for all $x \in (a,b)$. If this term becomes zero (and hence $\mu_\lambda^-(x)=\mu_\lambda^+(x)$), then there is a degeneracy in the associated eigenvectors, which will lead to an internal boundary-layer behaviour discussed in the next section, and hence the determination of the boundary points, $a$ and $b$.

\begin{proposition}\label{p5}
Given $(\lambda, \mu^\pm_\lambda(x))$ is permissible, $\Re(\lambda)\geq0$, and $[\rm{tr}( {\bf B}_{\lambda})]^2-4 {\rm det}({\bf B}_{\lambda}) > 0$ for all $x \in (a,b)$, we then have the ordering $0 < \mu_0^-(x) < \mu^-_\lambda(x) < \mu^+_\lambda(x) < \mu_0^+(x)$ for all $x \in (a,b)$. Furthermore, at the edges of the domain, $x=a$, or $x=b$, we still have the ordering $0 < \mu_0^-(x) \leq \mu^-_\lambda(x) \leq \mu^+_\lambda(x) \leq \mu_0^+(x)$.
\end{proposition}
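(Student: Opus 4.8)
The plan is to realise both pairs of eigenvalues as the roots of a single quadratic in $\mu$ and compare them there. Fix $x\in[0,1]$ and set $\Phi_\lambda(\mu) := \det\!\big(\mathbf{J}(x) - \lambda\mathbf{I} - \mu\mathbf{D}\big)$. Expanding the $2\times 2$ determinant, $\Phi_\lambda$ is a quadratic in $\mu$ with leading coefficient $\det\mathbf{D} = d > 0$, and since $\Phi_\lambda(\mu) = d\,\det\!\big(\mathbf{B}_\lambda(x) - \mu\mathbf{I}\big)$ its roots in $\mu$ are exactly the eigenvalues $\mu_\lambda^-(x)\le\mu_\lambda^+(x)$ of $\mathbf{B}_\lambda(x)$; likewise the roots of $\Phi_0$ are $\mu_0^\pm(x)$. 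First I would collect the sign data. By Proposition \ref{p3} a permissible $\lambda$ with $\Re(\lambda)\ge 0$ is real and non-negative, and I take $\lambda>0$ (for $\lambda=0$ the branches coincide and only the weak ordering is at stake). By Proposition \ref{p4} the conditions \eqref{tc233} hold on $(a,b)$, so $\mathrm{tr}(\mathbf{B}_\lambda)>0$; together with $\det(\mathbf{B}_\lambda) = \det(\mathbf{J}_\lambda)/d = (\lambda^2 - \mathrm{tr}(\mathbf{J})\lambda + \det(\mathbf{J}))/d > 0$ (using \eqref{tc1i} and $\lambda\ge 0$) and the hypothesised strict inequality $[\mathrm{tr}(\mathbf{B}_\lambda)]^2 - 4\det(\mathbf{B}_\lambda) > 0$, this gives $0 < \mu_\lambda^-(x) < \mu_\lambda^+(x)$ for all $x\in(a,b)$.

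The crux is the elementary identity $\Phi_\lambda(\mu) - \Phi_0(\mu) = \lambda\big(\lambda - \mathrm{tr}(\mathbf{J}(x)) + (1+d)\mu\big)$, obtained by a direct expansion of the two $2\times 2$ determinants (the matrix $\mathbf{J}(x)-\lambda\mathbf{I}-\mu\mathbf{D}$ has diagonal entries $f_u-\lambda-\mu$ and $g_v-\lambda-\mu d$, off-diagonal entries $f_v,g_u$). Evaluating it at $\mu=\mu_\lambda^\pm(x)$, where $\Phi_\lambda$ vanishes, gives $\Phi_0(\mu_\lambda^\pm(x)) = -\lambda\big(\lambda - \mathrm{tr}(\mathbf{J}(x)) + (1+d)\mu_\lambda^\pm(x)\big)$, and for $\lambda>0$ the bracket is strictly positive because $-\mathrm{tr}(\mathbf{J}(x))>0$ by \eqref{tc1i}, $1+d>0$, and $\mu_\lambda^\pm(x)>0$; hence $\Phi_0(\mu_\lambda^\pm(x))<0$. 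Thus $\Phi_0$ takes negative values, so it has two distinct real roots $\mu_0^-(x)<\mu_0^+(x)$ and, being an upward parabola, is negative precisely on the open interval between them; therefore $\mu_0^-(x)<\mu_\lambda^\pm(x)<\mu_0^+(x)$. Finally $\mu_0^-(x)>0$, since $\mu_0^-(x)\mu_0^+(x) = \det(\mathbf{B}_0(x)) = \det(\mathbf{J}(x))/d > 0$ while $\mu_0^-(x)+\mu_0^+(x) = \mathrm{tr}(\mathbf{B}_0(x)) = \mathrm{tr}(\mathbf{B}_\lambda(x)) + \lambda(1+1/d) > 0$. Combining these inequalities yields the interior chain $0 < \mu_0^-(x) < \mu_\lambda^-(x) < \mu_\lambda^+(x) < \mu_0^+(x)$ for all $x\in(a,b)$.

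For the endpoints $x=a$ or $x=b$ I would pass to the limit. Each of $\mu_\lambda^\pm$ and of $\mu_0^\pm$ equals $\tfrac12\mathrm{tr}\pm\tfrac12\sqrt{[\mathrm{tr}]^2-4\det}$ of the relevant matrix, and since the two discriminants are positive throughout $(a,b)$ they are non-negative at the endpoints, so all branches extend continuously to $[a,b]$, with the pair $\mu_\lambda^\pm$ (and, separately, $\mu_0^\pm$) possibly merging at an endpoint where the discriminant vanishes. Letting $x\to a^+$ or $x\to b^-$ in the strict interior chain then gives $0 \le \mu_0^-(x) \le \mu_\lambda^-(x) \le \mu_\lambda^+(x) \le \mu_0^+(x)$, and the leftmost inequality remains strict because $\det(\mathbf{J}(x))/d>0$, hence $\mu_0^-(x)\mu_0^+(x)>0$, holds for every $x\in[0,1]$ by \eqref{tc1i}.

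I do not expect a serious obstacle: the mathematical content is the $\Phi_\lambda-\Phi_0$ identity, whose verification is a two-line expansion, and the only points needing care are (i) recording that $\mu_\lambda^\pm(x)>0$ in the interior — which is exactly where Propositions \ref{p3} and \ref{p4} together with the strict-discriminant hypothesis enter, ensuring the bracket in the identity has a definite sign — and (ii) the limiting step at the endpoints, where the two eigenvalue branches of $\mathbf{B}_\lambda$ may coalesce and only the non-strict ordering of the last sentence of the proposition survives.
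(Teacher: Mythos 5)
Your argument is correct, and it takes a genuinely different route from the paper. You fix $\lambda$ and compare the two characteristic quadratics $\Phi_\lambda(\mu)=\det({\bf J}-\lambda{\bf I}-\mu{\bf D})$ and $\Phi_0(\mu)$ directly: the identity $\Phi_\lambda(\mu)-\Phi_0(\mu)=\lambda\bigl(\lambda-{\rm tr}({\bf J})+(1+d)\mu\bigr)$ (which I checked; it is right) shows $\Phi_0(\mu_\lambda^\pm)<0$ for $\lambda>0$, so the roots of $\Phi_\lambda$ interlace strictly inside those of the upward parabola $\Phi_0$, and positivity of $\mu_\lambda^\pm$ and $\mu_0^-$ comes from \eqref{tc1i}, \eqref{tc233} and the sum/product of roots. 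The paper instead differentiates the characteristic equation implicitly in $\lambda$, obtaining \eqref{ie3new} and the signs $\partial_\lambda\mu_\lambda^->0$, $\partial_\lambda\mu_\lambda^+<0$, and deduces the ordering by continuity along the $\lambda$-path; that route implicitly needs permissibility (and a nonvanishing discriminant) at the intermediate values of $\lambda$, which is supplied by Proposition \ref{newp}, and it handles the endpoint strictness $\mu_0^->0$ by a separate contradiction using \eqref{ie3}. Your version is more self-contained at fixed $\lambda$ — no differentiation, no appeal to intermediate growth rates — and your endpoint argument via $\mu_0^-\mu_0^+=\det({\bf J})/d>0$ together with the limit $\mu_0^-\ge 0$ is a cleaner way to exclude $\mu_0^-=0$ than the paper's trace/determinant contradiction. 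What the paper's derivative computation buys in exchange is the monotonicity of $\mu_\lambda^\pm$ in $\lambda$ itself, which is reused later (e.g.\ in signing the integrand in the proof of Proposition \ref{p11}), whereas your identity yields only the ordering. Two small presentational points: the parenthetical ``for $\lambda=0$ the branches coincide'' should be phrased as $\mu_\lambda^\pm$ coinciding with $\mu_0^\pm$ (not with each other), so that only the non-strict ordering can hold there, matching the paper's implicit reading of the strict chain as a statement for $\lambda>0$; and at the endpoints it is worth stating explicitly that the discriminant of ${\bf B}_0$ is positive on $(a,b)$ (because $\Phi_0$ takes negative values there), hence nonnegative in the limit, which is what legitimises treating $\mu_0^\pm(a)$, $\mu_0^\pm(b)$ as real — you do say this, and it suffices.
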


\begin{proof}
Using $\partial_\lambda$ to denote differentiation with respect to $\lambda$, we have that $\partial_\lambda {\rm tr}({\bf B}_{\lambda})=-{\rm tr}({\bf D}^{-1})=-(1+1/d)$ and $\partial_\lambda {\rm det}({\bf B}_{\lambda})=-{\rm tr}({\bf J}_\lambda)/d=(2\lambda -f_u - g_v)/d$. Then, by Equation \eqref{ie3}, $(\mu_\lambda^\pm(x))^2-{\rm tr}({\bf B}_{\lambda})\mu_\lambda^\pm(x) + {\rm det}({\bf B}_{\lambda}) = 0$, so, upon taking the derivative and rearranging we have,
%\beq   \label{ie3new} \partial_\lambda \mu^\pm_\lambda(x) &=& \frac 12 \left[-\mbox{tr}( {\bf D}^{-1})
%\pm\frac{-\mbox{tr}( {\bf D}^{-1}{\bf J_\lambda})\mbox{tr}( {\bf D}^{-1})+2{\rm tr}({\bf J}_\lambda)/d}{\sqrt{[\mbox{tr}( {\bf D}^{-1}{\bf J_\lambda})]^2-4 \mbox{det}({\bf D}^{-1}{\bf J_\lambda}) }}\right].
%\eeq
%We see that the first term of \eqref{ie3new} is always negative, while the second will only change sign based on the numerator. By Proposition \ref{p4}, and the fact that ${\rm tr}({\bf J}_\lambda) < {\rm tr}({\bf J}_0) <0$, we have that this numerator is always negative for $\mu_\lambda^+$, and hence $\mu_\lambda^+ < \mu_0^+$.
\beq   \label{ie3new} \partial_\lambda \mu_\lambda^\pm(x)=\frac{-d{\rm tr}({\bf D}^{-1})\mu_\lambda^\pm(x)+{\rm tr}( {\bf J}_\lambda)}{d(2\mu_\lambda^\pm(x) -{\rm tr}({{\bf B}_\lambda}))}=\frac{-d{\rm tr}({\bf D}^{-1})\mu_\lambda^\pm(x)+{\rm tr}( {\bf J}_\lambda)}{\pm d\sqrt{[\mbox{tr}({{\bf B}_\lambda})]^2-4 \mbox{det}({{\bf B}_\lambda})}}.
\eeq
We can then see that each term in the numerator is always negative for both roots, whereas the denominator will change sign. Hence, we have $\partial_\lambda \mu_\lambda^-(x) > 0$ and $\partial_\lambda \mu_\lambda^+(x) < 0$, so that the ordering follows by continuity. Finally, we note that the possibility of $\mu_0^- \leq 0$ is excluded using Equation \eqref{ie3} along with Proposition \ref{p2} and ${\rm tr}({\bf B}_{0})>0$, the latter of which is true by virtue of Proposition \ref{newp}. 

For the second part, all of the ordering can be deduced as a limit of the above argument (with the new potential equality $\mu_\lambda^-=\mu_\lambda^+$ if $[\rm{tr}( {\bf B}_{0})]^2-4 {\rm det}({\bf B}_{0}) = 0$ by Equation \eqref{ie3}) except the definite inequality $\mu_0^->0. $   To rule out $\mu_0^-=0$, we consider $\lambda=0.$ If $[\rm{tr}( {\bf B}_{0})]^2-4 {\rm det}({\bf B}_{0}) > 0$ at the boundary point as well, the above proof holds.
Hence we need now only consider the case with  $[\rm{tr}( {\bf B}_{0})]^2-4 \rm{det}({\bf B}_{0}) =0.$ Now  suppose, for contradiction, that $\mu_0^-=0$. By \eqref{ie3} we have
$\rm{tr}({\bf B}_0)=0$ and hence 
${\rm det}({\bf B}_{0}) =\mbox{det}({\bf J})/d =0,$ but we have  $\mbox{det}({\bf J})>0$ throughout, and hence the contradiction.\end{proof}

%For the second part, if  $[\rm{tr}( {\bf B}_{\lambda}(x))]^2-4 {\rm det}({\bf B}_{\lambda}(x)) \neq 0$ at some $x \in (a,b)$, then the reasoning above follows identically. Assume then that for some $x^* \in (a,b)$ that $[\rm{tr}( {\bf B}_{\lambda}(x))]^2-4 {\rm det}({\bf B}_{\lambda}(x)) = 0$. Then we have $\mu_\lambda^-(x^*)=\mu_\lambda^+(x^*)$ from Equation \eqref{ie3}. As this zero is simple, then for $0 \leq \tilde{\lambda} <\lambda$ one can show that $\partial_{\tilde{\lambda}}\left[({\rm tr}({\bf B}_{\tilde{\lambda}}(x))^2-4{\rm det}({\bf B}_{\tilde{\lambda}}(x))\right]<0$ (see Equation \eqref{nmon}), so we can define $\partial_{\tilde{\lambda}} \mu_{\tilde{\lambda}}^\pm(x^*)$ by Equation \eqref{ie3new}, and obtain the same ordering as before except for $\mu_\lambda^-(x^*)=\mu_\lambda^+(x^*)$.

%Note that the positive branch should generate a larger set of WKB solutions than the negative one as $0 < \mu_\lambda^- < \mu_\lambda^+$ at every point of $\mathcal{T}_\lambda$ and hence \eqref{hetselect} is satisfied for larger values of n for the larger eigenvalue than the smaller one. 
Propositions \ref{newp} and \ref{p5} together give a range of permissible values of $\lambda$ and associated eigenvalues $\mu_\lambda^\pm$, as soon as the conditions \eqref{tc233} are satisfied for some positive $\lambda^*>0$. Finally, we show that for a permissible $\lambda$ with $\Re(\lambda)\geq0$, and the same assumption as above, we can sensibly define the left and right eigenvectors ${\bf s}_*$ and ${\bf p}_*$ which are not orthogonal.

\begin{proposition}\label{nonorth}
Given $(\lambda, \mu^\pm_\lambda(x))$ is permissible and $\Re(\lambda)\geq 0$, then $[\rm{tr}( {\bf B}_{\lambda})]^2-4 {\rm det}({\bf B}_{\lambda}) > 0$ for all $x \in (a,b)$ if and only if ${\bf s}_*^T {\bf p}_*\neq0$ for all $x \in (a,b)$, where ${\bf s}_*$ and ${\bf p}_*$ are the left and right unit eigenvectors of $[-\mu_\lambda^\pm{\bf I}+ {\bf B}_\lambda]$. 
\end{proposition}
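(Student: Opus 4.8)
The plan is to read the statement as a pointwise (in $x$) fact about the real $2\times2$ matrix $\mathbf{B}_\lambda(x)$: by construction $\mathbf{p}_*(x)$ is a right eigenvector of $\mathbf{B}_\lambda(x)$ for the eigenvalue $\mu_\lambda^\pm(x)$ and $\mathbf{s}_*(x)$ the corresponding left eigenvector (they are the right, resp. left, null vectors of $-\mu_\lambda^\pm\mathbf{I}+\mathbf{B}_\lambda$), so the claim is that such a left/right eigenvector pair is non-orthogonal exactly when the two eigenvalues $\mu_\lambda^+(x),\mu_\lambda^-(x)$ are distinct, i.e.\ exactly when $[\mathrm{tr}(\mathbf{B}_\lambda)]^2-4\det(\mathbf{B}_\lambda)>0$. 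Since $(\lambda,\mu_\lambda^\pm)$ is permissible with $\Re(\lambda)\ge0$, the contrapositive of Proposition~\ref{p3} shows $\lambda$ is real and Proposition~\ref{p2} shows $\mu_\lambda^\pm(x)$ is real on $(a,b)$; by \eqref{ie3} this already forces $[\mathrm{tr}(\mathbf{B}_\lambda)]^2-4\det(\mathbf{B}_\lambda)\ge0$ throughout $(a,b)$, so only a strictly positive discriminant or a vanishing one can occur, and I treat these in turn.

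First suppose $[\mathrm{tr}(\mathbf{B}_\lambda(x))]^2-4\det(\mathbf{B}_\lambda(x))>0$, so $\mu_\lambda^+(x)\neq\mu_\lambda^-(x)$. Then the right eigenvectors $\mathbf{p}_*^{+},\mathbf{p}_*^{-}$ of $\mathbf{B}_\lambda(x)$ are linearly independent and span $\mathbb{R}^2$. Evaluating $\mathbf{s}_*^{+T}\mathbf{B}_\lambda\mathbf{p}_*^{-}$ in the two obvious ways yields $(\mu_\lambda^+-\mu_\lambda^-)\,\mathbf{s}_*^{+T}\mathbf{p}_*^{-}=0$, hence $\mathbf{s}_*^{+T}\mathbf{p}_*^{-}=0$, and symmetrically $\mathbf{s}_*^{-T}\mathbf{p}_*^{+}=0$. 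If in addition $\mathbf{s}_*^{+T}\mathbf{p}_*^{+}=0$, then $\mathbf{s}_*^{+}$ would annihilate both members of a basis of $\mathbb{R}^2$ and hence vanish, contradicting $\|\mathbf{s}_*^{+}\|=1$; the same applies to $\mathbf{s}_*^{-}$. Thus $\mathbf{s}_*^T\mathbf{p}_*\neq0$ on $(a,b)$ for both branches.

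Conversely, by contraposition: if the discriminant is not strictly positive then, by the reduction above, it vanishes and $\mu_\lambda^+(x)=\mu_\lambda^-(x)=:\mu$. Set $\mathbf{N}:=\mathbf{B}_\lambda(x)-\mu\mathbf{I}$; it has zero trace and zero determinant, so $\mathbf{N}^2=\mathbf{0}$ by Cayley--Hamilton. The case $\mathbf{N}=\mathbf{0}$ is excluded under the standing hypotheses: $\mathbf{N}=\mathbf{0}$ would mean $\mathbf{B}_\lambda=\mu\mathbf{I}$, forcing the off-diagonal entries $f_v,g_u$ of $\mathbf{J}$ to vanish, whence $\det\mathbf{J}=f_ug_v>0$ together with $\mathrm{tr}\,\mathbf{J}=f_u+g_v<0$ (Equation~\eqref{tc1i}) would give $f_u,g_v<0$ and hence $\mathrm{tr}(\mathbf{B}_\lambda)<0$, contradicting the permissibility condition $\mathrm{tr}(\mathbf{B}_\lambda)>0$ of Proposition~\ref{p4}. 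Therefore $\mathbf{N}$ is a nonzero rank-one nilpotent, so $\mathrm{range}\,\mathbf{N}=\ker\mathbf{N}$, both one-dimensional. Now $\mathbf{p}_*$ spans $\ker\mathbf{N}=\mathrm{range}\,\mathbf{N}$, while $\mathbf{s}_*$ satisfies $\mathbf{N}^T\mathbf{s}_*=\mathbf{0}$, i.e.\ $\mathbf{s}_*\in\ker\mathbf{N}^T=(\mathrm{range}\,\mathbf{N})^{\perp}$; hence $\mathbf{s}_*^T\mathbf{p}_*=0$ on $(a,b)$.

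I expect the only delicate point to be the coincident-eigenvalue case: one must confirm that $\mathbf{s}_*,\mathbf{p}_*$ remain well-defined (unique up to sign), which is precisely the assertion that $\mathbf{N}$ has rank one rather than rank zero, and then the identity $\mathrm{range}\,\mathbf{N}=\ker\mathbf{N}$ for a $2\times2$ nilpotent closes the argument. Everything else is the routine biorthogonality computation, and since $\lambda$ and $\mathbf{B}_\lambda$ are real here there is no need to distinguish bilinear from Hermitian pairings.
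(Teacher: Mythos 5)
Your proof is correct, and it reaches the same two pivot points as the paper — the repeated-eigenvalue case is the only way orthogonality can occur, and the degenerate case ${\bf B}_\lambda\propto{\bf I}$ is ruled out by the sign constraints — but by a genuinely different route. The paper argues both implications by contraposition through explicit $2\times 2$ normal forms: it lists candidate matrices whose left/right null vectors are orthogonal to conclude the trace must vanish, and in the converse it parametrizes all trace-zero, determinant-zero matrices and exhibits the (orthogonal) eigenvectors ${\bf s}_*=(c_1/c_2,1)$, ${\bf p}_*=(-c_2/c_1,1)$. You instead prove the forward direction directly from biorthogonality: when the discriminant is positive the eigenvalues $\mu_\lambda^\pm$ are distinct, so ${\bf s}_*^{\pm T}{\bf p}_*^{\mp}=0$, and a unit left eigenvector cannot also annihilate its own right eigenvector without killing a basis of $\mathbb{R}^2$. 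This is cleaner and sidesteps the paper's enumeration, which as written is not an exhaustive list of singular matrices with orthogonal null vectors (e.g.\ $c_1=2$, $c_2=1$ in the paper's own parametrization is not proportional to any of the three listed forms), although the trace-zero conclusion the paper extracts from it is what actually matters and is valid. For the converse you replace the explicit parametrization by the coordinate-free fact that ${\bf N}={\bf B}_\lambda-\mu{\bf I}$ is nilpotent (Cayley--Hamilton), so for ${\bf N}\neq{\bf 0}$ one has $\mathrm{range}\,{\bf N}=\ker{\bf N}$ and $\ker{\bf N}^T=(\mathrm{range}\,{\bf N})^\perp$, forcing ${\bf s}_*^T{\bf p}_*=0$; your exclusion of ${\bf N}={\bf 0}$ ($f_v=g_u=0$, then ${\rm det}\,{\bf J}>0$ and ${\rm tr}\,{\bf J}<0$ force $f_u,g_v<0$, contradicting ${\rm tr}\,{\bf B}_\lambda>0$ from Proposition \ref{p4}) is the same sign argument as the paper's, phrased slightly differently. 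What each approach buys: yours is more structural, immune to incomplete case lists, and indicates how the statement would generalize (simple eigenvalue $\Leftrightarrow$ non-orthogonal left/right eigenvectors) beyond the $2\times 2$ setting; the paper's explicit matrices keep everything elementary and make the orthogonal eigenvector pair concretely computable. Your preliminary reduction (permissibility plus Propositions \ref{p2}--\ref{p3} and Equation \eqref{ie3} force the discriminant to be real and non-negative on $(a,b)$) matches the parenthetical remark in the paper's converse, so no gap there.
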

\begin{proof}
We will demonstrate both implications via contraposition. We first assume that ${\bf s}_*^T {\bf p}_*=0$ at some point $x_* \in (a,b)$. By elaborating possibilities on a case by case basis for a general $2\times 2$ matrix with zero determinant, we note the left and right eigenvectors of zero eigenvalue can only be perpendicular if the matrix is proportional to one of the following:
$$ \left( \begin{array}{cc} 0 & 0 \\ 0 & 0 \end{array} \right) , 
~~~~~~~
\left( \begin{array}{rr} 1 & 1 \\ -1 & -1 \end{array} \right) ,~~~~~~~
\left( \begin{array}{rr} 1 & -1 \\ 1 & -1 \end{array} \right) . 
$$

In all three cases, we have that the trace is zero.
Therefore, $\rm{tr}(-\mu_\lambda^\pm{\bf I}+ {\bf B}_\lambda) = -2\mu_\lambda^\pm+\rm{tr}({\bf B}_\lambda)=0.$ However, by Equation \eqref{ie3}, this implies that $[\rm{tr}( {\bf B}_{\lambda})]^2-4 {\rm det}({\bf B}_{\lambda}) = 0$, contradicting the assumption that this quantity remains positive.

For the converse, we assume that $[\rm{tr}( {\bf B}_{\lambda})]^2-4 {\rm det}({\bf B}_{\lambda}) = 0$ at some point $x_*\in (a,b)$ (noting that if this term were negative, then, by Proposition \ref{p4}, $\lambda$ would not be permissible and we would have an immediate contradiction). By using Equation \eqref{ie3} again we see that $\rm{tr}(-\mu_\lambda^\pm{\bf I}+ {\bf B}_\lambda)=0,$ so this matrix then has repeated zero eigenvalues. Any real $2\times 2$ matrix with zero determinant and trace can be written as either,
$$ \left( \begin{array}{cc} c_1 & c_2 \\ -\frac{c_1^2}{c_2} & -c_1 \end{array} \right) , ~~~\text{or}~~~~
\left( \begin{array}{cc} 0 & 0 \\ 0 & 0 \end{array} \right) , 
$$
for real $c_1, c_2$, with $c_2 \neq 0$. The first of these has one left and one right eigenvector, given by ${\bf s}_* = (c_1/c_2,1)$ and ${\bf p}_* = (-c_2/c_1,1)$, which are orthogonal. The second of these would imply $f_v=g_u=0$, and, along with the assumption that ${\rm det}({\bf J})>0$, we would have $f_ug_v > 0$, so these terms must have the same sign. But noting that ${\rm tr}({\bf J})<0$ and ${\rm tr}({\bf B}_\lambda)>0$, by assumption on the stability of the zero mode, permissibility of $\lambda$, and Proposition \ref{p4}, we have $f_u + g_v < 0$ and $f_u + g_v/d - \lambda/d>0$, thus, we see that they must have opposite signs, demonstrating that this case is not possible. Therefore, if $[\rm{tr}( {\bf B}_{\lambda})]^2-4 {\rm det}({\bf B}_{\lambda}) = 0$ at some point $x_*\in (a,b)$ for permissible $\lambda$, then ${\bf s}^T_*{\bf p}_*=0$ at this point.
\end{proof}

Given Propositions \ref{p4}, \ref{p5} and \ref{nonorth}, which all follow from the definition of permissible growth rates, we can now consider where  solutions of the form given in Equation \eqref{eq:WKBsolu} for permissible $\lambda \geq0$ are valid. We will assume throughout that  $[{\rm tr}({\bf B}_\lambda)]^2-4 {\rm det}({\bf B}_\lambda)$ only has simple zeros, noting that non-simple zeros would require mathematical fine tuning of parameters. When $[{\rm tr}({\bf B}_\lambda)]^2-4 {\rm det}({\bf B}_\lambda)\geq0$, we have by  the first part of Proposition \ref{p5} that the reciprocal of $[\mu_\lambda^\pm]^{1/4}$ with permissible $\lambda\geq0$ is nonsingular and thus Equation \eqref{eq:WKBsolu} with permissible $\lambda\geq0$ might only possess a singularity at points where left and right eigenvectors are orthogonal, that is ${\bf s}_*^T {\bf p}_*=0$. Then, for a region with 
 ${\rm tr}({\bf B}_\lambda) >0$, and $[{\rm tr}({\bf B}_\lambda)]^2-4 {\rm det}({\bf B}_\lambda)\geq 0$ we have that $\lambda$ is permissible by Proposition \ref{p4} and 
 that the reciprocal of $[\mu_\lambda^\pm]^{1/4}$  is  nonsingular, even at  the domain edges by the additional use of the second part of 
  Proposition \ref{p5}. We  define  the closure of the maximal open set where
the associated WKBJ solution for this $\lambda$ is non-singular 
  by $\mathcal{T}_\lambda$. By the above reasoning and Proposition \ref{nonorth} each boundary of this region must either be a domain boundary, or where $[{\rm tr}({\bf B}_\lambda)]^2-4 {\rm det}({\bf B}_\lambda)=0$ as ${\bf s}_*^T {\bf p}_*\neq 0$ on the interior of $\mathcal{T}_\lambda$. Whenever the latter case occurs,  $\mathcal{T}_\lambda \neq [0,1]$ and we have to determine what happens to the WKBJ solution on approaching the point where ${\bf s}_*^T {\bf p}_*\neq 0$ and beyond.

\subsection{Behaviour Near Singular Points}\label{Sec.BoundsOfWKB}

If $\mathcal{T}_\lambda=[0,1]$, for a given permissible  $\lambda\geq 0$, then we can take $a=S_0^\pm=0$ in \eqref{eq:WKBsolu} to find a non-trivial solution that satisfies the homogeneous Neumann boundary conditions. If instead $\mathcal{T}_\lambda$ is a proper subset of  the whole interval $[0,1]$ then we assume for simplicity that $\mathcal{T}_\lambda$ is a single contiguous interval, implying that $[{\rm tr}({\bf B}_\lambda)]^2-4 {\rm det}({\bf B}_\lambda)$ has at most two zeros for $x \in [0,1]$, and note that generalizing beyond a single interval is straightforward. At a zero of $[{\rm tr}({\bf B}_\lambda)]^2-4 {\rm det}({\bf B}_\lambda)$, by \eqref{ie3} we have the double eigenvalues, $\mu_\lambda^-=\mu_\lambda^+$, of ${\bf B}_\lambda$ and we recap that by Proposition \ref{nonorth} we have ${\bf s}_*^T {\bf p}_*=0$ at such a point, denoted $x_*$, and thus anticipate a singularity in the solution given by Equation \eqref{eq:WKBsolu}. In SI Section \ref{Sec.PropsOfWKB}, we explicitly show that for   $y>x>x_*$ the integral
\beq \label{integral} \exp \left[ \int_x^{y} \frac{{\bf s_*}(\bar{x})\cdot{\bf p}_*'(\bar{x})}{{\bf s_*}(\bar{x})\cdot{\bf p}_*(\bar{x})} \mathrm{d}\bar{x} \right],
\eeq
will blow up {\bl as $x \searrow x_*$,} with analogous behaviour when approaching such a singular point from the left. However, in this SI section, we also show that this integral will scale such that by imposing effective Dirichlet conditions at the singular point, we can retain a bounded solution. In this way we can construct leading order solutions which are bounded and defined on $\mathcal{T}_\lambda$. 

 With the previously stated assumption that  any zero of $[{\rm tr}({\bf B}_\lambda)]^2-4 {\rm det}({\bf B}_\lambda)$ is simple, so that $[{\rm tr}({\bf B}_\lambda)]^2-4 {\rm det}({\bf B}_\lambda)$ monotonically passes through zero at such a singular point and by Proposition \ref{nonorth} we thus have that $[\rm{tr}( {\bf B}_{\lambda})]^2-4 {\rm det}({\bf B}_{\lambda}) < 0$ outside of $\mathcal{T}_\lambda$. By Proposition \eqref{p4}, this implies that this value of $\lambda$ is not permissible outside of this interval, and hence if any WKBJ solutions exist, they  cannot simultaneously satisfy homogeneous Dirichlet or Neumann conditions at boundaries on both the left and right. However, the scaling of the integral \eqref{integral} requires a solution which is zero at the singular point, while a zero derivative is always required at a domain boundary. Thus the only WKBJ solution outside of $\mathcal{T}_\lambda$ associated with the growth rate $\lambda$ is the zero solution. We can then extend the nontrivial WKBJ solution defined on $\mathcal{T}_\lambda$ by the zero solution to obtain a leading order solution across the whole domain for a mode with fixed growth rate $\lambda$.

% At the boundary (or boundaries) of $\mathcal{T}_\lambda$ where 
% $[{\rm tr}({\bf B}_\lambda)]^2-4 {\rm det}({\bf B}_\lambda)=0$}

% By Propositions \ref{p3} and \ref{p4}, the only trigonometric solution of the form \eqref{eq:WKBsolu} with $\Re(\lambda)>0$ that can satisfy homogeneous Neumann or Dirichlet conditions is the zero solution, as we would otherwise have $\sqrt{\mu_\lambda^\pm}\in\mathbb{C}\setminus\mathbb{R}$ by Proposition \ref{p4}. At this boundary, there is a potential blow-up in solutions given by \eqref{eq:WKBsolu}, and hence we refer to the boundary as a singular point. To accommodate this singular point, we must understand how the solution scales near such a singularity, and how to choose the arbitrary constants $C_0^\pm$ and $S_0^\pm$t.  

We can now match these different sets of boundary conditions depending on the number of singular points appearing in the domain. These will then lead to different wave number selection conditions. We note in particular that in matching Neumann boundaries, we only differentiate the trigonometric functions in \eqref{eq:WKBsolu}, as only terms involving these derivatives will be retained to leading order. We then have the following leading order solutions (modes) associated with each eigenvalue $\mu_\lambda^\pm$ of ${\bf B}_\lambda = {\bf D}^{-1}{\bf J}_\lambda(x)$, depending on the number of singular points:
\begin{itemize}
\item no singular points, so $\mathcal{T}_\lambda=[0,1]$ and the solution is
  \begin{subequations} \label{eq:PiecewiseWKBsolu}
\begin{equation}
  \label{eq:PiecewiseWKBsolu_full}
{\bf w}^\pm(x,t) = e^{\lambda t} \exp \left[-\int_0^x \frac{{\bf s_*}(\bar{x})^T{\bf p}_*'(\bar{x})}{{\bf s_*}(\bar{x})^T{\bf p}_*(\bar{x})} \mathrm{d}\bar{x} \right]
 \frac{C_{0}^{{\pm}}}{[\mu_\lambda^\pm(x)]^{1/4}} 
\cos\left(\frac 1\epsilon  \int_0^x \sqrt{\mu_\lambda^\pm(\bar{x})}\mathrm{d}\bar{x}\right) 
{\bf p}_*(x), ~~~~ ~~~~~
\int_0^1 \sqrt{\mu_\lambda^\pm(\bar{x})}\mathrm{d}\bar{x} = n^\pm\pi\epsilon;
\end{equation}
\item one singular point $x_*(\lambda)>0$, so without loss of generality, $(x_*,1)= \mathcal{T}_\lambda$, with solution
  \begin{equation}
    \label{eq:PiecewiseWKBsolu_right}
    {\bf w}^\pm(x,t) =  e^{\lambda t} \exp \left[ \int_x^{1} \frac{{\bf s_*}(\bar{x})^T{\bf p}_*'(\bar{x})}{{\bf s_*}(\bar{x})^T{\bf p}_*(\bar{x})} \mathrm{d}\bar{x} \right] \frac{S_{0}^{{ \pm}}}{[\mu_{\lambda}^\pm(x)]^{1/4}} 
\sin\left(\frac 1\epsilon  \int_{x_*}^x \sqrt{\mu_{\lambda }^\pm(\bar{x})}\mathrm{d}\bar{x}\right) 
{\bf p}_*(x), 
 \int_{x_*}^1 \sqrt{\mu_{\lambda }^\pm(\bar{x})}\mathrm{d}\bar{x} = \left(n^\pm+\frac{1}{2}\right)\pi\epsilon,
\end{equation}
for $x \in \mathcal{T}_\lambda$, and zero otherwise;
\item
  two singular points $x_*(\lambda),~x_{**}(\lambda)\in(0,1)$ delimiting the $\mathcal{T}_\lambda$ set, i.e. $\mathcal{T}_\lambda=(x_*,~x_{**})$, with solution
  \begin{equation}
    \label{eq:PiecewiseWKBsolu_interval}
    {\bf w}^\pm(x,t) = e^{\lambda t} \exp \left[ \int_x^{x_{**}} \frac{{\bf s_*}(\bar{x})^T{\bf p}_*'(\bar{x})}{{\bf s_*}(\bar{x})^T{\bf p}_*(\bar{x})} \mathrm{d}\bar{x} \right]
 \frac{S_{0}^{{\pm}}}{[\mu_{\lambda}^\pm(x)]^{1/4}} 
\sin\left(\frac 1\epsilon  \int_x^{x_{**}} \sqrt{\mu_{\lambda}^\pm(\bar{x})}\mathrm{d}\bar{x}\right) 
{\bf p}_*(x), 
 \int_{x_*(\lambda)}^{x_{**}(\lambda)} \sqrt{\mu_{\lambda }^\pm(\bar{x})}\mathrm{d}\bar{x} = n^\pm \pi\epsilon,
\end{equation}
for $x \in \mathcal{T}_\lambda$, and zero otherwise;
\end{subequations}
\end{itemize}
where $C_{0}^{{\pm}},~S_{0}^{{\pm}}$ are arbitrary real constants.  We remark that the mode selection constraint is defined over $\mathcal{T}_\lambda$, and so will depend on $\lambda$ through both the eigenvalues $\mu_{\lambda}^\pm$ and any singularities, as highlighted in the integral bounds. In this way, the latter two solutions given by Equations \eqref{eq:PiecewiseWKBsolu_right}-\eqref{eq:PiecewiseWKBsolu_interval} are continuously extended by zero outside of $\mathcal{T}_\lambda$, and equal to zero at the singular points.

We remark that these WKBJ solutions applied to systems without heterogeneity in the reaction kinetics collapse in both components to functions of the form
$$e^{\lambda t} (C_0 \cos(n \pi x) + S_0 \sin(n \pi x)).$$ However, the meaning of $n$ (denoted as $n^\pm$) in the heterogeneous case does not correspond to the spatial frequency of a given mode, as we will see in an example. We can now describe some additional structural features of the spaces $\mathcal{T}_\lambda$ and how they change for different growth rates $\lambda$. In particular, the set $\mathcal{T}_{\lambda}$ is monotonic in $\lambda$ in the following sense.

\begin{proposition} \label{c1} 
 If $\mathcal{T}_{\lambda_2}\neq\emptyset$ and $0\leq \lambda_1\leq\lambda_2$ then $\mathcal{T}_{\lambda_2}\subseteq \mathcal{T}_{\lambda_1}$. If $\mathcal{T}_{\lambda_1}\neq [0,1]$, and ${0\leq}\lambda_1<\lambda_2$, then we have the stricter inclusion $\mathcal{T}_{\lambda_2}\subset \mathcal{T}_{\lambda_1}$.
\end{proposition}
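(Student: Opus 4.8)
The plan is to reduce the whole statement to the two pointwise inequalities that, by the propositions just proved, cut out $\mathcal{T}_\lambda$. Write $h(x,\lambda):=[{\rm tr}({\bf B}_\lambda(x))]^2-4\,{\rm det}({\bf B}_\lambda(x))$. By Proposition \ref{p4}, permissibility of a real $\lambda\ge0$ at $x$ is equivalent to ${\rm tr}({\bf B}_\lambda(x))>0$ and $h(x,\lambda)\ge0$; by Proposition \ref{p5} this keeps $\mu_\lambda^\pm(x)$ bounded away from $0$; and by Proposition \ref{nonorth} the singular points of the mode \eqref{eq:WKBsolu}, where ${\bf s}_*^T{\bf p}_*=0$, are exactly the zeros of $h(\cdot,\lambda)$. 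Hence the interior of the region on which \eqref{eq:WKBsolu} is non-singular is $\mathcal{T}_\lambda^i=\{x\in(0,1):{\rm tr}({\bf B}_\lambda(x))>0 \text{ and } h(x,\lambda)>0\}$, with $\mathcal{T}_\lambda=\overline{\mathcal{T}_\lambda^i}$. I would then record two monotonicity facts in $\lambda$: (i) ${\rm tr}({\bf B}_\lambda(x))={\rm tr}({\bf B}_0(x))-\lambda(1+1/d)$ is strictly decreasing; (ii) for fixed $x$, $\lambda\mapsto h(x,\lambda)$ is the convex quadratic from the proof of Proposition \ref{newp}, with real roots $\tilde{\lambda}^-(x)\le\tilde{\lambda}^+(x)$ and, from that same proof, ${\rm tr}({\bf B}_{\tilde{\lambda}^+(x)}(x))<0$ whenever ${\rm tr}({\bf J}(x))<0$ and ${\rm tr}({\bf B}_0(x))>0$. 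Together these yield that any $\lambda>0$ with ${\rm tr}({\bf B}_\lambda(x))>0$ and $h(x,\lambda)>0$ must satisfy $\lambda<\tilde{\lambda}^-(x)$ (if $\lambda\ge\tilde{\lambda}^+(x)$ the trace is negative; if $\tilde{\lambda}^-(x)\le\lambda<\tilde{\lambda}^+(x)$ then $h(x,\lambda)\le0$), and that $h(x,\cdot)$ is strictly decreasing on $(-\infty,\tilde{\lambda}^-(x)]$.

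For the inclusion $\mathcal{T}_{\lambda_2}\subseteq\mathcal{T}_{\lambda_1}$ with $0\le\lambda_1\le\lambda_2$, I would fix $x\in\mathcal{T}_{\lambda_2}^i$ and check both strict inequalities for $\lambda_1$; the case $\lambda_2=0$ forces $\lambda_1=0$ and is trivial, so take $\lambda_2>0$. The trace condition is immediate, since ${\rm tr}({\bf B}_{\lambda_1}(x))={\rm tr}({\bf B}_{\lambda_2}(x))+(\lambda_2-\lambda_1)(1+1/d)>0$. For the discriminant, the monotonicity step gives $\lambda_2<\tilde{\lambda}^-(x)$, so $0\le\lambda_1\le\lambda_2<\tilde{\lambda}^-(x)$ and hence $h(x,\lambda_1)\ge h(x,\lambda_2)>0$. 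Thus $\mathcal{T}_{\lambda_2}^i\subseteq\mathcal{T}_{\lambda_1}^i$, and passing to closures gives $\mathcal{T}_{\lambda_2}\subseteq\mathcal{T}_{\lambda_1}$. (The hypothesis $\mathcal{T}_{\lambda_2}\neq\emptyset$ only rules out the vacuous case and incidentally forces $\mathcal{T}_{\lambda_1}\neq\emptyset$.)

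For the strict inclusion I would additionally assume $\lambda_1<\lambda_2$, $\mathcal{T}_{\lambda_1}\neq[0,1]$, and, in the only non-trivial case, $\mathcal{T}_{\lambda_2}\neq\emptyset$; then $\mathcal{T}_{\lambda_2}\subseteq\mathcal{T}_{\lambda_1}$ by the first part, so it suffices to produce a point of $\mathcal{T}_{\lambda_1}\setminus\mathcal{T}_{\lambda_2}$. Because $\mathcal{T}_{\lambda_1}^i$ is a nonempty open subset of the connected interval $(0,1)$ that is proper (since $\mathcal{T}_{\lambda_1}\neq[0,1]$), it has a relative boundary point $x_*\in(0,1)$; there $x_*\notin\mathcal{T}_{\lambda_1}^i$ while $x_*$ is a limit of points of $\mathcal{T}_{\lambda_1}^i$, so ${\rm tr}({\bf B}_{\lambda_1}(x_*))\ge0$ and $h(x_*,\lambda_1)\ge0$ with at least one equality; a vanishing trace would force $h(x_*,\lambda_1)=-4\,{\rm det}({\bf B}_{\lambda_1}(x_*))<0$ (using ${\rm det}({\bf J}_{\lambda_1})>0$ for real $\lambda_1\ge0$), so in fact ${\rm tr}({\bf B}_{\lambda_1}(x_*))>0$, $h(x_*,\lambda_1)=0$, and $x_*\in\overline{\mathcal{T}_{\lambda_1}^i}=\mathcal{T}_{\lambda_1}$. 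Hence $\lambda_1$ is a root of $h(x_*,\cdot)$, and since the trace is still positive at it, it is the smaller root $\tilde{\lambda}^-(x_*)$ (the standing conditions exclude a double root here). Now $\lambda_2>\tilde{\lambda}^-(x_*)$ gives either $h(x_*,\lambda_2)<0$ (when $\lambda_2<\tilde{\lambda}^+(x_*)$) or ${\rm tr}({\bf B}_{\lambda_2}(x_*))\le{\rm tr}({\bf B}_{\tilde{\lambda}^+(x_*)}(x_*))<0$ (when $\lambda_2\ge\tilde{\lambda}^+(x_*)$). In either case a strict failure of permissibility holds at $x_*$, hence, by continuity, on a whole neighbourhood of $x_*$ disjoint from $\mathcal{T}_{\lambda_2}^i$; so $x_*\notin\overline{\mathcal{T}_{\lambda_2}^i}=\mathcal{T}_{\lambda_2}$, giving $x_*\in\mathcal{T}_{\lambda_1}\setminus\mathcal{T}_{\lambda_2}$ and $\mathcal{T}_{\lambda_2}\subsetneq\mathcal{T}_{\lambda_1}$.

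The routine parts — the trace estimate and the signs of the terms of $h$ — should be painless. The two places needing care are (i) the clean identification of $\mathcal{T}_\lambda$ with the closure of $\{{\rm tr}({\bf B}_\lambda)>0,\ h(\cdot,\lambda)>0\}$, so that interiors, closures and boundary points behave as I used them (here the standing simple-zero hypothesis on $h$ keeps the boundary structure tame), and (ii) in the strict case, locating the boundary point $x_*\in(0,1)$ with $h(x_*,\lambda_1)=0$ and showing it is expelled from $\mathcal{T}_{\lambda_2}$ as $\lambda$ increases. Everything else is a consequence of the single observation that both ${\rm tr}({\bf B}_\lambda(x))$ and, on the relevant range of $\lambda$, $h(x,\lambda)$ are decreasing in $\lambda$, so that raising $\lambda$ can only shrink the permissible set.
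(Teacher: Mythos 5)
Your argument is correct, and its skeleton matches the paper's: the non-strict inclusion is exactly the pointwise content of Proposition \ref{newp} (which you essentially re-derive rather than cite), and the strict inclusion is obtained by locating an interior boundary point $x_*$ of $\mathcal{T}_{\lambda_1}$ where $[{\rm tr}({\bf B}_{\lambda_1})]^2-4\,{\rm det}({\bf B}_{\lambda_1})=0$ with ${\rm tr}({\bf B}_{\lambda_1})>0$, and showing it is expelled from $\mathcal{T}_{\lambda_2}$. Where you diverge is in how the expulsion is proved: the paper computes the $\lambda$-derivative \eqref{nmon} of the discriminant at the boundary and signs it, which is a local argument valid while ${\rm tr}({\bf B}_\lambda)\geq 0$, whereas you use the global quadratic structure $P(\lambda)$ from the proof of Proposition \ref{newp}, identifying $\lambda_1$ with the smaller root $\tilde{\lambda}^-(x_*)$ and invoking ${\rm tr}({\bf B}_{\tilde{\lambda}^+(x_*)})<0$ to handle the regime $\lambda_2\geq\tilde{\lambda}^+(x_*)$, where the discriminant turns positive again but the trace condition fails. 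This buys a genuinely uniform statement for arbitrary $\lambda_2>\lambda_1$ (not just nearby values), and so fills in the step the paper passes over rather quickly; the price is the extra bookkeeping with closures, interiors and the identification of $\mathcal{T}_\lambda$ with $\overline{\{{\rm tr}({\bf B}_\lambda)>0,\ [{\rm tr}({\bf B}_\lambda)]^2-4\,{\rm det}({\bf B}_\lambda)>0\}}$, which you rightly flag and which is consistent with the paper's definition via Propositions \ref{p4} and \ref{nonorth}. The degenerate case $\mathcal{T}_{\lambda_1}=\mathcal{T}_{\lambda_2}=\emptyset$, where the strict inclusion as literally stated cannot hold, is a defect of the proposition's phrasing shared by the paper's own proof, so your handling of it is fine.
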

\begin{proof}
The first part of this for $0\leq \lambda_1\leq\lambda_2$ follows from Proposition \ref{newp}. We then need to show that if $\lambda_1 < \lambda_2$, then $\mathcal{T}_{\lambda_1}\not\subseteq \mathcal{T}_{\lambda_2}$. We note that at least one of the boundaries of $\mathcal{T}_\lambda$, $a(\lambda)$ and/or $b(\lambda)$, are zeros (in the spatial variable $x$) of ${\rm tr}({\bf B}_\lambda(x))^2-4{\rm det}({\bf B}_\lambda(x))$. At such a boundary we compute the derivative with respect to $\lambda$, finding
\begin{equation}\label{nmon}
    \partial_\lambda \left[({\rm tr}({\bf B}_\lambda(x))^2-4{\rm det}({\bf B}_\lambda(x))\right] = %-2{\rm tr}({\bf D}^-1)+4{\rm tr}({\bf J}_\lambda) < 0,
    { -2 {\rm tr}({\bf B}_\lambda) {\rm tr}({\bf D}^{-1})+4 {\rm tr}({\bf J}_\lambda) \det ({\bf D}^{-1})<0,}
\end{equation}
which follows by signing each term. As ${\rm tr}({\bf B}_\lambda(x))^2-4{\rm det}({\bf B}_\lambda(x)) > 0$ for $a(\lambda) < x < b(\lambda)$, we have that if $a(\lambda_1) > 0$ then $a(\lambda_1) < a(\lambda_2)$ and if $b(\lambda_1) < 1$ then $b(\lambda_1) > b(\lambda_2)$, so the strict inclusion $\mathcal{T}_{\lambda_2}\subset \mathcal{T}_{\lambda_1}$ follows.
\end{proof}

Hence, the onset of instability (the boundary of $\mathcal{T}_0$) is given by zeros of $[{\rm tr}({\bf B}_0)]^2-4 {\rm det}({\bf B}_0)=0$, for which ${\rm tr}({\bf B}_0)>0$. More generally, the onset of instability with a growth rate $\lambda\geq0$ is given by the location of zeros of ${\bf s}_*^T {\bf p}_*=0$, i.e. zeros of $[{\rm tr}({\bf B}_\lambda)]^2-4 {\rm det}({\bf B}_\lambda)=0$  while ${\rm tr}({\bf B}_\lambda>0$). Therefore this boundary shifts with $\lambda$, while monotonicity of the Turing space with respect to $\lambda$ holds. Hence, the sufficient condition for (the onset of) instability can be identified with $\mathcal{T}_0$, which is a good approximation of where we will find Turing patterns, as it corresponds to the region of support of a mode with positive value of $\lambda$ for sufficiently small $\epsilon$. This also justifies considering the fundamental constraint \eqref{eq:FundConstr} in this regime, which does not depend on $\epsilon$. 

%In summary: if there is an interval $(a,b)$ where the local Turing conditions are satisfied,
%\begin{equation}
%  \label{eq:localDDI}
%  \mbox{tr}\left({\bf J}(x)\right) < 0, ~~~ \mbox{det}\left({\bf J }(x)\right) >0, ~~~{\rm tr}\left({{\bf D}^{-1}{\bf J}(x)}\right)>0, ~~~ [{\rm tr}\left({{\bf D}^{-1}{\bf J}(x)}\right)]^2-4 {\rm det}\left({{\bf D}^{-1}{\bf J}(x)}\right)>0,\quad \forall x\in(a,b),
%\end{equation}
%then there exists a small enough $\epsilon$ where heterogeneous perturbations grow (more precisely one can identify whether the $n=1$ mode with the smallest support fits in the ($a,b)$ interval \emph{a posteriori}, though the region $\mathcal{T}_0$ given by \eqref{eq:localDDI} is typically sufficient if it is large enough relative to $\epsilon$).

In summary, we have the following conditions for instability:
\begin{theo}[$\lambda$-Dependent Heterogeneous Case]\label{heteroturing} Let $\lambda > 0$, $0<\epsilon \ll 1$, and assume that the quantity {\bl $[{\rm tr}( {\bf B}_\lambda(x))]^2-4 {\rm det}({\bf B}_\lambda(x))$} has no more than two simple zeros for $x \in [0,1]$, and is positive between these two zeros. If we assume stability to perturbations in the absence of diffusion, i.e.,
\beq {\rm tr}( {\bf J}(x)) < 0, ~~ {\rm det}({\bf J}(x)) > 0, ~~~ \text{for all }x \in [0,1], \label{hstin}
\eeq
then there exists a non-homogeneous perturbation ${\bf w}$ satisfying \eqref{weqn1} (to leading order in $\epsilon$) which grows as $e^{\lambda t}$ in the interval $x \in (a(\lambda),b(\lambda))$ if
\beq \label{tchet}  {\rm tr}( {\bf B}_\lambda(x)) > 0, ~~ [{\rm tr}( {\bf B}_\lambda(x))]^2-4 {\rm det}({\bf B}_\lambda(x)) > 0, ~~~ \text{for all }x \in (a(\lambda),b(\lambda)), \eeq
and if there exists an integer $n^\pm>0$ such that
\beq\label{hetselect} \int_{a(\lambda)}^{b(\lambda)} \sqrt{\mu_{\lambda}^\pm(\bar{x})}\mathrm{d}\bar{x} = \left(n^{\pm}+ \frac{K}{2}\right)\pi \epsilon,
\eeq
where $a(\lambda) = \max(0, \min(\{x: [{\rm tr}( {\bf B}_\lambda(x))]^2-4 {\rm det}({\bf B}_\lambda(x)) =0\}))$, $b(\lambda) = \min(1, \max(\{x: [{\rm tr}( {\bf B}_\lambda(x))]^2-4 {\rm det}({\bf B}_\lambda(x)) =0\}))$ and $K=0$ if either $a(\lambda) = 0$ and $b(\lambda)=1,$ or if  $0 <a(\lambda) <b(\lambda)<1;$ otherwise $K=1$. 
\end{theo}
\begin{proof}
We assume without loss of generality that $(a(\lambda),b(\lambda))$ has one of the forms given in \eqref{eq:PiecewiseWKBsolu}. By Proposition \ref{p3} we have no loss in specialising to strictly real $\lambda$. Assuming conditions \eqref{tchet} are satisfied, Propositions \ref{p2} and \ref{p4} imply that $\mu_\lambda^\pm$ is permissible, real, and positive. From this and Proposition \ref{p9}, we have that the functions given by \eqref{eq:PiecewiseWKBsolu} are real and bounded for all $x \in (a(\lambda),b(\lambda))$. To leading order in $\epsilon$, such solutions satisfy \eqref{weqn1}, alongside the zero solution. By the scaling arguments in SI Section \ref{Sec.PropsOfWKB}, we can see that the solutions given by \eqref{eq:PiecewiseWKBsolu} meet this zero solution at any internal boundary (i.e.~any zero of {\bl $[{\rm tr}( {\bf B}_\lambda(x))]^2-4 {\rm det}({\bf B}_\lambda(x))$} in the interval $(0,1)$). So to leading order, such a piecewise solution satisfies $\eqref{weqn1}$ and the Neumann boundary conditions at $\{0,1\}$.
\end{proof}

 Analogous criteria for the other possibilities for $\mathcal{T}_\lambda$, depending on the sign pattern of {\bl $[{\rm tr}( {\bf B}_\lambda(x))]^2-4 {\rm det}({\bf B}_\lambda(x))$} across the domain, are readily determined. Further we note that the integers $n^\pm$ play an analogous role to the wave number $n$ in the homogeneous setting, but that they will not correspond to spatial frequency, and the two roots will have quantitatively different properties, so must be considered as distinct. For sufficiently small $\epsilon$, these conditions predict that a pattern will form in the interval $(a,b)$, and intervals for which no value of $\lambda$ exists will return to the heterogeneous steady state ${\bf u^*}$ after a small perturbation (up to leading order in $\epsilon$). We will confirm this numerically in Section \ref{Illustration}. Additionally, the fact that unstable modes do not share the same support is shown explicitly in Proposition \ref{c1}, and employed to explain some properties of patterns in heterogeneous domains. 
 
We remark that \eqref{hetselect} depends on a given $\lambda$ both in the integrand and the bounds of the integral, but in principle for a given $\epsilon$ and $n^{\pm}$, one can use this condition to find at most two values of $\lambda$ indicating an instability, one for each eigenvalue. Hence, any instability will permit a discrete number of unstable modes, each with a possibly different support, and the growth rate of any instability will, thus, depend locally on the permissible growth rates. We give further structural details regarding $n^\pm$ and $\lambda$ in SI Section \ref{App.Lambdan}.

Further, it should be noted that Criterion \ref{heteroturing} can be generalized to obtain Criterion \ref{hettheo} by relaxing the restriction to a single interval, and considering a suitable choice of arbitrarily small $\epsilon$. The use of the interior of $\mathcal{T}_0$ in this limit for Criterion \ref{hettheo}  is further supported by Proposition \ref{c1}, and noting that instabilities need not grow on the edges of $\mathcal{T}_0$ for the WKBJ solutions at leading order (in particular this is the case when the homogeneous Dirichlet boundary conditions are imposed to retain bounded solutions there).
{\blue However, although linked, we highlight that Criterion \ref{hettheo}  and  \ref{heteroturing} are different. Criterion  \ref{heteroturing} gives conditions  for the presence a specific unstable WKBJ Turing mode, which is subject to the wavemode selection relation of Eqn.~(\ref{hetselect}). Aside from a trivial relaxation of the requirement that the WKBJ Turing mode may only have support  within a single interval, Criterion \ref{hettheo} gives conditions that ensure that there is at least one destabilising WKBJ  Turing mode with a positive growth rate for sufficiently small $\epsilon$, effectively by amalgamating Criterion \ref{heteroturing} across all possible modes. Hence Criterion \ref{hettheo} is the most useful in classifying whether or not there is a Turing instability. Nonetheless, Criterion \ref{heteroturing} is a fundamental stepping  stone to deriving Criterion \ref{hettheo} and, in addition,  provides detailed information, for example  about the location of the support of the WKBJ Turing mode solutions and the   relation between  the growth rate to the non-dimensional diffusion coefficient, via Eqn.~(\ref{hetselect}).  }

\section{Illustrative example. The Schnakenberg model}\label{Illustration}
To illustrate our results, we consider the Schnakenberg model with spatially heterogeneous sources. Let 
$$ {\bf u} = \left( \begin{array}{c} u_1 \\ u_2 \end{array} \right), $$ 
so that $u_1$ is the nominal inhibitor and $u_2$ is the nominal activator. The kinetics are
$$ {\bf F}({\bf u},x)= \left( \begin{array}{c} \beta(x) -u^2_2u_1 \\ u^2_2u_1-\alpha u_2+\zeta(x) \end{array} \right), $$
with $\alpha,~ \beta(x), ~\zeta(x) >0$. 
As is typical, and to simplify the system, we assume $\beta(x)+\zeta(x)=1$. Hence,  accurate to $O(\epsilon^2)$, the steady state is given by 
% $$ {\bf u}^* = \left( \begin{array}{c}  \dfrac{\alpha^2 \beta(x)}{(\beta(x)+\zeta(x))^2} \\ ~~  \\ \dfrac 1 \alpha(\beta(x)+\zeta(x)) \end{array} \right),  $$
 $$ {\bf u}^* = 
 \left( \begin{array}{c}
\alpha^2 \beta(x) \\ ~~
 \\ \dfrac 1 \alpha
 \end{array} \right), 
 $$
 {\bl Note that ${\bf u}^*(x)$ has no dependence on $\epsilon$ at this order of approximation by construction and furthermore 
 $\beta(x)$ is taken so that  ${\bf u}^* (x)$  
 satisfies the boundary conditions at the domain edges, as required in the derivation of  Eqn.~\eqref{hetss} and discussed in detail 
 in Section \ref{Overview}.  We proceed by evaluating the Jacobian, which is 
 given by}
 %$$ {\bf J} = \left( \begin{array}{c|c}
 %-\dfrac 1{ \alpha^2} (\beta(x)+\zeta(x))^2 & -2\alpha \dfrac{ \beta(x)}{\beta(x)+\zeta(x)}    \\ ~~ &  ~~ 
 %\\ \dfrac 1{ \alpha^2} (\beta(x)+\zeta(x))^2 & 
 %\alpha\left(  \dfrac{\beta(x)-\zeta(x)}{\beta(x)+\zeta(x)}  \right) 
 %\end{array} \right), 
 %$$ 
  $$ {\bf J} = \left(
  \begin{array}{c|c}
 -\dfrac 1{ \alpha^2} & -2\alpha \beta(x)
 \\ ~~ &  ~~ 
 \\ \dfrac 1{ \alpha^2}  & 
 \alpha\left( 2 \beta(x)-1  \right) 
 \end{array} 
 \right), 
 $$ 
 so that 
 \beq \label{locturcond}
 {\rm tr}({\bf J}) = \alpha \left( 2 \beta(x)-1  \right)  - \dfrac 1{ \alpha^2}, ~~~ {\rm det}({\bf J}) = \dfrac 1 \alpha , ~~~
 {\rm tr}({\bf D}^{-1} {\bf J}) =  \dfrac \alpha d\left( 2 \beta(x)-1  \right)  - \dfrac 1{ \alpha^2}, \\ \nonumber  [{\rm tr} ({\bf D}^{-1} {\bf J} )]^2 - 4 {\rm det}({\bf D}^{-1} {\bf J}) = \dfrac{(2\beta(x)-1)^2\alpha^6+(-4\beta(x)-2)d \alpha^3+d^2}{\alpha^4 d^2}. 
 \eeq
 {\bl We first note that the second constraint of the Turing condition, Eqn. \ref{hstin}, is automatically satisfied since  ${\rm det}({\bf J})>1.$ } 	
 	  To satisfy the Turing conditions \eqref{tchet} for $\lambda=0$ we require 
  \begin{subequations} \label{schnackc1}
\begin{equation}
  \label{schnackc1a}
  4\alpha^6\beta(x)^2-4(\alpha^6+\alpha^3 d)\beta(x)+\alpha^6-2\alpha^3 d+d^2>0,
  \end{equation} \begin{equation}\label{schnackc1b}
  \beta(x) > \dfrac 12 \left( 1+ \dfrac d{\alpha^3}  \right) ,  ~~~~  \beta(x) < \dfrac 12 \left( 1+ \dfrac 1{\alpha^3}\right).
  \end{equation}
  \end{subequations}
Thus, inequalities \eqref{schnackc1b} require $d<1$, as standard. Condition \eqref{schnackc1a} forces $\beta(x)$ to lie outside of the roots of this quadratic, i.e.,
$$ \beta(x) > \dfrac 12 \left( 1+ \dfrac d{\alpha^3}  \right)+\sqrt{\dfrac d{\alpha^3}}, ~~ \text{or} ~~\beta(x) < \dfrac 12 \left( 1+ \dfrac d{\alpha^3}  \right)-\sqrt{\dfrac d{\alpha^3}}. $$
The second of these inequalities cannot be reconciled with the first inequality of \eqref{schnackc1b}. We then have the conditions on the parameters for a Turing instability {\bl (at the marginal case of $\lambda=0$)} are that $\alpha >0$, $0 < d < 1$, and for all $x \in \mathcal{T}_0$, 
$$  \dfrac 12 \left( 1+ \dfrac d{\alpha^3}\right)+\sqrt{\dfrac d{\alpha^3}} < \beta(x) < \dfrac 12 \left( 1+ \dfrac 1{\alpha^3}  \right),$$
{\bl with the latter inequality also enforcing the 
first  constraint of the Turing condition, Eqn.~\eqref{hstin}.
The above inequality is also accompanied by the need to ensure  an unstable mode satisfies condition \eqref{hetselect}, analogous to the \emph{a posteriori} selection of a wave number for the spatially homogeneous Turing instability. This condition will be satisfied if and when the boundary conditions are satisfied.}

 % These conditions, which really are just local versions of the standard Turing conditions, can be modified to determine if any value of $\lambda>0$ permits an unstable mode.

\subsection{Direct Numerical Solutions}
We simulated system \eqref{RDE} with the Schnakenberg kinetics. Initial data were taken as normally distributed spatial perturbations to ${\bf u^*}$. Specifically, we set $u_i(0) = u_i^*(1+\xi_i(x))$ where $i=1,2$ and $\xi_i(x) \sim \mathcal{N}(0, 10^{-3})$ independently for each $x$ and $i$. While such heterogeneous reaction-diffusion systems are standard problems for numerical simulation software, we carefully checked different implementations of our simulations in order to be sure we resolved boundary layers and solution structure in the spatial domain. The commercial finite-element solver Comsol version 5.4 was used to solve the equations with $10^5$ elements, a relative tolerance of $10^{-4}$, and a final time of $t=10^6$ by which time a steady state had been reached up to numerical tolerances. Simulations were also carried out using a standard three-point stencil in Matlab and the stiff solver \textit{ode15s}, using $10^4$ grid points with relative and absolute tolerances of $10^{-9}$, and the same solutions were found. WKBJ modes were reconstructed in Mathematica and these were checked in Matlab and Maple. 

\begin{figure}
    \centering
    \includegraphics[width=0.45\textwidth]{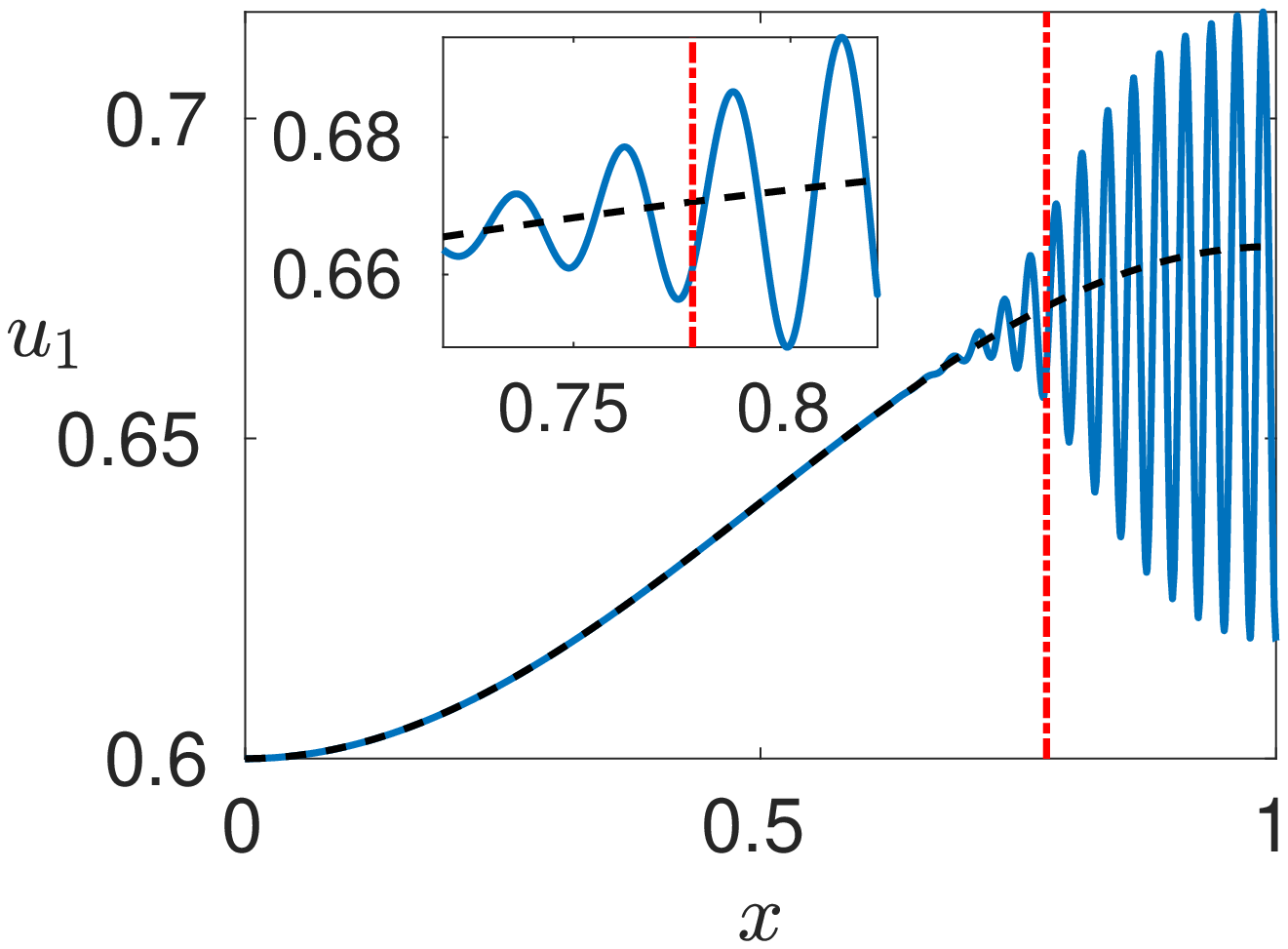}
    \includegraphics[width=0.45\textwidth]{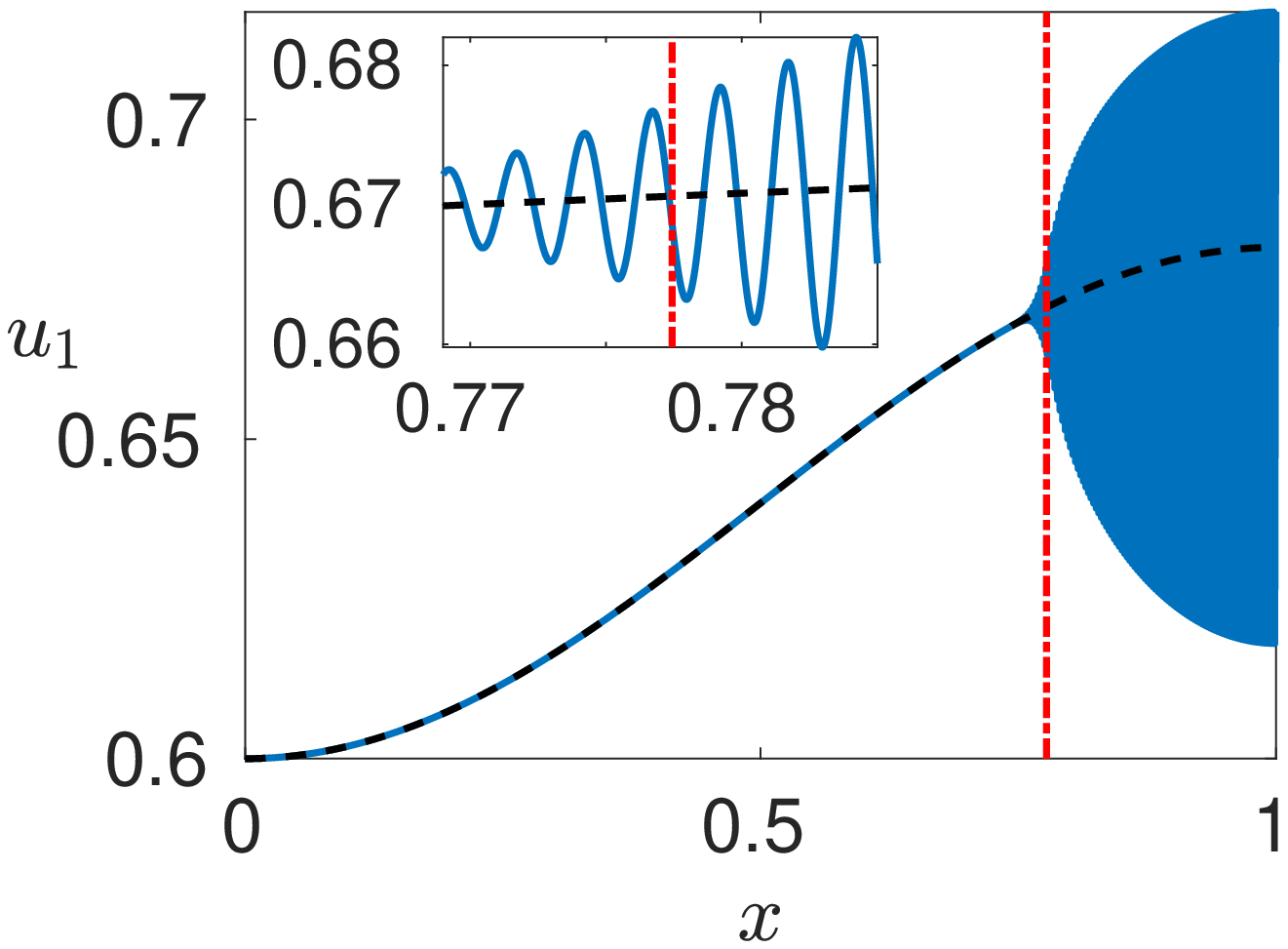}
    
    \hspace{1cm} (a) $\epsilon = 10^{-2}$ \hspace{5cm} (b) $\epsilon = 10^{-3}$
    
    \includegraphics[width=0.45\textwidth]{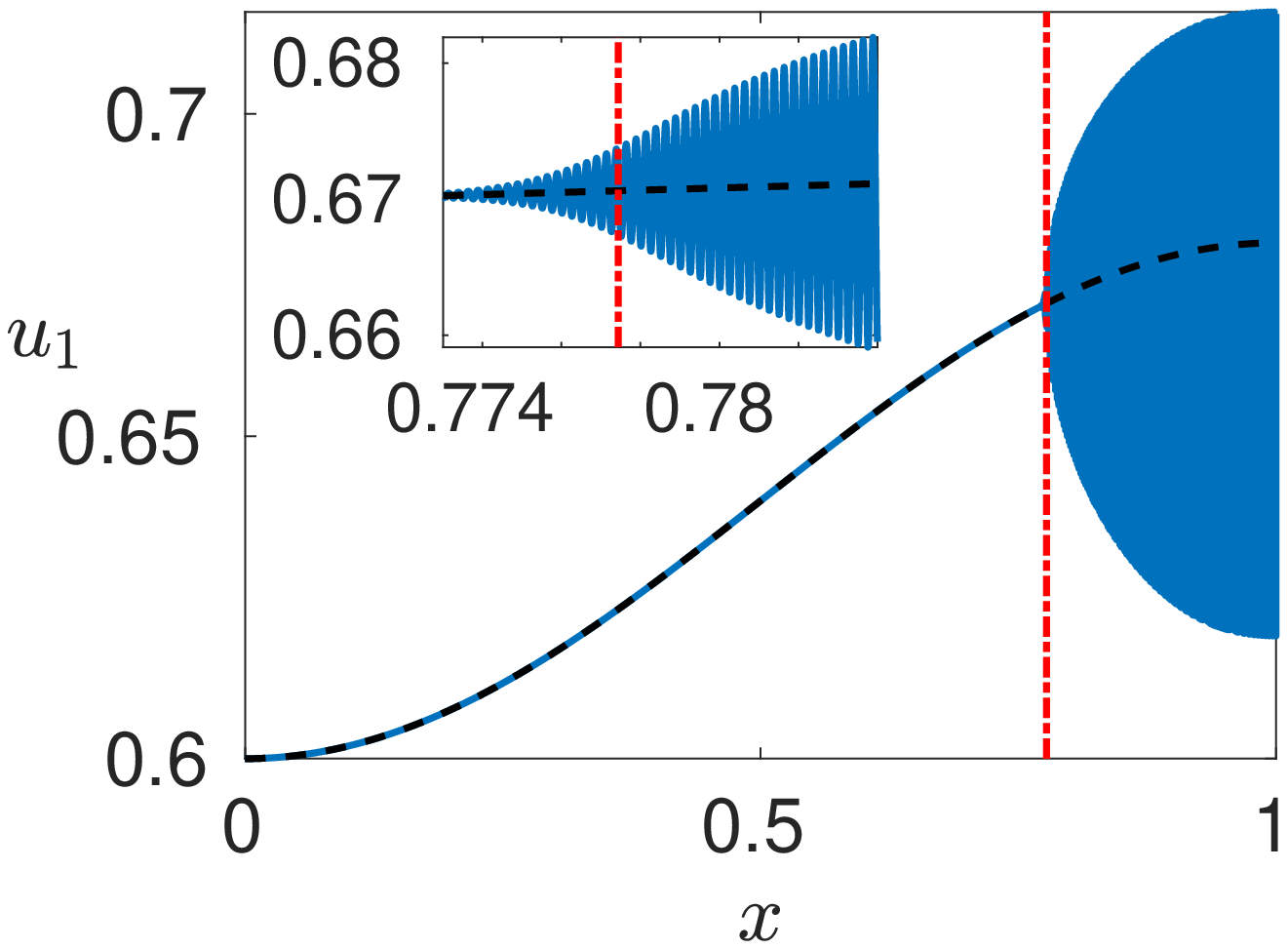}
    \includegraphics[width=0.45\textwidth]{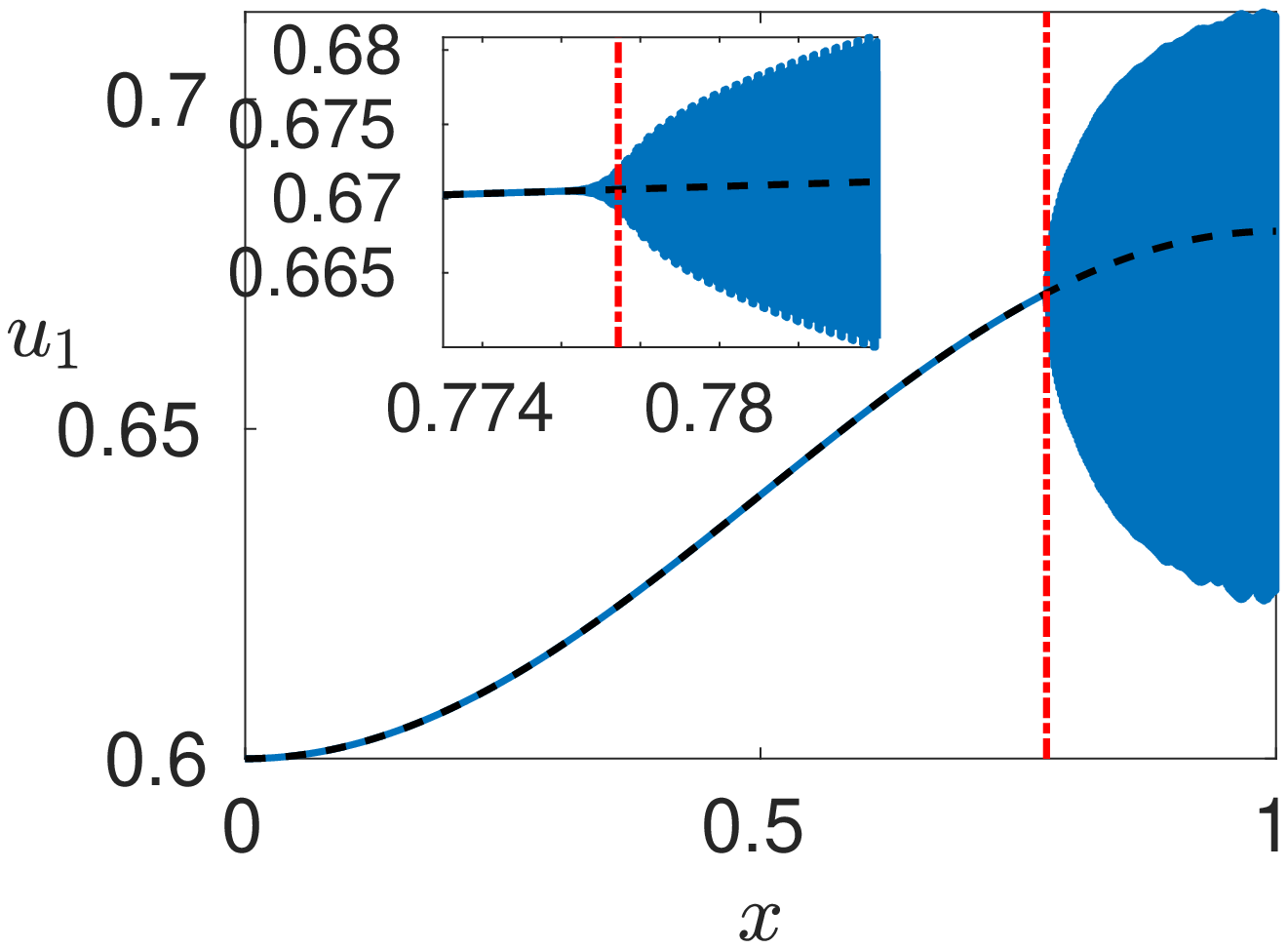}
    
\hspace{1cm} (c) $\epsilon = 10^{-4}$ \hspace{5cm} (d) $\epsilon = 10^{-5}$

    \caption{Plots of $u_1$ from simulations of the Schnakenberg system using $\alpha=1$, $d=1/40=0.025$, and $\beta=3/5+[1-\cos(\pi x)]/25$ with varying $\epsilon$. The blue solid curve is from the numerical simulation whereas the black dashed curve is the stationary state $u_1^*(x) = \beta(x)$ (note that the blue \emph{region} is due to the highly oscillatory nature of the solution). The red dash-dotted curve is the boundary of $\mathcal{T}_0$ at $x \approx 0.7774$ (i.e.~the singular point $x_*$). The insets show a zoomed-in region near the boundary of $\mathcal{T}_0$; these insets are over different regions in (a) and (b), though Figs.~(c) and (d) share the same $x$ axis for their insets.}
    \label{fig1}
\end{figure}

\begin{figure}
    \centering
    \includegraphics[width=0.45\textwidth]{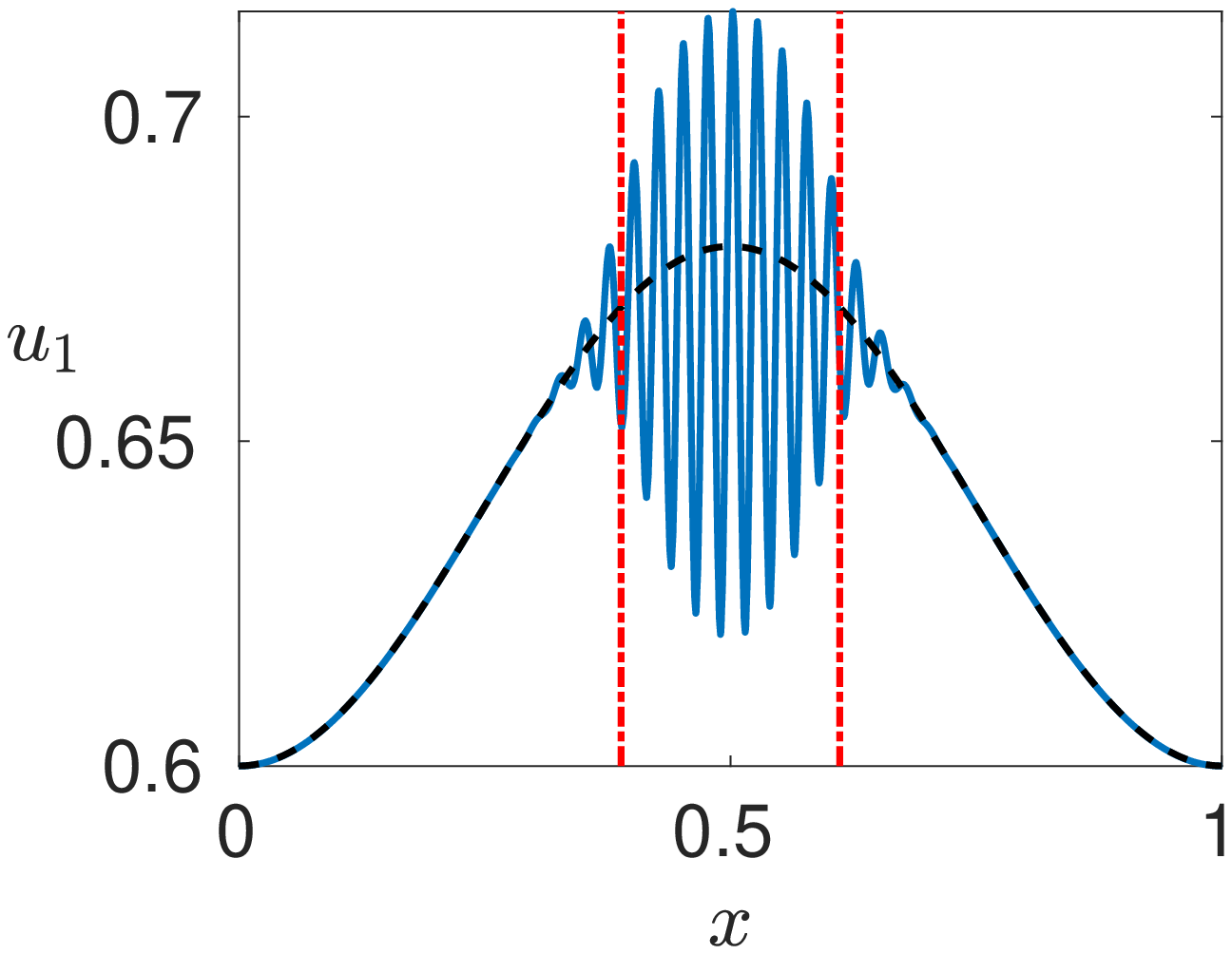}
    \includegraphics[width=0.45\textwidth]{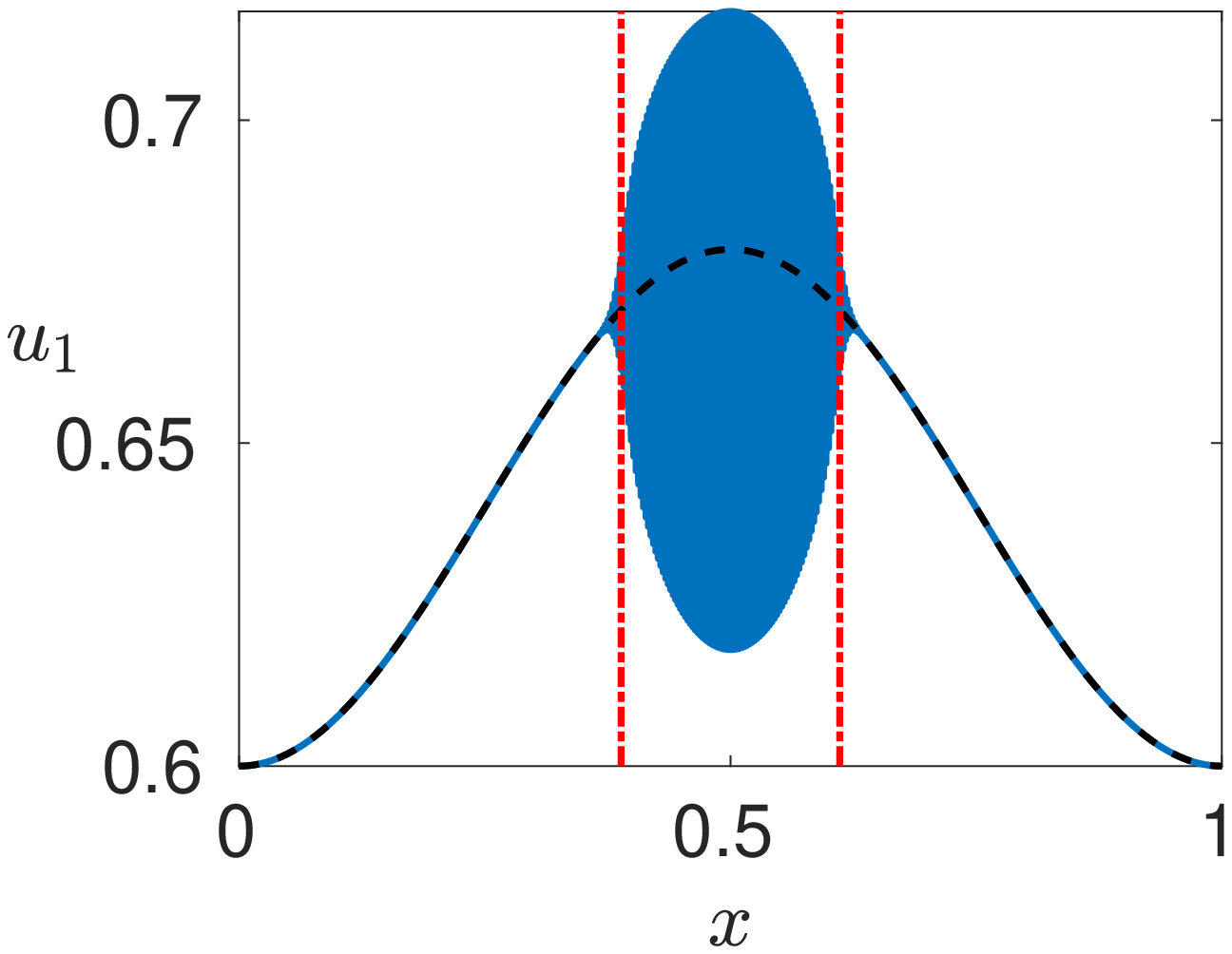}
    
    \hspace{1cm} (a) $\epsilon = 10^{-2}$ \hspace{5cm} (b) $\epsilon = 10^{-3}$

    \caption{Plots of $u_1$ as in Fig.~\ref{fig1}, but with $\beta=3/5+[1-\cos(2\pi x)]/25$ to demonstrate an internally contained $\mathcal{T}_0$.}
    \label{fig1b}
\end{figure}

We demonstrate our results using the following parameter choices, unless otherwise stated. We take $\alpha=1$, $d=1/40=0.025$, and consider $\beta=3/5+[1-\cos(c\pi x)]/25$, where $c=1$ or $c=2$. For these parameters, we have a Turing instability if $0.6706 < \beta(x) < 1$, so for $c=1$ we have $\mathcal{T}_0 \approx (0.7774,1)$ and for $c=2$ we have $\mathcal{T}_0 \approx (0.3886, 0.6114)$. We plot simulations for $c=1$ in Fig.~\ref{fig1}, and vary $\epsilon$. We observe that patterned solutions form approximately in the region predicted by the analysis, $\mathcal{T}_0$, and that they localize to this region as $\epsilon$ is decreased with highly oscillatory boundary regions at $x_* \approx 0.7774$. We note that Figs.~\ref{fig1}(b)-(d) have the same qualitative structure in terms of the amplitudes of patterns, though the internal oscillations become increasingly finer as $\epsilon$ is decreased. The insets show the increasing localization of the boundary as $\epsilon$ is decreased, as well as the structure of the decaying boundary layer of the mode with the largest support. We also show the same kind of localization for $c=2$ in Fig.~\ref{fig1b} where the spike solutions are confined to an internal region by varying the heterogeneity. Larger values of $c$, as well as other kinds of heterogeneity, were also considered with results consistent with the analytical predictions.

\begin{figure}
    \centering
    \includegraphics[width=0.45\textwidth]{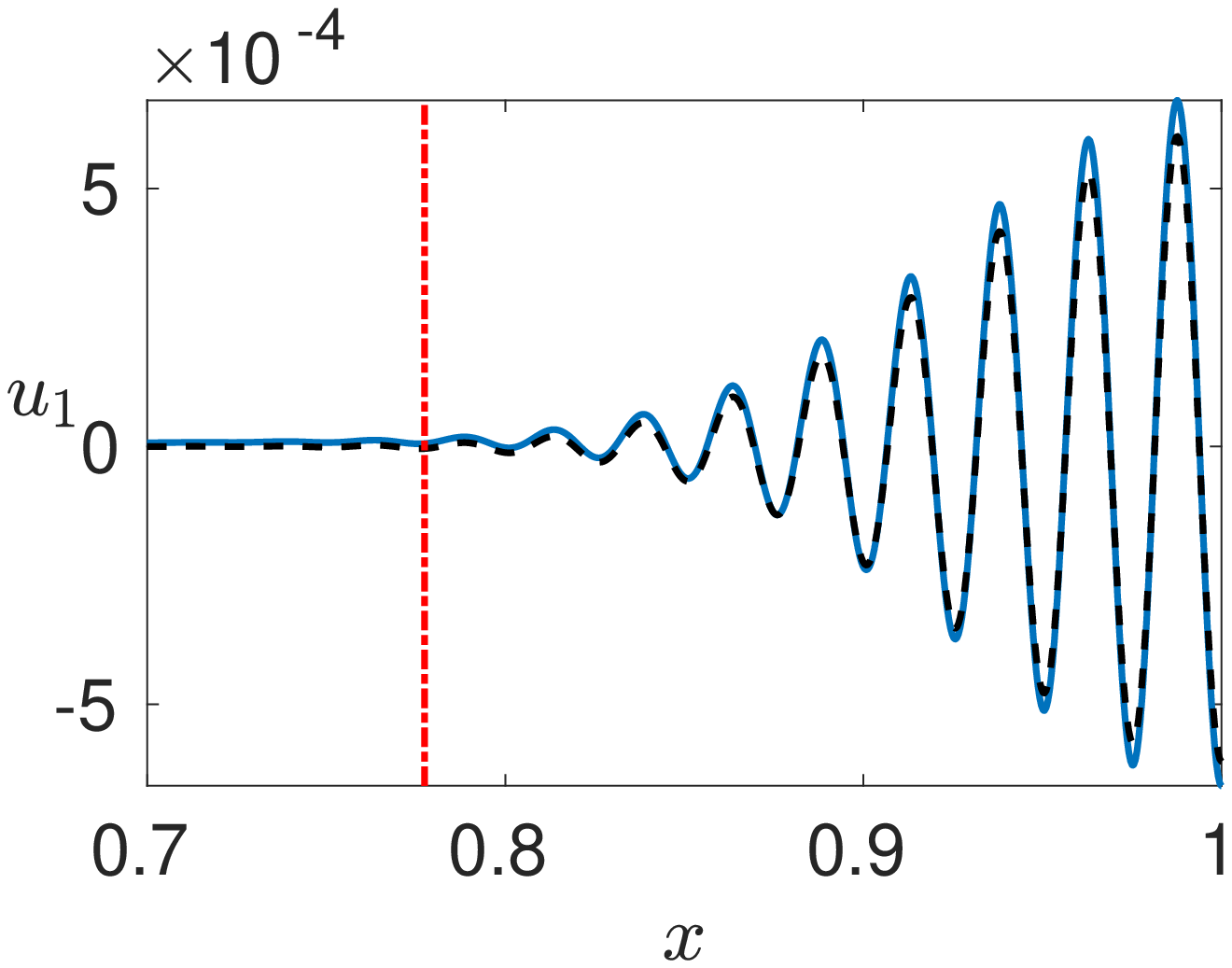}
    \includegraphics[width=0.45\textwidth]{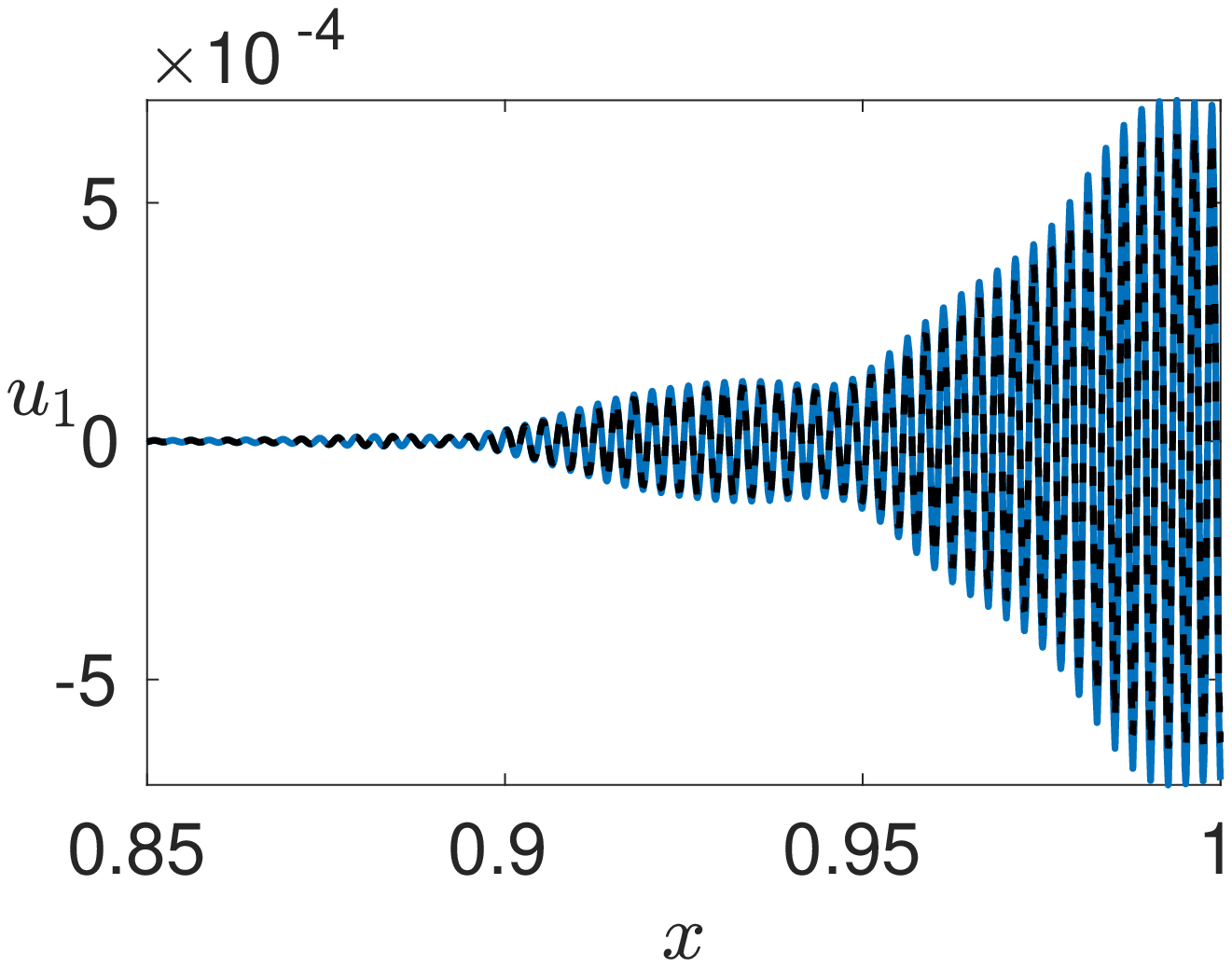}
    
    \hspace{1cm} (a) $\epsilon = 10^{-2}$ \hspace{5cm} (b) $\epsilon = 10^{-3}$

    \caption{Plots of $u_1$ from simulations of the Schnakenberg system using $\alpha=1$, $d=1/40=0.025$, and $\beta=3/5+[1-\cos(\pi x)]/25$ with varying $\epsilon$ at a times $t_f=800$ in (a) and $t_f=700$ in (b). The initial perturbation is taken as $\xi_i(x) \sim \mathcal{N}(0, 10^{-6})$ in both cases. The blue solid curve is given by $u_1(t_f)-u_1^*$ from the full numerical simulation, and the black dashed curve is given by $w_1(t_f)$ from simulations of the linearized system. The red dash-dotted curve is the boundary of $\mathcal{T}_0$ at $x \approx 0.7774$, though the region shown in (b) is entirely within $\mathcal{T}_0$.}
    \label{fig2}
\end{figure}

In Fig.~\ref{fig2} we show short time solutions to the nonlinear and linearized system in order to understand the structure of growing modes due to the instability. As anticipated, for sufficiently small perturbations and time intervals, the linear and nonlinear simulations are almost identical (using the same realization of the initial perturbations). We observe that the instability grows fastest furthest to the right, suggesting that there is not a single largest growth rate $\lambda$ across the domain, as anticipated in the analysis. Rather, what we have plotted are a superposition of modes with distinct growth rates and supports. Finally, for smaller $\epsilon$, these results suggest that larger values of $\lambda$ (which are more localized) become permissible, which is consistent with the structures anticipated. We now explore these modes in more detail.

\subsection{Structure of Unstable Modes}
  
We construct the unstable modes given by \eqref{eq:PiecewiseWKBsolu_right} for the example shown in Fig.~\ref{fig1}(a) with $\epsilon=0.01$, and discuss their properties. First, we numerically determine the discrete plausible mode numbers $n^\pm\in\mathbb{N}$ and $\lambda$ from the constraint \eqref{hetselect}. Then ${\bf B}_\lambda$, ${\bf p}_*, {\bf s}_*, \mu_\lambda$ and ${\bf w}$ all follow from their definitions. This example is indicative of the general features of linearly unstable modes in heterogeneous reaction-diffusion systems; the restriction to an example with modes of the form given by \eqref{eq:PiecewiseWKBsolu_right} is just for clarity of presentation, and our qualitative observations generalize. Specifically, unstable regions $\mathcal{T}_\lambda$ which are composed of many disjoint intervals will in general have a wide variety of unstable modes across the domain, but the analysis in any such complicated setting will essentially reduce to the structures found here.

    \begin{figure}
    \centering
    \includegraphics[width=0.45\textwidth]{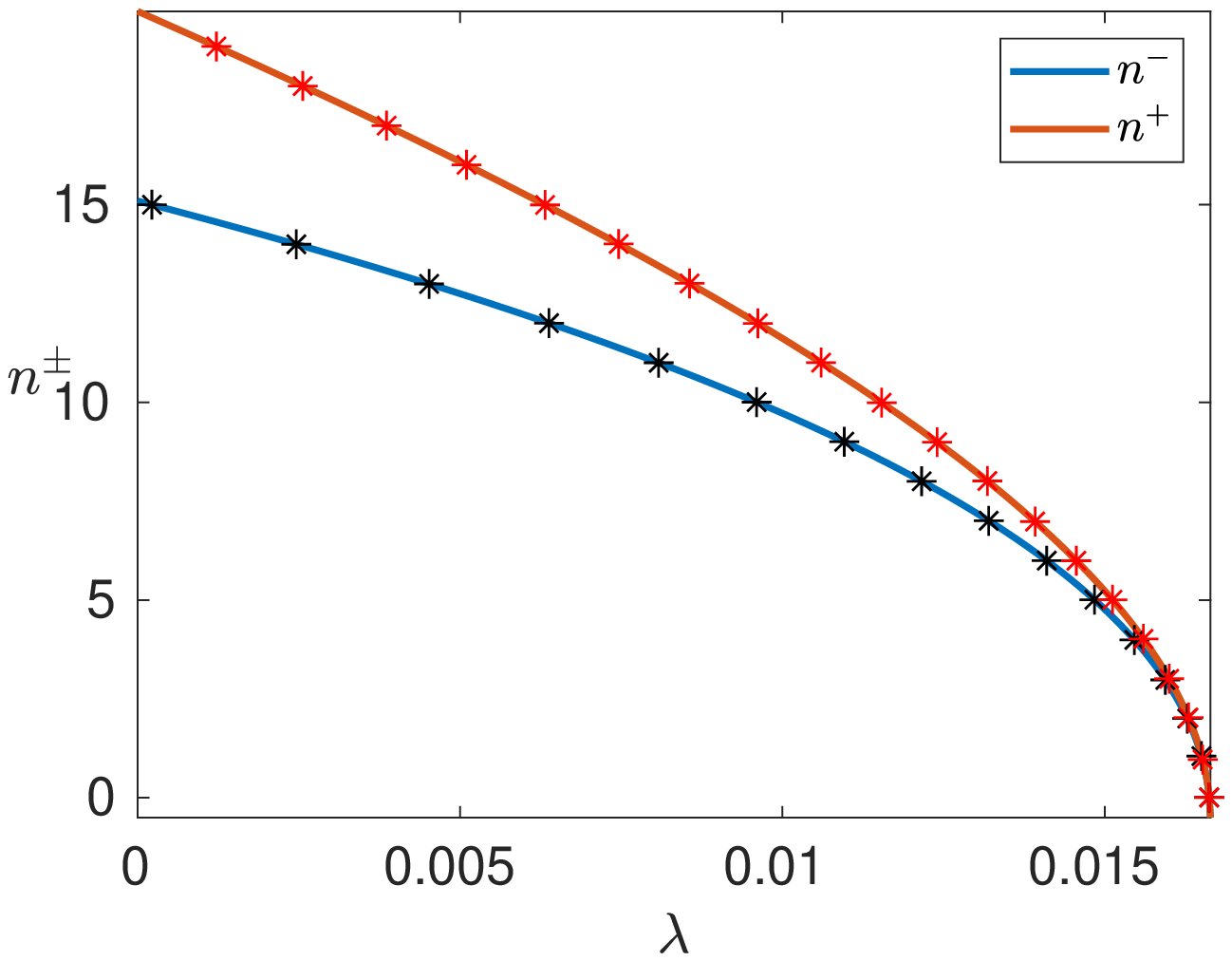}
    \includegraphics[width=0.45\textwidth]{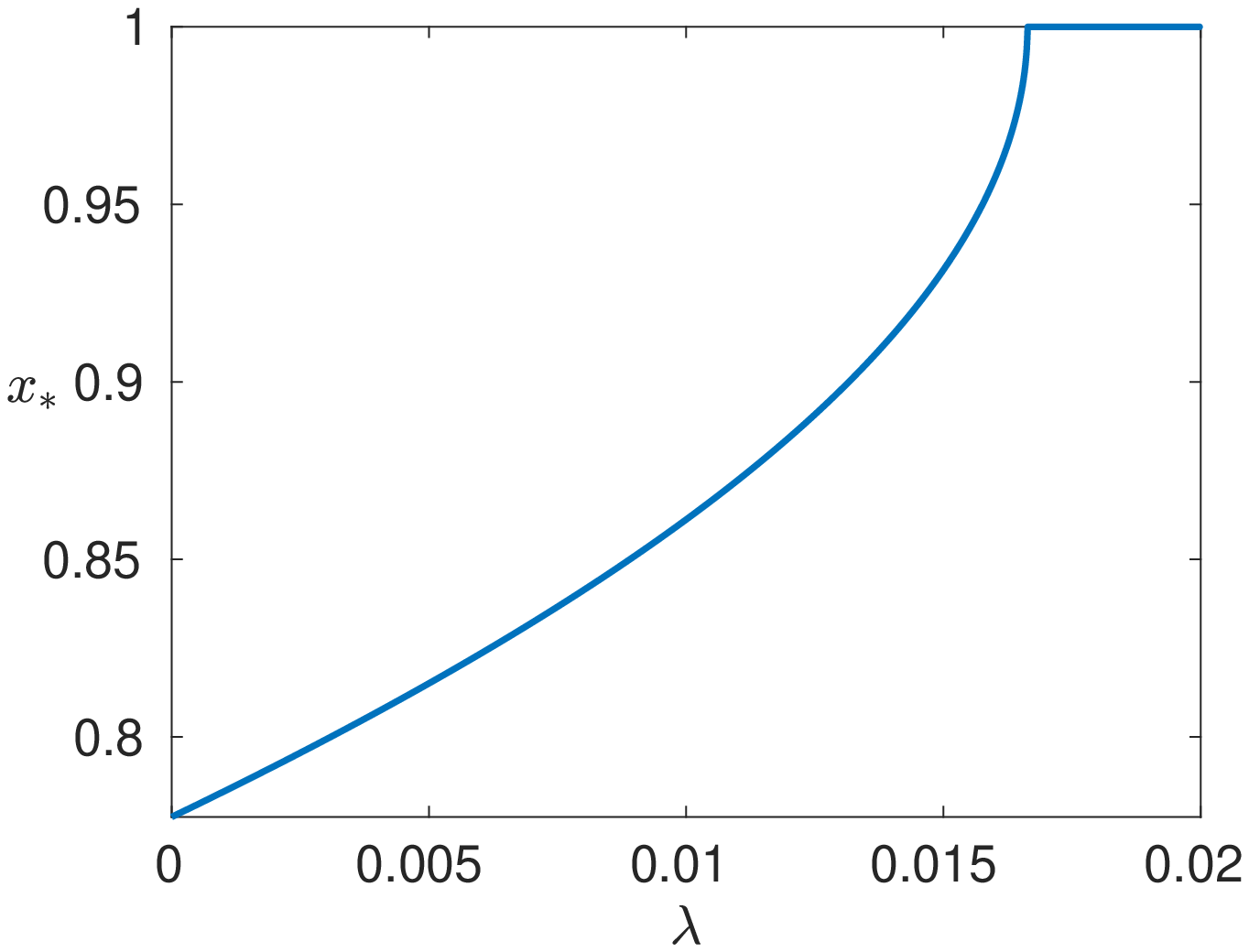}
    
    (a) \hspace{7cm}    (b)
    \caption{Evaluating \eqref{hetselect} reveals the possible discrete modes $n^\pm$ on both branches of WKBJ solutions, which are plotted using asterisks in (a). In (b), we plot the position of singular points $x_*(\lambda)$ demarcating the boundary of $\mathcal{T}_\lambda = [x_*,1]$ as a function of the growth rate $\lambda$, and corresponding to a shrinking $\mathcal{T}_\lambda$ which vanishes when $x_*=1$.}
    \label{Fig.GrowthSupp}
\end{figure}

  \begin{figure}
    \includegraphics[width=0.45\textwidth]{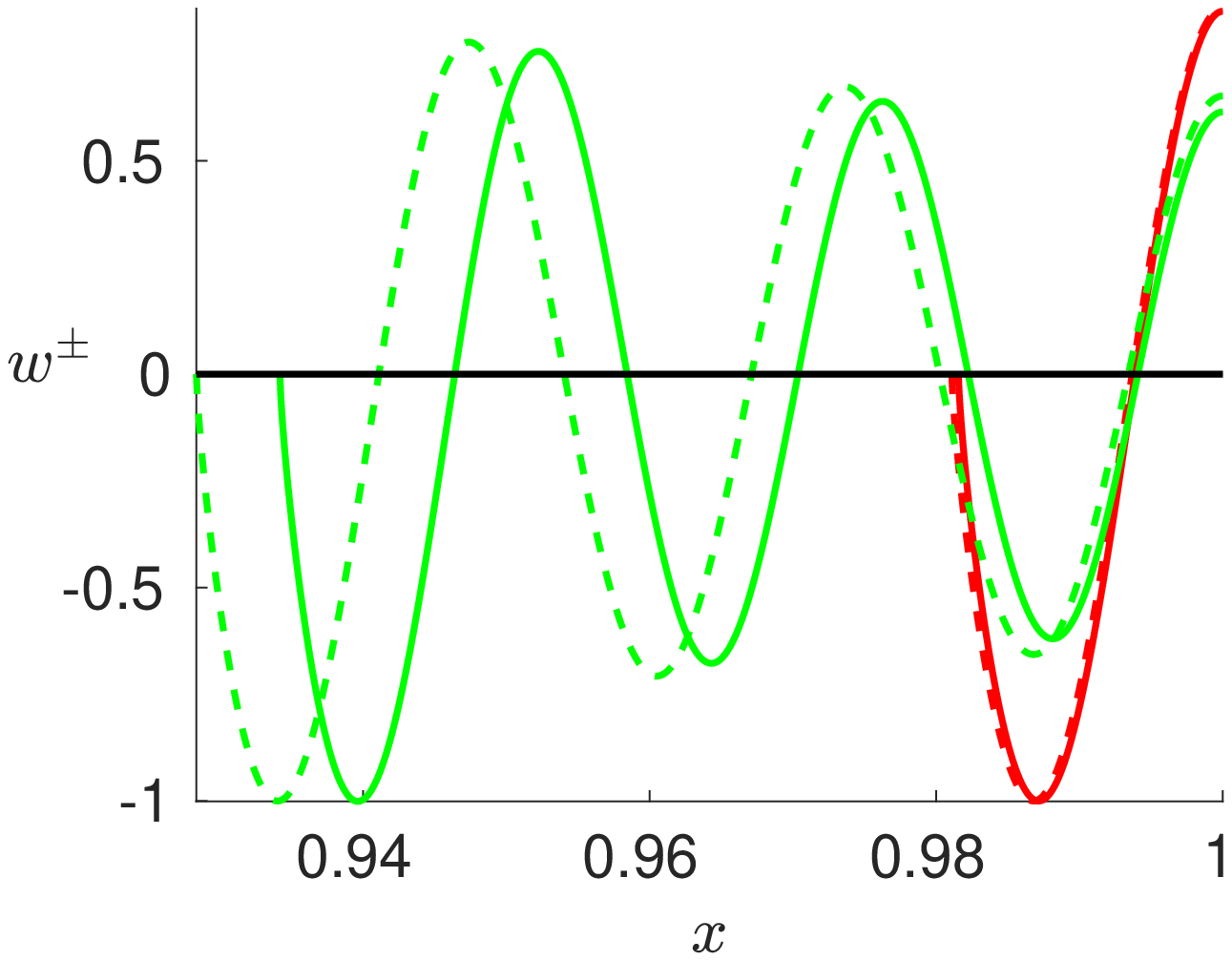}
    \includegraphics[width=0.45\textwidth]{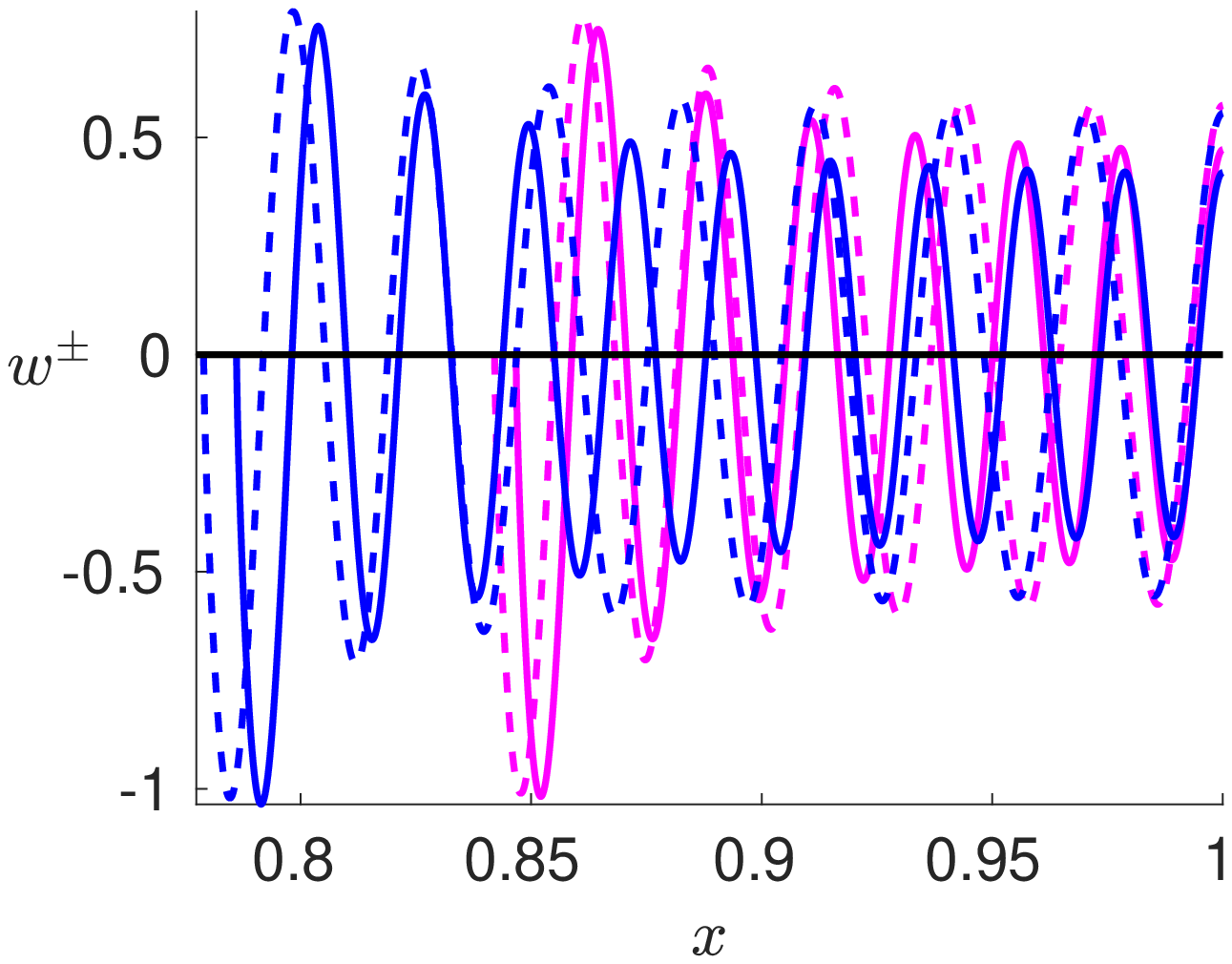}
    
    (a) \hspace{7cm}    (b)
    \caption{ We plot the first component of modes given by \eqref{eq:PiecewiseWKBsolu_right}, associated with $u_1$, corresponding to parameters as in Fig.~\ref{fig1}(a). We plot modes for the positive branch $\mu_\lambda^+$ (solid lines), with $n^+=1,5,13,19$ in red, green, purple, and blue respectively, as well as the modes corresponding to the negative branch $\mu_\lambda^-$ (dashed lines), with $n^-=1,5,11,15$ in red, green, purple, and blue respectively. The smaller two mode numbers are shown in (a), and the larger two in (b). We remark that the smallest and largest values of $n^\pm$ correspond to the maximal and minimal mode numbers along each branch, and the other two mode numbers for each branch are chosen to have similar values of $\lambda$. Note the shrinkage of the support of each mode with increasing $n$, and in particular the difference in the axes for each plot. }
    \label{Fig.Eigenmodes}
\end{figure}

In this particular example there are $19$ unstable WKBJ eigenmodes on the branch corresponding to $\mu_\lambda^+$ and 15 on the branch corresponding to $\mu_\lambda^-$, all of which follow from computing $\lambda$ from \eqref{hetselect} as shown in Fig.~\ref{Fig.GrowthSupp}(a). Note that the right-most value of $n^\pm$ (here shown as a continuous interpolation) corresponds to $\lambda=0$, i.e.~the boundary of $\mathcal{T}_0$ , and that indeed the first unstable WKBJ mode (on the negative branch in this particular example) appears near this boundary as predicted due to the small values of $\epsilon$. We depict four of these modes from each branch in Fig.~\ref{Fig.Eigenmodes}, noting that they each have a support which increases with $n$. We also see in Fig.~\ref{Fig.GrowthSupp}(a) how the growth rate is related to the discrete values of $n$, and how the support of a corresponding mode changes with its growth rate $\lambda$ in Fig.~\ref{Fig.GrowthSupp}(b). Fig.~\ref{Fig.GrowthSupp} directly evidences the predictions from Propositions \ref{c1} and the results in SI Section \ref{App.Lambdan}, as we see the support of distinct modes decrease with increasing $\lambda$. Finally we remark that the vector ${\bf p}^*$ in Equation \eqref{eq:WKBsolu} (computed numerically) is negative in its first component, and positive in its second (as expected for a cross-kinetic system like Schnakenberg), and both components are essentially constant in space, varying by less than $1\%$ of their magnitude across the domain. 

\section{Discussion}\label{Discussion}
We have analyzed two-component heterogeneous reaction-diffusion systems in order to justify the use of \emph{local} Turing conditions which are commonly employed in the literature, and given a deeper insight into how heterogeneity changes the structure of patterned states. Using a WKBJ ansatz, we have shown that \emph{local} conditions are valid, provided that the heterogeneity is slowly varying. { Additionally, we demonstrated that these unstable modes are supported in distinct regions of the domain with different growth rates, and without a well defined wavenumber;  this leads to the commonly-observed amplitude variations reported in the literature.   This is in contrast to the homogeneous case, where the pattern modes have a well defined wavenumber and occupy the whole space (see Figure \ref{RDvsPI}(b))}. We illustrated our analytical predictions using a simple model in Section \ref{Illustration}. Much more complicated heterogeneities and reaction-diffusion systems, such as those explored in \cite{page2005complex} were also used to verify the analytical predictions in more complicated cases, such as when $\mathcal{T}_0$ is no longer a single interval. Nevertheless, the instability criteria work well for suitably small $\epsilon$ such that the heterogeneity does not vary faster than $O(1/\epsilon$). While we can enumerate the unstable modes and compute their growth rates, we remark that there is no obvious generalization of wavelength or frequency in this setting; unstable modes, and fully developed patterns, tend to exhibit large varying oscillations throughout a heterogeneous region of space.

Alongside generalizing the classical Turing conditions to the case of spatially heterogeneous systems, our analysis suggests several further questions to pursue. We have assumed that the local steady state is stable in the absence of diffusion throughout the domain, but it may be possible that diffusion could in fact \emph{stabilize} the solution of a heterogeneous reaction-diffusion system which is locally unstable (in the absence of diffusion) in only part of its domain, leading to a patterned state. Additionally, it is known that rapidly varying heterogeneities can substantially impact the ability of a reaction-diffusion system to admit patterns, and the qualitative features that such patterns exhibit \cite{rudiger2007theory,stephetero}, and this remains to be explored within the present framework. Finally, while the results in SI Section \ref{App.Lambdan} allow us to conjecture about the envelope of solutions via the growth rate of distinct unstable modes, these remarks have not been rigorously justified. Demonstrating properties of these envelopes mathematically would require extending the framework of weakly nonlinear analysis \citep{crawford1991introduction, cross1993pattern, wollkind1994weakly, stephenson1995weakly, chen2019non} to the heterogeneous setting, and is beyond the scope of this paper. 

In addition to these mathematical extensions, one could apply these results directly to biological patterning situations, such as successive patterning due to reaction-diffusion mechanisms on different timescales, or to the combination of theories of positional information and reaction-diffusion (see Fig.~\ref{RDvsPI}). Originally, the well-known Gierer-Meinhardt model developed in \cite{gierer1972theory} contained a spatial heterogeneity representing a precursor pattern from a previous pattern forming event. Such a situation could be directly captured by considering distinct reaction-diffusion processes occurring at different time points in development, or on different temporal and spatial scales. Alternatively, one can posit a positional information framework as the origination of spatial structure, such as in delineating different patterning fields from one another, and let reaction-diffusion theory produce additional periodic patterning within this heterogeneous domain, as suggested in \cite{green2015positional}. hence this paper presents a first step toward theoretically understanding the evolution of one pattern into another, but much more work must be done linking to experimental studies to justify such a theory of morphogenesis.

While the WKBJ-based approach we have employed is potentially extendable to multi-species or multi-dimensional systems, the calculations become increasingly complicated. Real chemical and biological systems are composed of many different chemical species, and few developmental phenomena are faithfully captured by a single spatial dimension. Additionally, we remark that our analysis presented in SI Section \ref{Sec.PropsOfWKB} only shows that continuous modes can be defined across singularities at leading order, though fully resolving the boundary-layer structure across these singularities is beyond our present scope. Nevertheless, the results we have presented here will remain valid even with such refinements. We also anticipate that these results are indicative of Turing instabilities in heterogeneous systems in higher dimensions, or with three or more species. Specifically, spatial regions which satisfy local Turing conditions should admit patterned solutions (distinct from the ambient heterogeneity) if these regions are sufficiently large, and the spatial heterogeneity is sufficiently smooth. Preliminary numerical investigations in two and three dimensions suggest this is true, and a valuable extension given the biological motivations for the theory. The framework presented here is a first step in understanding how one patterned state arises from another, and in elucidating the more nuanced roles that reaction and diffusion play in development and analogous systems with heterogeneous instabilities. As Turing said, though under different circumstances \cite{machinery1950computing}, ``We can only see a short distance ahead, but we can see plenty there that needs to be done."

\section{Acknowledgements}
A.L.K. and E.A.G. are grateful for support from BBSRC grant BB/N006097/1;  V.K. is grateful for support from European Regional Development Fund-Project ``Center for Advanced Applied Science" (No. CZ.02.1.01/0.0/0.0/16\_019/0000778) and the Mathematical Institute at the University of Oxford. In compliance with BBSRC's open access initiative, the data in this paper is available from http://dx.doi.org/xx.xxxx/xxxxxxxxxxxxxxxxxx.

\bibliographystyle{apsrev}
\bibliography{refs}

\newpage

%%%%%%%%%%%%%%%% TO APPENDIX \ref{App.RederivationOfClassDDI}.
\section*{Supplementary Information for the \emph{Journal of the Royal Society Interface} article ``From One Pattern into Another: Analysis of Turing Patterns in Heterogeneous Domains via WKBJ" by A. L. Krause, V. Klika, T. E. Woolley, and E. A. Gaffney}
\beginsupplement

\section{Rederivation of Turing instability with spatial homogeneity.}  \label{App.RederivationOfClassDDI}
The standard derivation of the Turing conditions in the homogeneous setting with Neumann boundary conditions considers the separable Fourier solution 
\beq \label{Fs} {\bf w} \propto \exp(\lambda t)\cos(k x),
\eeq
with wave number $k$, and finds the associated growth rate, $\lambda$. The conditions for a Turing instability then arise from the requirement that: 
\begin{itemize}
    \item [(i)]
there is stability when $k=0$, indicating a stable steady state without diffusion, 
\item[(ii)] a range of $k\neq 0$ generates an instability, at least providing  $k/\pi$ is a non-zero integer within this range. 
\end{itemize}
In the heterogeneous case, the conditions associated with stability in the absence of diffusion are derived analogously to the homogeneous case. However, for instability,  the Fourier solutions do not decouple and we seek an alternative approach. Proceeding, we firstly summarise the calculation of the homogeneous Turing conditions, where the fundamental equation arising from the substitution of (\ref{Fs}) into (\ref{weqn1})
is given by 
%so we  instead fix the growth rate and then search for admissible unstable solutions,} restricting the choice of $\lambda$ \emph{a posteriori} in an analogous way to wavemode selection on a finite domain. Hence,   we briefly consider this approach for  the homogeneous Turing conditions,  for generalisation to consider spatial heterogeneity.} 
%This approach is different from the classical picture, where we are looking for conditions that guarantee a value of $\lambda$ with positive real part. Instead, we think of $\lambda$ as a parameter, and ask what conditions are consistent with it having a positive real part, and being permissible in the sense of admitting a real unstable mode.
%{\bl In particular, the} fundamental equation in the homogeneous case is the solvability condition for eigenmode $k$
\beq \label{fe1}   \mbox{det} [\epsilon^2k^2 {\bf D} -{\bf J} + \lambda {\bf I} ] = \mbox{det} [\epsilon^2k^2 {\bf D} -{\bf J}_\lambda] = 0, 
\eeq
where we denote ${\bf J_\lambda} = {\bf J} - \lambda{\bf I}$. This condition is equivalent to 
$$ \mbox{det} [ -{\bf D}^{-1}{\bf J_\lambda} + \epsilon^2k^2 {\bf I}   ] = 0.$$
Hence, given $\epsilon$ and the wave number $k$, we can determine the growth rate $\lambda$. For a perturbation to grow we require values of $k^2$ and $\lambda$ such that 
\beq 
\label{fcc} 
\mbox{Re}(\lambda)>0, ~~~~ \mbox{for} ~~~k^2=n^2\pi^2>0, 
\eeq 
with $n$ a non-zero integer, subject to Equation \eqref{fe1}. Instead of following the normal approach where we vary $k$ to ensure $\Re(\lambda)>0$, we can instead vary $\lambda$ to deduce conditions under which requiring $k^2$ to be real and positive implies the normal Turing conditions. The relationship between $\lambda$ and $k^2$ is computed from the dispersion relation \eqref{fe1}.

{\bf Non-real growth rates.}
Permissible  values of $k^2$ are real and positive, or else we could not satisfy $k^2=n^2\pi^2$ for an integer $n\neq 0$. Thus, we can exclude cases where  $\epsilon^2k^2$ is not strictly real. We also neglect cases where  Re$(\lambda)<0$  as we are only interested in instability. However we have
\beq \label{trhom}
-\mbox{tr}(\epsilon^2k^2 {\bf D}-{\bf J})= [f_u+g_v -\epsilon^2k^2(1+d)] < 0,\eeq
for permissible $k^2$, given that the homogeneous steady state is stable, so that 
tr$({\bf J})= f_u+g_v <0$. We also have from \eqref{fe1} that $\lambda$ satisfies,
\beq \label{lamhom}
\lambda = -\mbox{tr}(\epsilon^2k^2 {\bf D}-{\bf J})
\pm\sqrt{[\mbox{tr}(\epsilon^2k^2 {\bf D}-{\bf J})]^2-4 \mbox{det}[\epsilon^2k^2 {\bf D}-{\bf J}] }.
\eeq

Hence, if $\lambda$ is non-real, its real part is negative as follows from \eqref{trhom}-\eqref{lamhom}. Thus, without loss of generality, we can consider real $\lambda$, as complex growth solutions with permitted wave numbers, if they exist, are stable.

{\bf Real growth rates.} Recalling the notation ${\bf B}_0 = {\bf D}^{-1}{\bf J}$, the transition to instability occurs when $\lambda=0$, whence
\beq \label{homogdisp}  \epsilon^2k^2 &=& \frac 12 \left[\mbox{tr}( {\bf B}_0)
\pm\sqrt{[\mbox{tr}( {\bf B}_0)]^2-4 \mbox{det}({\bf B}_0) }\right],
\eeq
where the two roots $\epsilon^2k^2$ are the eigenvalues of ${\bf B}_0$. Therefore, to generate an inhomogeneous instability consistent with the stability of the zero mode (conditions \eqref{stab0}) we require
\beq \label{fc}  \mbox{tr}{( {\bf B}_0)}
-\sqrt{[\mbox{tr}( {\bf B}_0)]^2-4 \mbox{det}({\bf B}_0) }
>0.
\eeq 

There are two roots for $\epsilon^2k^2$ and {\it these are the eigenvalues of} 
$ {\bf B}_0$. If both are negative, there is no permissible value of $k^2$, and thus there is no instability. If the smaller eigenvalue is negative and the larger is positive, then from the shape and behaviour of Re$(\lambda)$ as  a function of $\epsilon^2k^2$, one must have Re$(\lambda) = \lambda>0$
for $\epsilon^2k^2=0$. However, this possibility will be excluded as we require the mode associated with $k=0$ to be stable. Thus we require that inequality \eqref{fc} holds in order to generate an inhomogeneous instability consistent with the stability of the zero mode. In analogy to the standard derivation of the Turing condition, the requirement that $k/\pi$ is a non-zero integer is only considered \emph{a posteriori}. 
From the conditions for the stability of the zero mode, \eqref{stab0},  we have that 
$\mbox{det}({\bf J})>0$, and hence  $\mbox{det}({\bf D}^{-1}{\bf J})>0$. The remaining conditions for an inhomogeneous instability are then
\beq \label{tc230} \mbox{tr}( {\bf D}^{-1}{\bf J}) > 0, ~~~~~~ [\mbox{tr}( {\bf D}^{-1}{\bf J})]^2-4 \mbox{det}({\bf D}^{-1}{\bf J}) > 0 ,
\eeq 
which are equivalent to the standard Turing conditions \eqref{instabk} given in Criterion \eqref{homogturing}. 

%%%%%%%%%%%%%%%%

\section{Singularities of WKBJ modes} \label{Sec.PropsOfWKB}

Here we show properties of the solution near internal singular points in detail, denoting such a point as $x_*$. At any singular point, where ${\bf s}_*^T {\bf p}_*=0$, the expression for $Q_0$, \eqref{Q0}, becomes ill-defined, and hence we examine the structure of the solution near such a singular point.

With ${\bf J}_\lambda^*$ denoting $ {\bf J}_\lambda (x_*)$ and similarly ${\bf B_\lambda^*}={\bf B}_\lambda(x_*)$, we focus on the case when ${\rm tr}({\bf J}^*)<0$,  ${\rm det}({\bf J}^*)>0$, ${\rm tr}({\bf B}^*_\lambda)>0$ which is the case of interest as this will be the boundary of $\mathcal{T}_\lambda$ (see Proposition \ref{p4}). Further, the zero of $ [\rm{tr}( {\bf B}_\lambda)]^2-4 {\rm det}({\bf B}_\lambda) $ is generically a simple one at $x=x_*$, as a non-simple zero would require mathematical fine-tuning in the model and parameter choices for smooth kinetic functions.
 Then for fixed $y\neq x_*$ with ${\bf s}^T_* {\bf p}_*\neq 0$ in $(x_*,y)$, the integral $$ \exp \left[ \int_x^{y} \frac{{\bf s_*}(\bar{x})^T{\bf p}_*'(\bar{x})}{{\bf s_*}(\bar{x})^T{\bf p}_*(\bar{x})} \mathrm{d}\bar{x} \right]$$
 has a singularity which scales with $1/|x-x_*|^{1/4}$ as 
$x\rightarrow x_*$. 

\begin{proposition} \label{p9} 
Let $\lambda$ be a non-negative real growth rate. We assume
 ${\rm tr}({\bf J}^*)<0$,  ${\rm det}({\bf J}^*)>0$, and   ${\rm tr}({\bf B}_\lambda^*)>0$ with  ${\bf J}_\lambda^*$ denoting $ {\bf J}_\lambda 
(x_*)$. Additionally, we assume that the zero of $ [\rm{tr}( {\bf B}_\lambda)]^2-4 {\rm det}({\bf B}_\lambda)) $ is a simple one at $x=x_*$. Then with fixed $y\neq x_*$ the integral
$$ \exp \left[ \int_x^{y} \frac{{\bf s_*}(\bar{x})\cdot{\bf p}_*'(\bar{x})}{{\bf s_*}(\bar{x})\cdot{\bf p}_*(\bar{x})} \mathrm{d}\bar{x} \right]$$
 has a singularity which scales with $1/|x-x_*|^{1/4}$ as 
$x\rightarrow x_*$. 

\end{proposition}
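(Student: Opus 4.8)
The plan is to extract the leading-order behaviour of the integrand $G(x):=\dfrac{{\bf s}_*^T{\bf p}_*'}{{\bf s}_*^T{\bf p}_*}$ as $x\to x_*$ by degenerate perturbation theory at the coalescence of the eigenvalues $\mu_\lambda^+,\mu_\lambda^-$ of ${\bf B}_\lambda$, and then integrate. Write $\nu=x-x_*$ and assume without loss that $\mathcal{T}_\lambda$ lies to the right of $x_*$, so $\nu>0$ (the left case follows by the reflection $x\mapsto 2x_*-x$, equivalently by interchanging the two branches). Set ${\bf A}(x):={\bf B}_\lambda(x)-\tfrac12{\rm tr}({\bf B}_\lambda(x)){\bf I}$, a traceless $2\times2$ matrix with eigenvalues $\pm\tfrac12\sqrt{\delta(x)}$ where $\delta(x):=[{\rm tr}({\bf B}_\lambda(x))]^2-4{\rm det}({\bf B}_\lambda(x))$. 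By hypothesis $\delta$ has a simple zero at $x_*$, and since $\delta>0$ on $\mathcal{T}_\lambda$ we have $\delta(x)=\delta'(x_*)\nu+O(\nu^2)$ with $\delta'(x_*)>0$. Since $\delta(x_*)=0$, ${\bf A}(x_*)$ is traceless with zero determinant, hence nilpotent, and it is \emph{nonzero}: it coincides with the matrix $-\mu_\lambda^\pm(x_*){\bf I}+{\bf B}_\lambda(x_*)$ studied in the proof of Proposition \ref{nonorth}, where the vanishing alternative was ruled out using ${\rm det}({\bf J})>0$ together with the sign conditions \eqref{tc1i}. Hence ${\bf A}(x_*)$ is similar to $\bigl(\begin{smallmatrix}0&1\\0&0\end{smallmatrix}\bigr)$.

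Next I observe that $G(x)$ is unchanged under a constant simultaneous change of basis of ${\bf A},{\bf p}_*,{\bf s}_*$ (then ${\bf p}_*\mapsto{\bf P}^{-1}{\bf p}_*$, ${\bf s}_*\mapsto{\bf P}^{T}{\bf s}_*$, and both ${\bf s}_*^T{\bf p}_*$ and ${\bf s}_*^T{\bf p}_*'$ are invariant), while a rescaling ${\bf p}_*\mapsto\alpha(x){\bf p}_*$ changes it only by $\alpha'/\alpha$; hence, provided the rescaling factor that is discarded stays bounded above and below near $x_*$, neither operation affects the order of the singularity. I therefore conjugate ${\bf A}$ so that ${\bf A}(x_*)=\bigl(\begin{smallmatrix}0&1\\0&0\end{smallmatrix}\bigr)$ and drop the unit normalisation. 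Writing ${\bf A}(x)={\bf A}(x_*)+\nu\,{\bf A}_1+O(\nu^2)$ with ${\bf A}_1$ traceless, one finds ${\rm det}\,{\bf A}(x)=-({\bf A}_1)_{21}\,\nu+O(\nu^2)$, so simplicity of the zero forces $c:=({\bf A}_1)_{21}\neq0$ (indeed $c>0$) and $\tfrac12\sqrt{\delta(x)}=\sqrt{c}\,\eta\,(1+O(\eta))$ with $\eta:=\nu^{1/2}$. Solving $({\bf A}-\sqrt{c}\,\eta\,{\bf I}){\bf p}_*=0$ and the transposed equation for ${\bf s}_*$ row by row gives the expected Puiseux expansions at this simple branch point,
\begin{equation*}
{\bf p}_*=\begin{pmatrix}1\\ \sqrt{c}\,\eta\end{pmatrix}+O(\eta^2),\qquad {\bf s}_*=\begin{pmatrix}\sqrt{c}\,\eta\\ 1\end{pmatrix}+O(\eta^2),
\end{equation*}
so that ${\bf s}_*^T{\bf p}_*=2\sqrt{c}\,\eta+O(\eta^2)$ --- precisely the $|x-x_*|^{1/2}$ vanishing anticipated in the text --- and, using $\partial_x=\tfrac1{2\eta}\partial_\eta$, ${\bf s}_*^T{\bf p}_*'=\tfrac{\sqrt{c}}{2}\,\eta^{-1}+O(1)$. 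Therefore $G(x)=\dfrac{1}{4\eta^{2}}\bigl(1+O(\eta)\bigr)=\dfrac{1}{4(x-x_*)}+O\!\left(|x-x_*|^{-1/2}\right)$.

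Finally I integrate. For $y$ in the interior of $\mathcal{T}_\lambda$, ${\bf s}_*^T{\bf p}_*$ is non-vanishing on $(x_*,y]$ by Proposition \ref{nonorth}, so $G$ is continuous there and singular only as the lower limit approaches $x_*$; splitting $\int_x^y=\int_x^{x_*+\delta_0}+\int_{x_*+\delta_0}^y$, the second integral is a finite constant, the $O(|x-x_*|^{-1/2})$ remainder in the first integrates up to $x_*$, and $\int_x^{x_*+\delta_0}\tfrac{d\bar x}{4(\bar x-x_*)}=-\tfrac14\ln|x-x_*|+{\rm const}$. Exponentiating, $\exp\bigl[\int_x^y G(\bar x)\,d\bar x\bigr]=C(x)\,|x-x_*|^{-1/4}$ with $C(x)$ bounded above and below as $x\to x_*$, which is the asserted $1/|x-x_*|^{1/4}$ scaling; the case $x<x_*$ is identical after reflection. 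I expect the main obstacle to be bookkeeping rather than any delicate estimate: one must check that the constants in the $\eta^{-1}$ blow-up of ${\bf s}_*^T{\bf p}_*'$ and the $\eta$-zero of ${\bf s}_*^T{\bf p}_*$ combine to give exactly $\tfrac14$ --- this is the universal turning-point exponent familiar from the Airy connection problem, here arising at a coalescence of the two WKBJ wavenumbers rather than at a zero of a single one --- and that the Puiseux remainders are genuine series in $\eta$ (automatic, since the eigenvalue crossing is simple) so that the subleading term is truly integrable near $x_*$.
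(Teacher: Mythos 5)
Your proposal is correct and lands on exactly the same quantitative facts as the paper's proof --- ${\bf s}_*^T{\bf p}_*\sim|x-x_*|^{1/2}$, ${\bf s}_*^T{\bf p}_*'\sim|x-x_*|^{-1/2}$, hence an integrand $\tfrac{1}{4(x-x_*)}\bigl(1+O(|x-x_*|^{1/2})\bigr)$ whose exponentiated integral scales as $1/|x-x_*|^{1/4}$ --- but you organize the local analysis differently. The paper works in the original variables: it writes ${\bf p}_*$ and ${\bf s}_*$ explicitly in terms of $f_u,f_v,g_u,d,\lambda$, checks separately that the normalisations $R_p,R_s$ do not vanish at $x_*$ (using ${\rm tr}({\bf J}^*)<0$, ${\rm det}({\bf J}^*)>0$), expands $\mu^\pm_\lambda$ through the square-root singularity, and contracts term by term. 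You instead subtract the trace, note that ${\bf A}(x_*)$ is nilpotent and nonzero (recycling the case analysis of Proposition \ref{nonorth}, which is precisely where the hypotheses ${\rm tr}({\bf J}^*)<0$, ${\rm det}({\bf J}^*)>0$, ${\rm tr}({\bf B}^*_\lambda)>0$ enter), conjugate to the Jordan block, and read off Puiseux expansions of both eigenvectors in $\eta=(x-x_*)^{1/2}$; the reduction is licensed by the exact invariance of $G={\bf s}_*^T{\bf p}_*'/{\bf s}_*^T{\bf p}_*$ under a constant similarity (and under rescaling of ${\bf s}_*$), with rescaling of ${\bf p}_*$ contributing only $\alpha'/\alpha$. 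Your normal-form route is more coordinate-free and makes the universal $1/4$ turning-point exponent at the eigenvalue coalescence transparent, while the paper's explicit computation yields formulas in the model's own quantities and the $R_p,R_s\neq0$ facts it reuses. One small imprecision: requiring the discarded rescaling factor to be ``bounded above and below'' is not by itself enough --- what is needed is that $\alpha'/\alpha$ be integrable near $x_*$, which holds here because the normalisation factor has a nonvanishing Puiseux expansion in $\eta$, so $\alpha'/\alpha$ is at worst $O(|x-x_*|^{-1/2})$; your closing remark about the Puiseux remainders effectively covers this, so it is a wording issue rather than a gap.
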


\begin{proof}
By \eqref{ie3} we have 
$$ \mu^\pm_\lambda(x_*) = \frac 12 {\rm tr}({\bf B}_\lambda^*), $$
which is a double root at $x=x_*$, so we have,
 \begin{equation}\label{Bmat}
[-\mu^\pm_\lambda(x_*){\bf I}  +{\bf D}^{-1}{\bf J}_\lambda]=
\left( \begin{array}{cc} f_u -\lambda- \mu^\pm_\lambda  & f_v \\ \dfrac{g_u}d & \dfrac{g_v -\lambda}d  -\mu^\pm_\lambda  \end{array} \right), 
~~~~~( f_u -\lambda -\mu^\pm_\lambda)\left(\dfrac{g_v -\lambda}d -\mu^\pm_\lambda \right)-\dfrac 1 d f_vg_u=0.
\end{equation}

With sign choices that are without loss of generality,  we compute
$$ {\bf p}_* (x)= \frac{1}{R_p} \left( \begin{array}{c} -f_v \\ f_u -\lambda- \mu^\pm_\lambda   \end{array} \right), ~~~~~~~~~
R_p= \left(f_v^2+|f_u -\lambda- \mu^\pm_\lambda|^2\right)^{1/2},
$$ 
with 
$$   {\bf s}_* (x)= \frac{1}{R_s} \left(   -g_u/d,  f_u -\lambda- \mu^\pm_\lambda    \right), ~~~~~~~~~
R_s= \left(\left( \dfrac{g_u}d\right)^2+|f_u -\lambda- \mu^\pm_\lambda|^2\right)^{1/2}.
$$
We will need $R_p \sim O(1)$ near $x=x_*$, and so must show that $R_p(x_*) \neq 0$. We proceed by contradiction and assume that $R_p=0$ at $x=x_*$. Then $f_u -\lambda- \mu^\pm_\lambda=0=f_v$. So therefore $f_u > \lambda > 0$, but we have det$({\bf J}^*)  = f_ug_v > 0$ and tr$({\bf J}^*)  = f_u+g_v < 0$, which cannot be simultaneously satisfied, and hence we must have $R_p\neq0$ at $x=x_*$. An analogous proof also shows that $R_s\neq 0$ at $x=x_*$.

%Proceeding, we have 
%$${\bf s}_* (x)\cdot {\bf p}_* (x) = \frac1{R_pR_s}\left[
%\dfrac 1 d f_vg_u +(f_u -\lambda- \mu_\lambda)^2
%\right].$$
Letting $x=x_*+X$ with $|X| \ll 1$,  and defining 
$$\alpha_\lambda = \frac{ \partial }{\partial x}
\left([\mbox{tr}( {\bf B}_\lambda)]^2-
4 \mbox{det}({\bf B}_\lambda)
\right)
\bigg|_{x=x_*},$$ 
we have from equation~\eqref{ie3}  that near $x=x_*$
\beq  \mu^\pm_\lambda(x) := \frac 12\left\{  \begin{array}{ll}  \left[\mbox{tr}( {\bf B}^*_\lambda)
\pm |\alpha_\lambda  X| ^{1/2} +O(X) \right]  &   \alpha_\lambda  X >0\\
 \left[\mbox{tr}( {\bf B}^*_\lambda)
\pm i |\alpha_\lambda X|^{1/2} +O(X) \right] & \alpha_\lambda  X < 0
\end{array}
\right\} := \mu_{\lambda}^{0*} +  \frac 12\left\{  \begin{array}{ll}   
\pm |\alpha_\lambda  X| ^{1/2} +O(X)    &   \alpha_\lambda  X >0\\ 
\pm i |\alpha_\lambda X|^{1/2} +O(X)   & \alpha_\lambda  X < 0
\end{array}
\right\},
\eeq 
where $\mu_{\lambda}^{0*} = \mbox{tr}({\bf B}^*_\lambda)$ is constant in $X$. We do not consider the degenerate case of  $\alpha_\lambda=0$ as this corresponds to a non-simple root of det$([-\mu^\pm_\lambda(x){\bf I}  +{\bf D}^{-1}{\bf J}_\lambda])$ at  $x=x_*$. From this expansion, 
we have (denoting derivatives with respect to $X$ as $' \equiv \partial_X)$
\beq ( \mu'^\pm_\lambda)(x)  =   \frac 1 4 \left\{  \begin{array}{ll}   
\pm \left | \dfrac{\alpha_\lambda}  X \right| ^{1/2} +O(1)    &   \alpha_\lambda  X >0\\ 
\pm i \left |\dfrac{\alpha_\lambda}  X \right|^{1/2} +O(1)   & \alpha_\lambda  X < 0
\end{array}
\right\} . 
\eeq 
Specialising in the first instance to the case $ \alpha_\lambda  X >0$, and on noting 
$f_u=f_u^*+O(|X|),$ $f_v=f_v^*+O(|X|)$  near $x=x_*$, while $R^*_p=R_p(x_*)\sim O(1), ~R^*_s=R_s(x_*)\sim O(1)$, the contraction of ${\bf s}_*(x)$  and ${\bf p}_*(x)$ yields 
\begin{eqnarray*}  {\bf s}_* (x)\cdot {\bf p}_* (x) &=&  \frac1{R_pR_s}\left[
\dfrac 1 d f_vg_u +(f_u -\lambda- \mu_\lambda)^2
\right]   \\ &=&  {\bf s}_* (x_*)\cdot {\bf p}_* (x_*) 
\mp \frac1{R^*_pR^*_s(1+O(|X|^{1/2}))}\left[ (f^*_u-\lambda -\mu_\lambda^{0*})|\alpha_\lambda X|^{1/2} +O(|X|) 
\right]
\\  &=&   
\mp \frac1{R^*_pR^*_s}  (f^*_u-\lambda -\mu_\lambda^{0*})|\alpha_\lambda X|^{1/2} +O(|X|), 
~~~~ \mbox{as~} X \rightarrow  0.
\end{eqnarray*}

Further, on  differentiating ${\bf p}_*(x)$, one finds 
$$ 
{\bf p}_*' (x)= \frac{1}{R_p} \left( \begin{array}{c} -f_v \\ f_u -\lambda- \mu_\lambda   \end{array} \right)' -\frac{R'_p}{R_p^2}\left( \begin{array}{c} -f_v \\ f_u -\lambda- \mu_\lambda   \end{array} \right)
= \frac  1 {4R_p^*}  \left( \begin{array}{c} O(1) \\\mp 
 \left | \dfrac{\alpha_\lambda}  X \right| ^{1/2}
\end{array} \right) +O(1) - \frac{R'_p}{R_p}{\bf p}_* (x).$$
 Contracting with ${\bf s}_*(x)$ yields 
$${\bf s}_*(x) \cdot 
{\bf p}'_* (x) = \mp\frac{1}{4R_p^*R_s^*}\left[ f^*_u -\lambda- \mu_\lambda^{0*}\right] \left| \dfrac{\alpha_\lambda}  X \right| ^{1/2} +O(1) -  \frac{R'_p}{R_p}{\bf s}_*(x)\cdot{\bf p}_* (x) =  \mp\frac{1}{4R_p^*R_s^*}\left[ f^*_u -\lambda- \mu_\lambda^{0*}\right] \left| \dfrac{\alpha_\lambda}  X \right| ^{1/2} +O(1), 
$$
on noting that 
$$ R_p' \sim O\left(\left| \dfrac{\alpha_\lambda}  X \right| ^{1/2}\right), ~~~ {\bf s}_* (x)\cdot {\bf p}_* (x)     \sim O\left(|X|^{1/2}\right), ~~~~ \mbox{as~} X \rightarrow 0.$$ Hence 
\begin{eqnarray} \label{finratio}
\dfrac{{\bf s}_*(x) \cdot 
{\bf p}'_* (x)}{ {\bf s}_* (x)\cdot {\bf p}_* (x)} = \frac{1}{4|X|}\left(1+O(|X^{1/2}|)\right), 
\end{eqnarray}
and the above calculations hold for arbitrary $\alpha_\lambda \neq0$ and thus equation~(\ref{finratio})
also holds for $\alpha_\lambda X <0.$ In turn, we have for $ x=x_*+X$,  $X>0$, 
\beq  \nonumber 
\frac{Q_0(x_*+X)}{Q_0(1)} \propto \exp \left[ -\int^{x~_*+X} _1 \frac{{\bf s_*}(\bar{x})\cdot{\bf p}_*'(\bar{x})}{{\bf s_*}(\bar{x})\cdot{\bf p}_*(\bar{x})} \mathrm{d}\bar{x} \right]  = \exp \left[\int_{x~_*+X} ^1 \frac 1 {4(\bar{x}-x_*) }     +O\left( \frac 1 {(\bar{x}-x_*)^{1/2}} \right)
  \mathrm{d}\bar{x} \right] \sim O\left(\dfrac 1 {|X|^{1/4}}\right),
\eeq  
and the same scaling holds for $X<0.$ 
\end{proof}

On approaching such a singular point $x_*$, solutions and their derivatives become unbounded and the asymptotic assumptions inherent in the WKBJ approximation (that the diffusion term is subleading), breaks down. In contrast, the second derivative of such solutions near $x\approx x_*$ will scale with $1/|x-x_*|^{9/4}$, and hence the transport term is no longer asymptotically small when $|x-x_*|\sim \epsilon^{8/9}$. Hence boundary layers are present around $x_*$. However, the interior boundary layer problem is not tractable analytically, and thus we only consider the outer WKBJ solutions.  Nonetheless, we require boundedness of the outer solutions on approaching the boundary layer, otherwise such solutions will be arbitrarily large for sufficiently small $\epsilon$.   In turn the outer region is valid for $|x-x_*|\sim \epsilon^{4/9} \gg  \epsilon^{8/9}$, where the WKBJ solution scales with $1/|x-x_*|^{1/4} \sim 1/\epsilon^{1/9} \rightarrow \infty$ as $\epsilon \rightarrow 0$. Solution boundedness requires the expression in \eqref{eq:WKBsolu} to take the form of a $\sin$ function near the singular point $x_*$, and a $\cos$ function is used at a zero-flux boundary.

%Implicit in the above working, the WKBJ solutions were assumed to hold across the domain. While bounding $\mu_\lambda$ away from zero has been feasible, it is possible for  ${\bf s_*}(x)\cdot {\bf p}_*(x) =0,$ for values of $x_*\in[0,1]$  where ${\bf s_*}(x),~ {\bf p}_*(x)$ are respectively the left and right eigenvectors of zero eigenvalue for  $[-\varphi'^{2} {\bf I}  +{\bf D}^{-1}{\bf J}_\lambda]=[-\mu^\pm_\lambda(x){\bf I}  +{\bf D}^{-1}{\bf J}_\lambda]$. We now consider the impact of a zero of ${\bf s}_*(x_*)\cdot {\bf p_*}(x_*)$ with 
%$x_*\in(0,1).$ 

\section{Relationship between $n^\pm$ and $\lambda$ and the support of non-trivial WKBJ modes}\label{App.Lambdan}

Here we show that $\lambda$ decreases with $n^+$ in the positive branch of WKBJ solutions, and outline how the negative branch behaves.
 \begin{proposition} \label{p11} 
The value of the non-negative growth rate $\lambda$ decreases with $n^+$ for the positive branch of WKBJ solutions. 
\end{proposition}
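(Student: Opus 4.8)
The plan is to convert the statement into a monotonicity property of the wavemode integral and then invert the selection relation~\eqref{hetselect}. Fix $\epsilon$ and, for the positive branch, set
\[
I^{+}(\lambda) \;:=\; \int_{a(\lambda)}^{b(\lambda)} \sqrt{\mu_{\lambda}^{+}(\bar{x})}\,\mathrm{d}\bar{x},
\]
so that~\eqref{hetselect} reads $I^{+}(\lambda)=\bigl(n^{+}+\tfrac{K}{2}\bigr)\pi\epsilon$ with $K\in\{0,1\}$. Because $n^{+}$ is a positive integer while $K/2\le\tfrac12<1$, increasing $n^{+}$ by one strictly increases the right-hand side (the change is $\pi\epsilon$, $\tfrac12\pi\epsilon$, or $\tfrac32\pi\epsilon$ according to whether $K$ increases, stays fixed, or decreases — always positive). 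Hence it suffices to show that $I^{+}$ is a continuous, strictly decreasing function of $\lambda$ on the permissible range $0\le\lambda\le\lambda_{\max}$: then each admissible $n^{+}$ singles out a unique $\lambda$, and a larger $n^{+}$ yields a smaller $\lambda$.

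First I would set up the differentiation. On the permissible range the relevant $\lambda$ are real (Proposition~\ref{p3}), $\mu_{\lambda}^{+}(x)>0$ on the open interval by Proposition~\ref{p5}, and — since $[\mathrm{tr}(\mathbf{B}_{\lambda})]^{2}-4\det(\mathbf{B}_{\lambda})$ has only simple zeros — the implicit function theorem makes $a(\lambda),b(\lambda)$ continuously differentiable whenever they lie in $(0,1)$ (otherwise they are constant, $0$ or $1$). Near an interior endpoint the discriminant vanishes linearly, so by~\eqref{ie3new} $\partial_{\lambda}\mu_{\lambda}^{+}$ is $O(|x-x_{*}|^{-1/2})$, hence $\partial_{\lambda}\sqrt{\mu_{\lambda}^{+}}$ is integrable up to the endpoints and Leibniz's rule applies:
\[
\frac{\mathrm{d}I^{+}}{\mathrm{d}\lambda} \;=\; \int_{a(\lambda)}^{b(\lambda)} \frac{\partial_{\lambda}\mu_{\lambda}^{+}(\bar{x})}{2\sqrt{\mu_{\lambda}^{+}(\bar{x})}}\,\mathrm{d}\bar{x} \;+\; \sqrt{\mu_{\lambda}^{+}(b(\lambda))}\,b'(\lambda) \;-\; \sqrt{\mu_{\lambda}^{+}(a(\lambda))}\,a'(\lambda).
\]

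The main step is to sign the three terms. The integrand of the first is strictly negative, since $\partial_{\lambda}\mu_{\lambda}^{+}<0$ by Proposition~\ref{p5} and $\mu_{\lambda}^{+}>0$, so the first term is $<0$. For the boundary terms, $\sqrt{\mu_{\lambda}^{+}(\cdot)}>0$ at each endpoint (at an interior endpoint it equals $\sqrt{\tfrac12\mathrm{tr}(\mathbf{B}_{\lambda}^{*})}>0$ by permissibility), while the shrinking of $\mathcal{T}_{\lambda}$ established in Proposition~\ref{c1} — via $\partial_{\lambda}\bigl([\mathrm{tr}(\mathbf{B}_{\lambda})]^{2}-4\det(\mathbf{B}_{\lambda})\bigr)<0$ in~\eqref{nmon} together with the implicit function theorem — gives $b'(\lambda)\le 0$ and $a'(\lambda)\ge 0$, with equality precisely when the endpoint coincides with a domain boundary (where that boundary term is absent anyway). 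Thus all three contributions are $\le 0$ and the first is strictly negative, so $I^{+}$ is strictly decreasing; continuity of $I^{+}$ follows because both the integrand and the limits vary continuously with $\lambda$. Combined with the first paragraph this proves that $\lambda$ decreases as $n^{+}$ increases, and (taking $\lambda=0$) recovers that the largest admissible $n^{+}$ corresponds to the boundary of $\mathcal{T}_{0}$.

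I expect the main obstacle to be the technical care at the moving endpoints: making rigorous that $\partial_{\lambda}\sqrt{\mu_{\lambda}^{+}}$ is genuinely integrable there so that Leibniz's rule holds with a finite boundary term, and dealing with values of $\lambda$ at which the topological type of $\mathcal{T}_{\lambda}$ — and hence the constant $K$ — changes. The latter does not affect the conclusion, precisely because $K/2<1$ keeps the right-hand side of~\eqref{hetselect} monotone in $n^{+}$, but it means the correspondence $n^{+}\mapsto\lambda$ is a priori only piecewise defined; one establishes the stated monotonicity on each piece and patches across transitions using continuity of $I^{+}$. By contrast, this scheme does not settle the negative branch, where $\partial_{\lambda}\mu_{\lambda}^{-}>0$ makes the integrand grow while $\mathcal{T}_{\lambda}$ still shrinks, so the two effects compete — which is why the proposition is stated only for $n^{+}$.
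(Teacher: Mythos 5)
Your argument is correct and is essentially the paper's own proof: the paper differentiates the selection relation~\eqref{hetselect} with respect to $n^\pm$ and signs exactly the same bracket $\bigl[b'\sqrt{\mu_\lambda^+(b)}-a'\sqrt{\mu_\lambda^+(a)}+\int_a^b \partial_\lambda\mu_\lambda^+/(2\sqrt{\mu_\lambda^+})\,\mathrm{d}\bar{x}\bigr]<0$ using $\partial_\lambda\mu_\lambda^+<0$ and the monotonicity of $a(\lambda),b(\lambda)$ from Proposition~\ref{c1}, which is precisely your computation of $\mathrm{d}I^+/\mathrm{d}\lambda$ recast as implicit differentiation of $\lambda(n^+)$. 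Your added care about integrability at the moving endpoints and possible changes in $K$ is a harmless refinement (modulo the minor mislabelling of which $K$-transition gives $\tfrac12\pi\epsilon$ versus $\tfrac32\pi\epsilon$), not a different method.
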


\begin{proof}
We proceed by differentiating the fundamental constraint \eqref{eq:FundConstr} with respect to $n^\pm$ to find
\begin{equation}\label{diffn}
\lambda'(n^\pm) \left[b'(\lambda)\sqrt{\mu_\lambda^\pm(b(\lambda))}- a'(\lambda)\sqrt{\mu_\lambda^\pm(a(\lambda))}+\int_{a(\lambda)}^{b(\lambda)}\frac{\partial_\lambda(\mu_\lambda^\pm)(\bar{x})}{2 \sqrt{\mu_\lambda^\pm(\bar{x})}} \mathrm{d}\bar{x}\right] = \pi\epsilon > 0.
\end{equation}
By Proposition \ref{c1} we have that $a'(\lambda) \geq 0$ and $b'(\lambda) \leq 0$, which implies that the first two terms of \eqref{diffn} are together negative and we must only check the sign of the third term. For the positive branch, this term is negative by the proof of Proposition \ref{newp}, as $\partial_\lambda\mu_\lambda^+<0$, hence, for this branch we must have $\lambda'(n^+) < 0$.
\end{proof}

Any non-trivial WKBJ solution has a support (in space) demarcated by singular points $x_*(\lambda(n^\pm))=:x_*(\lambda)$ or the domain boundaries. %; for a fixed $\lambda$, these singular points are the same for both branches of solutions as $\mu_\lambda^+=\mu_\lambda^-$ due to the multiplicity of the eigenvalues at these boundaries.
Therefore the support of the {$n^+$}-th mode also decreases (or remains the same) as {$n^+$} is increased, due to the monotonicity of $\mathcal{T}_\lambda$. We can conclude that {$\mathcal{T}_{\lambda(n+)}$} shrinks with increasing {$n^+$}, and that the largest permissible $n^+$ will correspond to the smallest value of $\lambda$ and the largest spatial support, whereas the smallest $n^+$ will have the smallest support, but largest growth rate. 

For the negative branch $n^-$, the calculation in the proof of Proposition \ref{p11} reveals that a competition between two terms of different signs takes place, and the overall picture is more complicated. First, note that if there is no singular point within the domain $[0,1]$ for a range of $\lambda$, then the first two boundary terms of \eqref{diffn} vanish and the last term was shown to be positive via the proof in Proposition \ref{newp} for $\mu_\lambda^-$. Hence, in this scenario, $\lambda$ would increase with increasing $n^-$. In the case when there is an internal singular point, we know that $\mathcal{T}_\lambda$ decreases with increasing $\lambda$. Additionally, near the maximal permissible $\lambda$, the support of the fastest growing mode is very small and becomes a strict subset of $[0,1]$. Further, for such maximal admissible $\lambda$, both $n^\pm=1$ as $\mu_\lambda^\pm(x)$ is a continuous function in $\lambda$ and in $x \in [a,b]$ including the boundaries, while we know that $\mathcal{T}_\lambda$ is monotonic in $\lambda$. Hence, the left hand side of equation \eqref{hetselect} is arbitrarily small and as a result $n^\pm=1$ for the largest admissible $\lambda$. Therefore, close to this maximal value, $\lambda$ has to decrease with $n^-$. Finally, as all the terms determining the sign of $\lambda'(n^-)$ have a fixed sign, we know that there is at most one extremum of $\lambda(n^-)$, which then completes the picture for the negative branch.

\begin{figure}
    \centering
    \includegraphics[width=0.48\textwidth]{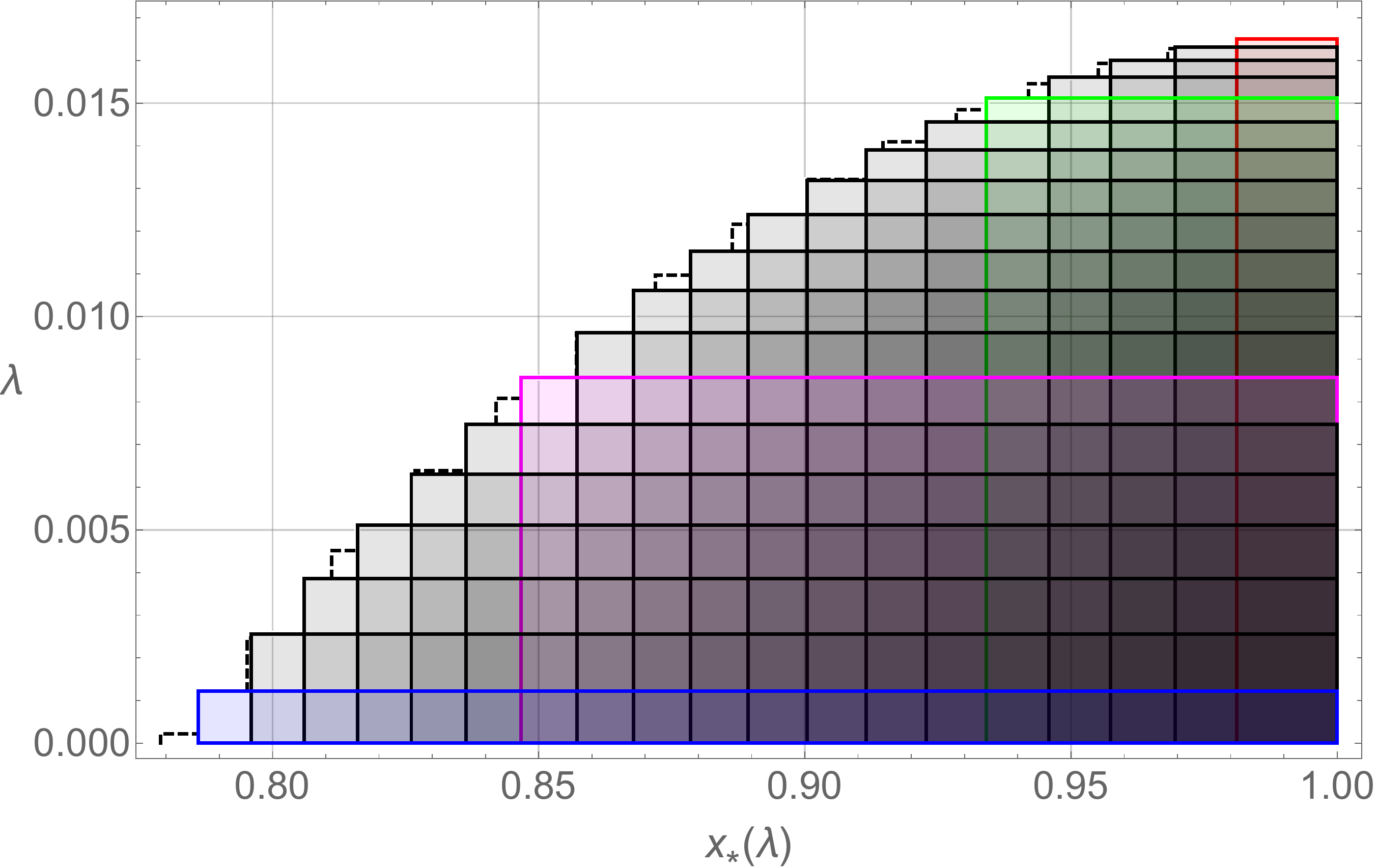}
    \includegraphics[width=0.48\textwidth]{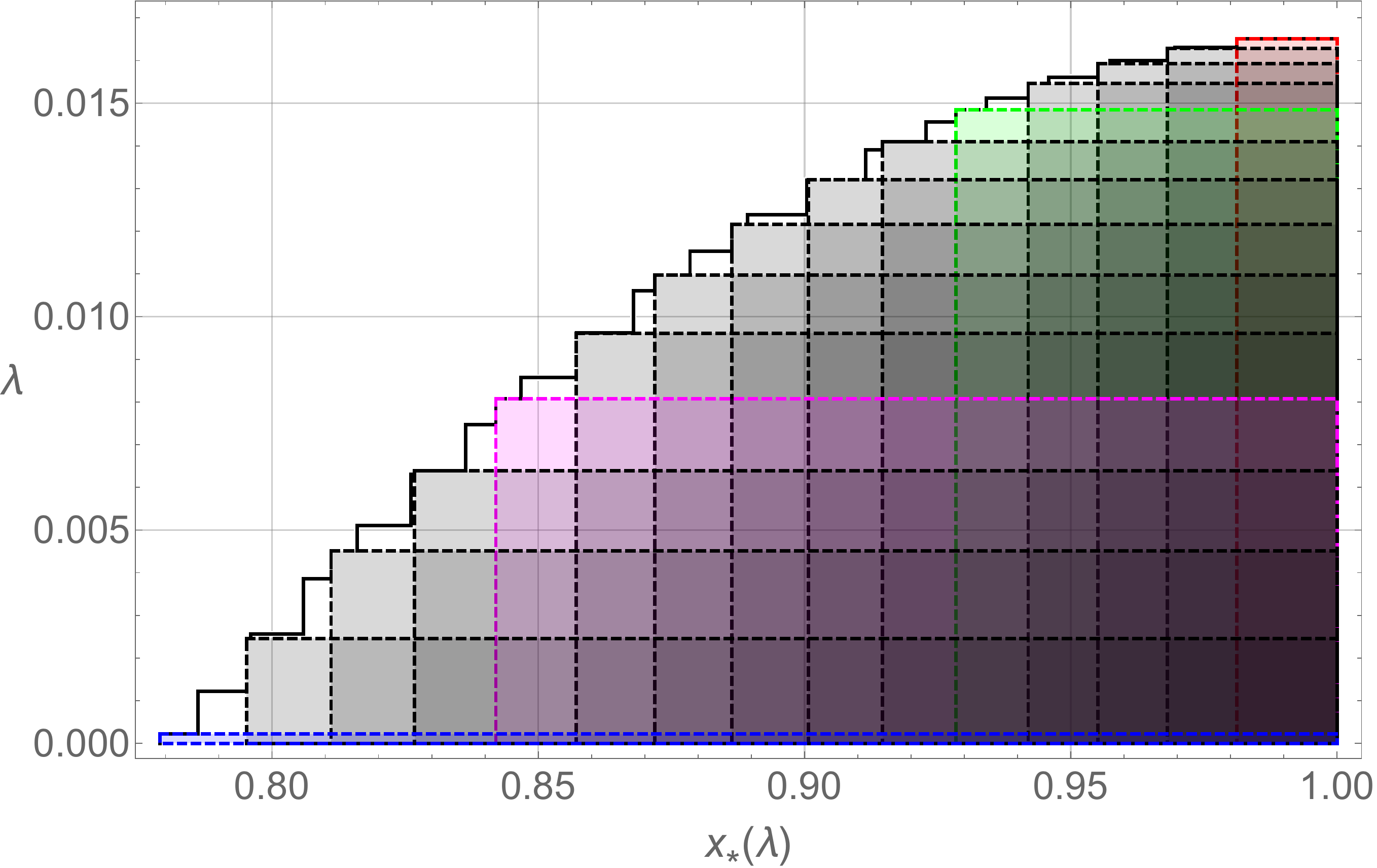}
    
    (a) \hspace{7cm}    (b)
    \caption{The ``structure'' of a patterned state linear analysis showing the intervals where a given WKBJ solution from the positive branch ((a), solid lines) and from the negative branch ((b), dashed lines) will dominate. The discreteness of the steps is highlighted together with the value of growth rate $\lambda$.
      For each mode we plot an opaque rectangle with the horizontal side being the support of the mode,i.e.~the interval $(x_*(\lambda(n^\pm)),1)$, while the vertical side is the value of the growth rate for a given mode $\lambda(n^\pm)$. Note that the highlighted rectangles in colors correspond to the modes depicted in Fig \ref{Fig.Eigenmodes}. Hence, there are subintervals where many modes exist. The envelope of the largest $\lambda$ values then forms then the topmost black line at the boundary, which we conjecture to have a relation to the amplitude envelope of emerging patterned solutions.
    }
    \label{Fig.PrePattern}
\end{figure}

In Fig.~\ref{Fig.PrePattern} we show the support of each discrete mode alongside its corresponding growth rate for both solution branches. We highlight regions corresponding to the four modes of each branch given in Fig.~\ref{Fig.Eigenmodes}. We note in particular that the fastest growing mode in any given spatial region is the mode which is highest in any given region in Fig.~\ref{Fig.PrePattern}, and hence this changes as each subsequent mode becomes permissible (i.e.~moving left to right each new mode has a larger value of $\lambda$). Hence we conjecture that if all modes are approximately excited by the same amount, then the envelope of unstable modes should scale with the fastest growing mode locally, which is qualitatively observed in Fig.~\ref{fig2}. Additionally, in the homogeneous setting, close to a supercritical bifurcation any patterned state should have an amplitude which scales with $\lambda$ raised to a power, and hence this provides an intuition for the final small-amplitude patterns observed in Figs.~\ref{fig1}-\ref{fig1b}, as again the envelope of the oscillations should scale with $\lambda$. However, we do not formally deduce a relationship between the envelope of the final patterned state with $\lambda$, and instead leave this as future work.

 \end{document}